\documentclass[12pt]{article}
\usepackage{amsmath}
\usepackage{amsfonts}
\usepackage{amssymb}
\usepackage{amsthm}
\usepackage{graphicx}
\usepackage{fullpage}
\usepackage{setspace}
\usepackage{verbatim}
\usepackage{natbib}
\usepackage{appendix}
\usepackage{multirow}

\bibliographystyle{econometrica}

\newtheorem{theorem}{Theorem}[section]
\newtheorem{lemma}{Lemma}[section]

\newtheorem{assumption}{Assumption}[section]

\newcommand{\cX}{{\cal X}}
\newcommand{\mX}{\mathbb{X}}

\onehalfspacing

\title{Multiscale Adaptive Inference on Conditional Moment Inequalities
\footnotetext{
We thank Tze Leung Lai, Joe Romano, Han Hong, Don Andrews, Xiaohong Chen
and seminar participants at Yale, UT Austin, Northwestern and
the 2012 Greater New York Metropolitan Econometrics Colloquium for helpful comments and
Bonny Wang for research assistance.
Armstrong acknowledges support from a
fellowship from the endowment in memory of
B.F. Haley and E.S. Shaw through SIEPR.}}

\author{Timothy B. Armstrong  \\
Yale University
\thanks{email: timothy.armstrong@yale.edu}
\and
Hock Peng Chan  \\
National University of Singapore
\thanks{email: stachp@nus.edu.sg}
}

\begin{document}

\maketitle

\begin{abstract}
This paper considers inference for conditional moment inequality models using a multiscale statistic.  
We derive the asymptotic distribution of this test statistic and use the result to propose feasible critical values that have a simple analytic formula,
and to prove the asymptotic validity of a modified bootstrap procedure.
The asymptotic distribution is extreme value, and the proof uses new techniques to overcome several technical obstacles. 
The test
detects local alternatives that approach the identified set at the best 
rate
among available tests
in a broad class of models, and
is adaptive to the smoothness properties of the data generating process.  Our results also have implications for the use of moment selection procedures in this setting.
We provide
a monte carlo study and
an empirical illustration to inference in a regression model with endogenously censored and missing data.
\end{abstract}

\section{Introduction}

This paper considers inference in conditional moment inequality models based on a multiscale test statistic with certain adaptive power properties.
Formally, the model is defined by a vector of inequality restrictions of the form $E(m(W_i,\theta)|X_i)\ge 0$ almost surely, where $m$ is a known parametric function and inequality is taken elementwise.  The set $\Theta_0$ of parameter values that satisfy this set of restrictions is called the identified set, and the goal is to form a test that has good power properties at alternative values of $\theta$ near the boundary of the identified set.  By testing the null $\theta\in\Theta_0$ for each $\theta$, and inverting these tests, one obtains a confidence region that, for each point in the identified set, contains this point with a prespecified probability \citep[see][for a discussion of this and other notions of inference in this setting]{imbens_confidence_2004}.   This class of models includes numerous models used in empirical economics, including selection models, regression models with endogenously missing or censored data, and certain models of firm and consumer behavior (see below for references from the literature).

We derive the asymptotic distribution of our test statistic and show how it can be used to obtain feasible critical values.
These critical values have
the advantage of having a simple analytic formula that can be computed without using simulation.
This is particularly useful in applied settings where computational issues can severely limit the applicability of tests that require resampling or simulation to compute critical values.
We also prove the asymptotic validity of a modified bootstrap procedure, which we consider in an appendix.
While we focus on least favorable critical values, both methods can be used with first stage moment selection procedures.
We provide power results that show that our test detects alternative parameter values that approach the boundary of the identified set at the fastest rate among procedures currently available in the literature.
While the power results in this paper are stated for a single underlying distribution  and sequence of parameter values satisfying certain conditions, these power comparisons can also be shown to hold in a minimax sense (see Appendix \ref{power_comp_sec} for a detailed discussion and references to the literature).
The test is adaptive in the sense that it achieves these rates for data generating processes with a range of smoothness properties without prior knowledge of these smoothness properties.
The test achieves these rates adaptively even without the use of first stage moment selection procedures, and our results show that moment selection procedures have little or no first order effect on power in many settings.
While moment selection procedures will have some effect in finite samples, the results suggest that our test is less sensitive to moment selection than many of the procedures available in the literature.
This is a particularly positive result for researchers who prefer not to use pre-tests because of computational issues, or because of the introduction of arbitrary user driven parameters.

The test statistic we consider presents several technical obstacles in deriving the asymptotic distribution.  Because of the variance weighting, which is needed for our test to have good power properties, the test statistic takes a supremum over a sequence of random processes for which functional central limit theorems do not hold.  While similar technical issues have been solved in other settings using approximations by sequences of gaussian processes \citep[see, for example][]{bickel_global_1973,chernozhukov_intersection_2009}, the multiscale nature of our test statistic (as opposed to test statistics based on kernels with a fixed sequence of bandwidths), makes the rate of approximation too poor for our purposes
(see Appendix \ref{gauss_approx_sec}).
In addition, the test statistic we consider takes the supremum over a process that is nonstationary in ways that the previous literature has not dealt with, so even deriving the asymptotic distribution of the supremum of the approximating gaussian process would require new techniques.

To overcome this, we use methods for tail approximations to nonstationary, nongaussian processes, applying them directly to the process in the sample.  We use methods from \citet{chan_maxima_2006} to derive tail approximations directly using a combination of moderate deviations results and tail equicontinuity conditions, thereby circumventing the need for strong approximations.  We verify these conditions for our test statistic directly, and use these results in the derivation of the extreme value distribution.  While verifying these conditions can be challenging, we anticipate that the techniques introduced here will be useful in other problems in econometrics.

\subsection{Related Literature}\label{lit_subsec}

This paper is related to the literature on partial identification and, in particular, the literature on conditional moment inequalities.  The tests proposed in this paper are most closely related to those studied by \citet{armstrong_weighted_2011},
\citet{armstrong_weighted_2014}
and \citet{chetverikov_adaptive_2012}
(the results in the present paper were developed independently and around the same time as the latter paper).
\citet{armstrong_weighted_2011,armstrong_weighted_2014} considers estimation of the identified set using conservative confidence regions.  While those results could be used for the problem considered here, the methods of proof used in that paper lead to extremely conservative critical values that are too large to be useful in most practical settings.
\citet{chetverikov_adaptive_2012}
uses a different form of a statistic similar to ours (the supremum is taken only over a finite set of bandwidths and points that cannot grow too quickly) and different methods of proof that avoid deriving an asymptotic distribution or even showing that one exists.
From a practical perspective, our method delivers an analytic formula that can be used to compute a critical value that does not require simulation, and also proves the asymptotic validity of modified bootstrap procedures, while the approach taken in \citet{chetverikov_adaptive_2012} only allows for the latter result.
The analytic formula for the critical value also allows for more precise power results, both for the bootstrap and non-bootstrap version of the procedure.
On the other hand, the method in \citet{chetverikov_adaptive_2012} allows for better conditions for moment selection procedures.  (While we do not consider moment selection explicitly, our methods could be extended to this case.  However, the rate at which the set of selected moments can shrink is inherently constrained by our methods.  See Section \ref{mom_sel_sec} for more on moment selection procedures).  The methods used in that paper also give higher order coverage results for the bootstrap procedure (while extensions to our method have been shown to give higher order improvements in other contexts, we do not pursue this in this paper; see Appendix \ref{other_cval_sec}).

Papers proposing other approaches to inference on conditional moment inequalities include
\citet{andrews_inference_2013},
\citet{kim_kyoo_il_set_2008}, \citet{khan_inference_2009},
\citet{chernozhukov_intersection_2013},
\citet{lee_testing_2013},
\citet{ponomareva_inference_2010},
\citet{menzel_estimation_2008} and \citet{armstrong_asymptotically_2011}.
While these approaches are useful in many settings (for example, settings where point identification is likely, or where the researcher has prior knowledge of certain smoothness properties of the data generating process), they do not achieve optimal power adaptively in the generic set identified case considered here.
See Appendix \ref{power_comp_sec} for details and a formal statement.

This paper is also related to the broader literature on partial identification, including the problem of inference on finitely many unconditional moment inequalities.
Articles that consider this problem %
include
\citet{andrews_confidence_2004}, %
\citet{andrews_inference_2008}, %
\citet{andrews_validity_2009}, %
\citet{andrews_inference_2010}, %
\citet{chernozhukov_estimation_2007}, %
\citet{romano_inference_2010}, %
\citet{romano_inference_2008}, %
\citet{hansen_test_2005},
\citet{bugni_bootstrap_2010},
\citet{beresteanu_asymptotic_2008},
\citet{moon_bayesian_2009},
\citet{imbens_confidence_2004}
and \citet{stoye_more_2009}.
In addition, there have been a number of applications of partial identification, including the conditional moment inequality models considered here, going back at least to \citet{manski_nonparametric_1990}.  There are too many references to name all of them here, but papers include
\citet{pakes_moment_2006},
\citet{manski_inference_2002}, and
\citet{ciliberto_market_2009}.

From a technical standpoint, this paper is related to other papers deriving extreme value results for supremum statistics.  The literature goes back at least to \citet{bickel_global_1973}, and includes recent papers such as \citet{chernozhukov_intersection_2009}.  The arguments used in the proof in this paper are substantially different, as they do not use intermediate approximations by gaussian processes.  As discussed in more detail in Section \ref{setup_asym_dist_sec}, the multiscale nature of the test statistic considered here makes the rates in these approximations too poor for our purposes.  Our result also differs in that the test statistic we consider takes a supremum over a process that is nonstationary in ways not considered in the previous literature.  While extreme value results have been derived for nonstationary processes \citep[see, for example,][]{lee_testing_2009}, these results use other aspects of the structure of these problems that do not apply in our case.

The test statistic considered in this paper is related to scan statistics considered in the statistics literature.
This paper is also related to the literature on adaptive inference.  In particular, \citet{dumbgen_multiscale_2001} apply a similar approach to ours in a one dimensional gaussian setting.
This paper contributes to these literatures by deriving extreme value approximations in a setting with a multidimensional, nongaussian, nonstationary process, which requires new techniques for the same reasons described above.
\citet{spokoiny_adaptive_1996} and
\citet{horowitz_adaptive_2001} propose different tests for a related goodness of fit testing problem.
Those authors consider adaptivity with respect to a different class of alternatives than the one in this paper, leading to a different approach.  In particular, \citet{horowitz_adaptive_2001} consider minimax rates with respect to $L_2$ distance in a two-sided testing problem.  Our test is taylored toward the goal of inverting the test to form a confidence region for the parameter $\theta$, and has good power properties when one considers Euclidean distance of alternative parameter values $\theta$ to the identified set $\Theta_0$ (see Appendix \ref{power_comp_sec} for further discussion).

\subsection{Notation and Plan for Paper}

We use the following notation throughout the rest of the paper.  For observations
$\{Z_i\}_{i=1}^n$, the sample mean of a function $g$ is given by
$E_ng(Z_i)\equiv \frac{1}{n}\sum_{i=1}^n g(Z_i)$.  Inequalities are defined for vectors as holding elementwise and, for a vector $x$ and a scalar $b$, we write $x\ge b$ iff. all components of $x$ are greater than equal to $b$.  For vectors $a$ and $b$, $a\wedge b$ is the elementwise minimum, and $a\vee b$ is the elementwise maximum.
We use $a_n\sim b_n$ to denote the statement that $a_n/b_n\to 1$.

The rest of the paper is organized as follows.  Section \ref{setup_asym_dist_sec} describes the setup and gives the main asymptotic distribution result.  Section \ref{inference_sec} derives critical values for the test based on this result.  Section  \ref{power_sec} provides results on the power of the test.
Section \ref{monte_carlo_sec} reports the results of a monte carlo study.
Section \ref{application_sec} reports the results of an illustrative empirical application.
Section \ref{conclusion_sec} concludes.
Appendices to the main text contain proofs of the results in the main text, as well as some additional results mentioned in the main text, including versions of some of the results from the body of the paper that incorporate uniformity in the underlying distribution and a comparison of the power properties of the test with other procedures in the literature.

\section{Setup and Asymptotic Distribution}\label{setup_asym_dist_sec}

We observe iid data $\{X_i,W_i\}_{i=1}^n$ where
$X_i\in\mathbb{R}^{d_X}$ and $W_i\in\mathbb{R}^{d_W}$.  We wish to test the null
hypothesis
\begin{align}\label{null_eq}
E(m(W_i,\theta)|X_i)\ge 0 \,\,\,\,\,\,\,\, \text{a.s.}
\end{align}
where
$m:\mathbb{R}^{d_W}\times \Theta\to\mathbb{R}^{d_Y}$ is a known measurable
function and $\theta\in\Theta\subseteq\mathbb{R}^{d_\theta}$ is a fixed
parameter value.  We use the notation $\bar m(\theta,x)$ to denote a
version of $E(m(W_i,\theta)|X_i=x)$.  Typically, the null (\ref{null_eq}) is tested for each value of $\theta$ in order to obtain a confidence region for parameters that are consistent with the model.  The model may not be point identified, in the sense that there may be more than one value of $\theta$ consistent with (\ref{null_eq}), and the tests in this paper are specifically geared towards this case.  In general, we denote by $\Theta_0$ the identified set of parameter values that are consisent with the restrictions in (\ref{null_eq}):
\begin{align*}
\Theta_0\equiv \left\{\theta\in\Theta|
E(m(W_i,\theta)|X_i)\ge 0  \text{ a.s.}
  \right\}.
\end{align*}
While the above setup considers only a single probability distribution, this is only for notational convenience.  We show in Appendix \ref{uniformity_sec} that our test controls the asymptotic size uniformly over appropriate classes of underlying distributions.

We note that, while the above setup is written in terms of a parametric model $m(W_i,\theta)$, our methods apply more generally to test the inequality $E(Y_i|X_i)\ge 0$ a.s., where $Y_i$ is any random variable satisfying certain regularity conditions below.  The reason we impose this additional structure is that our tests are designed to have good power properties for values of $\theta$ that violate the null, but are near the identified set $\Theta_0$ of parameters that satisfy the null.  Since our goal is to distinguish parameter values in $\Theta_0$ from nearby parameter values outside of $\Theta_0$, we state our power results in terms of sequences of parameter values and the rate at which they approach the boundary of $\Theta_0$ (see Section \ref{power_sec}).  By deriving our results in terms of alternative parameter values rather than abstract notions of distances of data generating processes, we obtain power results that are immediately applicable to assessing the statistical accuracy of confidence regions based on our tests in economic models (see Appendix \ref{power_comp_sec} for further discussion).

Consider the test statistic $T_n=(T_{n,1},\ldots,T_{n,d_Y})$ where
\begin{align*}
T_{n,j}=T_{n,j}(\theta)
  \equiv \left| \inf_{I(s,t)\subseteq \hat{\mathcal{X}}, t\ge t_n}
    \frac{E_nm_j(W_i,\theta)I(s<X_i<s+t)}{\hat\sigma_{n,j}(s,t,\theta)}
     \right|_{-},
\end{align*}
$t_n$ is a sequence of scalars going to zero (the condition $t\ge t_n$ is interpreted as stating that all components of $t$ are greater than or equal to $t_n$), $\hat{\mathcal{X}}$ is the convex hull of $\{X_i\}_{i=1}^n$, $I(s,t)=[s_1,s_1+t_1)\times\cdots\times [s_{d_X},s_{d_X}+t_{d_X})$ and
\begin{align*}
\hat\sigma_{n,j}^2(s,t,\theta)
  \equiv E_n m_j(W_i,\theta)^2I(s<X_i<s+t)-[E_n m_j(W_i,\theta)I(s<X_i<s+t)]^2.
\end{align*} 
We can form a test by rejecting for large values of $S_n=S(T_n)$, where $S:\mathbb{R}^{d_Y}\to\mathbb{R}$ is some function that is nondecreasing in each argument.  For concreteness, we take $S$ to be function that takes the maximum of the components of $T_n$:
\begin{align*}
S_n=S_n(\theta)=\max_{1\le j\le d_Y} T_{n,j}(\theta).
\end{align*}

It is worth commenting on the properties of this test statistic that differ from other statistics for this problem, and how they lead to optimal power properties for set identified models.  We discuss this briefly here, and refer the reader to Appendix \ref{power_comp_sec} and \citet{armstrong_choice_2014} for details.  In testing $E(m(W_i,\theta)|X_i)\ge 0$ a.s., one can use essentially any test statistic that estimates $E(m(W_i,\theta)|X_i)$ and takes some function of this that is large in magnitude when this estimate is negative for some value of $x$.  Most conditional mean estimates can be thought of as using an instrumental variables approach, where the inequality $E(m(W_i,\theta)|X_i)\ge 0$ a.s. is transformed into a set of inequalities $Em(W_i,\theta)g(X_i)\ge 0$ where $g$ ranges over a set $\mathcal{G}_n$ that is infinite or increases with the sample size (e.g., a kernel estimator does this with the functions $g$ given by $h((X_i-x)/h_n)$ where $h_n$ goes to zero at some rate and $x$ ranges over the support of $X_i$) and the inequality may only hold approximately if $g$ is not positive everywhere (e.g. if higher order kernels or sieves are used).  Once a class $\mathcal{G}_n$ is decided on, one faces the decision of how to transform estimates of $Em(W_i,\theta)g(X_i)$ into a statistic that is positive and large in magnitude whenever one of these estimates is negative and large in magnitude.  This includes deciding on how to weight each function $g$, and how to combine them.  For the latter problem, one can take some power of the negative part of the test statistic and add or integrate these over $g$ (a Cramer-von Mises or CvM style approach), or take the maximum or supremum of the negative part (a Kolmogorov-Smirnov or KS approach).  In addition, since the null space is composite, one faces a choice in how to pick the critical value, and, in particular, whether to choose a critical value based on the least favorable distribution in the null space where $E(m(W_i,\theta)|X_i=x)=0$ for all $x$, or whether to use a pre-testing procedure that determines where the equality may hold and uses smaller critical values based on the results of this procedure.

In sum, one faces the decision of (1) which instruments (or kernels or sieves, etc.) to use, (2) how to weight them, (3) how to combine them (integration or summing, or taking the supremum) and (4) how to choose the critical value.  For (1), our test statistic uses a class of product kernels with all possible bandwidths above a cutoff.  Using a class of functions with multiple scales, rather than a kernel function with a single bandwidth, allows the test to find the optimal bandwidth adaptively for a range of smoothness conditions.  For (2), the test statistic $S_n$ weights each function by its standard deviation.  This weighting is essential in allowing the test statistic to find the instrument function that balances bias and variance in an optimal way for detecting a given alternative, and the improvement in power in the set identified case can be thought of as an optimal weighting result for moment inequality models.

For (3) our test statistic uses a supremum (KS) criterion rather than a criterion based on sums or integrals (a CvM criterion).  To understand why a KS approach leads to more power than a CvM approach for the alternatives considered here, it is helpful to consider the relationship between the nonsimilarity of these tests on the boundary of the identified set and power at nearby alternatives.  If a test statistic behaves differently depending on where $\bar m(x,\theta)=0$, then using the most conservative critical value will lead to poor power in cases where nearby parameter values in the null space lead to the inequality binding on a small set.
While moment selection procedures can help alleviate this, they can be computationally costly, and the versions of these procedures proposed in the literature often contain tuning parameters that prevent the critical value from being too small under alternatives of the form considered in this paper (e.g. \citealp{andrews_inference_2013} introduce a tuning parameter that prevents their critical value from shrinking to zero at a faster rate than $\sqrt{n}$ which, as shown by \citealp{armstrong_asymptotically_2011}, leads to a decrease in the rate at which local alternatives can approach the identified set and still be detected).
KS statistics are less sensitive to which moments bind since the supremum of $k$ sample means increases at a $\sqrt{\log k}$ rate, while the sum of the positive part increases at a polynomial rate in $k$.  Thus, by using a KS criterion, our test statistic achieves good power without requiring moment selection procedures, and the power of the test is less sensitive to these procedures, so that the decision (4) has less impact on the power of the test.

We impose the following conditions.

\begin{assumption}\label{asym_dist_assump}
~
\begin{itemize}
\item[a.)]  The distribution of $m(W_i,\theta)$ conditional on $X_i$ satisfies the following conditions.
\begin{itemize}
  \item[i.)] There exists a $\lambda>0$ and a constant $M_\lambda$ such that %
    \begin{align*}
      E(\exp(\lambda |m_j(W_i,\theta)|)|X_i)< M_\lambda \text{ a.s. all $1\le j\le d_Y$}.
    \end{align*}
  
  \item[ii.)] $var(m_j(W_i,\theta)|X_i=x)$ is positive and continuous in $x$ for all $j$.

  \item[iii.)] $corr(m_j(W_i,\theta),m_k(W_i,\theta)|X_i=x)$ is bounded away from $1$ for all $j\ne k$.
  \end{itemize}
\item[b.)]
The support $\mathcal{X}$ of $X_i$ is a compact, convex
Jordan measurable set with strictly positive measure, and $X_i$ has a density $f$ that is bounded away from zero on $\mathcal{X}$.
\item[c.)]
$t_n\to 0$ and $nt_n^{d_X}/|\log t_n|^4\to\infty$.
\end{itemize}
\end{assumption}

Part (a) imposes regularity conditions on the moments of $m(W_i,\theta)$.
It is worth noting that, while we impose some mild smoothness assumptions on the conditional variance, we place no assumptions on the smoothness of the conditional mean.  Thus, while the power of our test depends on the smoothness properties of the conditional mean, our test is robust to very nonsmooth data generating processess.  The convexity assumption in part (b) is imposed to simplify certain parts of the proof, and could be relaxed.
Note that, while part (b) rules out cases where $X_i=(X_{i,1}',X_{i,2}')'$, where $X_{i,1}$ is continuously distributed and $X_{i,2}$ is discretely distributed on some set $\{x_1,\ldots, x_k\}$, this can be accomodated by redefining $X_i$ to be $X_{i,1}$, redefining $W_i$ to be $(W_i',X_{i,2})$, and redefining $m$ to be the $\mathbb{R}^{d_Y\cdot k}$-valued function with $d_Y\cdot (\ell -1)+j$th component given by
$m_j(W_i,\theta)I(X_{i,2}=x_\ell)$.
The condition on $t_n$ in part (c) is, up to the $|\log t_n|$ term, the best possible rate (see Section \ref{gauss_approx_sec}).

The following theorem gives the asymptotic distribution of this test statistic, and provides feasible critical values that can be calculated analytically.  For a version of this theorem that incorporates uniformity in the underlying distribution, we refer the reader to Appendix \ref{uniformity_sec}.

\begin{theorem}\label{asym_dist_thm_randx}
Suppose that the null hypothesis (\ref{null_eq}) and Assumption \ref{asym_dist_assump} hold for $\theta$.
Let $\hat c_n=vol(\hat{\mathcal{X}})/t_n^{d_X}$ and let
$a(\hat c_n)=\left(2n \log \hat c_n\right)^{1/2}$ and
$b(\hat c_n)=2\log \hat c_n+(2d_X-1/2)\log\log \hat c_n-\log (2\sqrt{\pi})$.
Then, for any vector $r\in\mathbb{R}^{d_Y}$,
\begin{align*}
\liminf_n P\left(a(\hat c_n) T_n-b(\hat c_n) \le r \right)
  \ge P(Z\le r) 
\end{align*}
where $Z$ is a $d_Y$ dimensional vector of independent standard type I extreme value random variables.  If, in addition
$\bar m_j(\theta,x)=0$ for all $x\in\mathcal{X}$ and $j=1,\ldots,d_Y$, then 
\begin{align*}
a(\hat c_n) T_n-b(\hat c_n) \stackrel{d}{\to} Z.
\end{align*}
\end{theorem}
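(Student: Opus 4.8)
The plan is to prove the limit coordinate by coordinate and then assemble it: for each fixed $j$ I will show that $a(\hat c_n)T_{n,j}-b(\hat c_n)$ converges in distribution to a standard type I extreme value variable, and then that the $d_Y$ coordinates are asymptotically independent, so that the vector converges to $Z$. The key reframing is that, under the least favorable null $\bar m_j(\theta,x)\equiv 0$, the process
\begin{align*}
\xi_{n,j}(s,t)\equiv \frac{\sqrt{n}\,E_n m_j(W_i,\theta)I(s<X_i<s+t)}{\hat\sigma_{n,j}(s,t,\theta)}
\end{align*}
is, for each fixed $(s,t)$, a self-normalized sum of variables with conditional mean zero given $X_i$, hence marginally asymptotically standard normal; so $\sqrt{n}\,T_{n,j}$ is, up to the reductions below, the positive part of $\sup_{I(s,t)\subseteq\hat{\mathcal X},\,t\ge t_n}(-\xi_{n,j}(s,t))$, a multiscale scan statistic, which is positive with probability tending to one. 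Because the index set is multiscale, approximating this process by a Gaussian one is too lossy (cf.\ Appendix \ref{gauss_approx_sec}); instead I will derive the extreme value limit directly for the sample process following the tail-approximation program of \citet{chan_maxima_2006} --- a uniform moderate deviations estimate combined with a tail-equicontinuity (clumping) argument.

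The first group of steps are reductions. (i) Replace $\hat{\mathcal X}$ and $\hat c_n=vol(\hat{\mathcal X})/t_n^{d_X}$ by population objects: the convex hull of $\{X_i\}$ converges to the (convex) support $\mathcal X$ fast enough that, once logarithms are taken in $a(\cdot)$ and $b(\cdot)$, the substitution costs only $o(1)$ in the relevant quantities. (ii) Show that $\hat\sigma_{n,j}^2(s,t,\theta)$ is ratio-consistent for its population counterpart $\sigma_j^2(s,t)\equiv\int_{I(s,t)}var(m_j(W_i,\theta)\mid X_i=x)f(x)\,dx$ uniformly over all boxes with $t\ge t_n$; this is an empirical process / uniform LLN bound over the VC class of rectangles, and it is precisely here that the rate condition $nt_n^{d_X}/|\log t_n|^4\to\infty$ from Assumption \ref{asym_dist_assump}(c) enters, the $|\log t_n|^4$ factor absorbing the entropy of the box class at the finest admissible scale. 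After (i)--(ii) it suffices to analyze $\sup_{(s,t)}(-\xi_{n,j}(s,t))$ for the centered self-normalized field.

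The heart of the proof is the tail approximation
\begin{align*}
P\Bigl(\sup_{(s,t)}\bigl(-\xi_{n,j}(s,t)\bigr)\le u_n\Bigr)\to \exp(-e^{-u}),\qquad u_n=\bigl(b(\hat c_n)+u\bigr)/\sqrt{2\log\hat c_n},
\end{align*}
carried out in three moves. First, discretize: replace the continuum supremum by a maximum over a sufficiently fine grid $\mathcal G_n$ of boxes, bounding the error by a tail-equicontinuity estimate on the oscillation of $\xi_{n,j}$ between nearby boxes. Second, a moderate deviations step: using the conditional exponential moment bound in Assumption \ref{asym_dist_assump}(a.i), show that $P(-\xi_{n,j}(s,t)>u_n)$ equals $1-\Phi(u_n)$ up to a $(1+o(1))$ factor, uniformly over $\mathcal G_n$; the delicate case is the finest scale, where the effective number of observations $nt_n^{d_X}$ inside a box grows only slowly, so the Gaussian approximation to the self-normalized sum is at the edge of validity --- again the reason for the $|\log t_n|^4$ margin. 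Third, a Poisson / Chen--Stein (clumping) argument in the spirit of \citet{chan_maxima_2006}: the number of boxes in $\mathcal G_n$ exceeding $u_n$ is asymptotically Poisson, and for $u_n$ as above its mean converges to $e^{-u}$. The precise form of $b(\hat c_n)$ --- in particular the coefficient $2d_X-1/2$ on $\log\log\hat c_n$ and the additive constant $\log(2\sqrt{\pi})$ --- is pinned down by this mean-count computation: the finest scale dominates and produces the leading $2\log\hat c_n$, while the range of admissible box shapes and positions together with the local within-clump covariance structure of $\xi_{n,j}$ contribute the polylogarithmic corrections. Finally, asymptotic independence across $j$ follows because Assumption \ref{asym_dist_assump}(a.iii) keeps $corr(m_j(W_i,\theta),m_k(W_i,\theta)\mid X_i=x)$ bounded away from $1$, so joint high exceedances of coordinates $j\neq k$ on overlapping boxes are negligible and the exceedance point processes decouple, yielding the independent Gumbel vector $Z$; the first, $\liminf$, assertion of the theorem then follows from the least favorable case by stochastic domination, since $\bar m_j(\theta,x)\ge 0$ makes each infimum larger, hence $T_{n,j}$ stochastically smaller.

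\textbf{Main obstacle.} I expect the hardest part to be the moderate deviations and clumping analysis for this genuinely nonstationary, non-Gaussian field: the covariance of $\xi_{n,j}(s,t)$ and $\xi_{n,j}(s',t')$ depends on the overlap of the boxes and on $f$ and the conditional variance (no translation invariance to exploit), and the extra scale index $t$ is not handled by the classical scan statistic or extreme value literature. Verifying the tail-equicontinuity and uniform moderate deviation conditions of \citet{chan_maxima_2006} \emph{directly} for the sample process --- rather than importing them from a Gaussian surrogate --- and then extracting the exact normalizing constants, is where the new techniques are required.
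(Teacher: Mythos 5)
Your plan follows essentially the same route as the paper's proof: condition on $\{X_i\}$, reduce to a population-normalized field via a ratio-consistency lemma for $\hat\sigma_{n,j}$, and then derive the extreme-value limit directly for the self-normalized sample process using \citet{chan_maxima_2006} --- verifying the moderate-deviations condition via a Feller-type refinement with the exponential moment bound, verifying the tail-equicontinuity conditions (A3)--(A5), partitioning the domain into $|\log t_n|$-cubes and taking a Poisson limit of the exceedance count, and obtaining asymptotic independence across $j$ from the correlation bound by controlling joint exceedances. The $\liminf$ assertion under $\bar m_j(\theta,x)\ge 0$ is handled exactly as you propose, by stochastic domination by the least-favorable case $\bar m_j\equiv 0$.
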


\section{Inference}\label{inference_sec}

An immediate consequence of Theorem \ref{asym_dist_thm_randx} is a method for choosing feasible critical values for the test statistic $S_n(\theta)$ that can be computed analytically.  By Theorem \ref{asym_dist_thm_randx}, $a(\hat c_n)S_n-b(\hat c_n)$ is asymptotically bounded by a random variable that is the maximum of $d_Y$ standard type I extreme value random variables.  By the properties of extreme value random variables, this distribution is itself type I extreme value, with cdf $\exp(-d_Y\exp(-r))$.  Some calculation leads to the rejection rule
\begin{align}\label{cval_eq}
\text{reject if } S_n(\theta)>\hat q_{1-\alpha} \,\,\,\text{ where }\,
\hat q_{1-\alpha}
\equiv\frac{\log(d_Y)-\log(-\log(1-\alpha))+b(\hat c_n)}{a(\hat c_n)}.
\end{align}
It follows from Theorem \ref{asym_dist_thm_randx} that this test is asymptotically level $\alpha$.
We record this result in the following theorem.

\begin{theorem}\label{test_size_thm}
Suppose that the null hypothesis (\ref{null_eq}) holds for $\theta$ and that Assumption \ref{asym_dist_assump} holds.  Let
$\hat q_{1-\alpha}$ be as defined in (\ref{cval_eq}).
Then
\begin{align*}
\limsup_n P\left(S_n(\theta)>\hat q_{1-\alpha}\right)
  \le \alpha.
\end{align*}
If, in addition, $\bar m_j(\theta,x)=0$ for all $x\in\mathcal{X}$ and $j=1,\ldots,d_Y$, then
\begin{align*}
P\left(S_n(\theta)>\hat q_{1-\alpha}\right)
  \to \alpha.
\end{align*}
\end{theorem}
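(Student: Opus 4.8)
The plan is to obtain this theorem as a direct corollary of Theorem \ref{asym_dist_thm_randx}, using only a change of variables in the rejection event together with the elementary fact that the maximum of independent standard type I extreme value variables is again (a location shift of) a type I extreme value variable. First I would rewrite the rejection event in terms of the normalized statistic. Since $S_n(\theta)=\max_{1\le j\le d_Y}T_{n,j}(\theta)$, the event $\{S_n(\theta)>\hat q_{1-\alpha}\}$ coincides with $\{a(\hat c_n)S_n(\theta)-b(\hat c_n)>q\}$, where by the definition of $\hat q_{1-\alpha}$ in (\ref{cval_eq}) the quantity $q\equiv a(\hat c_n)\hat q_{1-\alpha}-b(\hat c_n)=\log(d_Y)-\log(-\log(1-\alpha))$ is a constant that does not depend on $n$. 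Setting $r=(q,\ldots,q)\in\mathbb{R}^{d_Y}$, the event $\{S_n(\theta)>\hat q_{1-\alpha}\}$ is exactly the complement of $\{a(\hat c_n)T_n-b(\hat c_n)\le r\}$, so that $P(S_n(\theta)>\hat q_{1-\alpha})=1-P(a(\hat c_n)T_n-b(\hat c_n)\le r)$.

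Next I would apply the first conclusion of Theorem \ref{asym_dist_thm_randx} with this vector $r$ to obtain $\liminf_n P(a(\hat c_n)T_n-b(\hat c_n)\le r)\ge P(Z\le r)$. Because the components of $Z$ are independent standard type I extreme value, $P(Z\le r)=\big(\exp(-\exp(-q))\big)^{d_Y}=\exp(-d_Y\exp(-q))$, and substituting $\exp(-q)=(-\log(1-\alpha))/d_Y$ gives $P(Z\le r)=\exp(\log(1-\alpha))=1-\alpha$. Combining this with the displayed identity for $P(S_n(\theta)>\hat q_{1-\alpha})$ yields $\limsup_n P(S_n(\theta)>\hat q_{1-\alpha})\le\alpha$, which is the first claim.

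For the second claim, under the additional assumption that $\bar m_j(\theta,x)=0$ for all $x\in\mathcal{X}$ and all $j$, the second conclusion of Theorem \ref{asym_dist_thm_randx} strengthens the statement to $a(\hat c_n)T_n-b(\hat c_n)\stackrel{d}{\to}Z$; since $Z$ has a continuous distribution function, $r$ is a continuity point of its law, so $P(a(\hat c_n)T_n-b(\hat c_n)\le r)\to P(Z\le r)=1-\alpha$, and hence $P(S_n(\theta)>\hat q_{1-\alpha})\to\alpha$. I do not expect any genuine obstacle here, since all the difficulty has been absorbed into Theorem \ref{asym_dist_thm_randx}; the only points deserving a word of care are that $q$ is truly constant in $n$ (so that the distributional limit applies at the fixed point $r$) and that the maximum-of-extreme-values identity and the cdf computation are carried out correctly.
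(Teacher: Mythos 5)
Your proof is correct and follows exactly the route the paper takes: the result is stated as an immediate corollary of Theorem \ref{asym_dist_thm_randx}, using the fact that $a(\hat c_n)\hat q_{1-\alpha}-b(\hat c_n)$ is the deterministic constant $q=\log(d_Y)-\log(-\log(1-\alpha))$ and that the cdf of $\max_j Z_j$ is $\exp(-d_Y\exp(-\cdot))$, so that $P(Z\le r)=1-\alpha$. Your write-up simply makes explicit the ``some calculation'' that the paper leaves implicit; there is no substantive difference.
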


While the critical value given in (\ref{cval_eq}) gives a valid asymptotically level $\alpha$ test, this critical value is based on extreme value approximations that may perform poorly in finite samples in certain situations.
While our monte carlos suggest that the analytic critical values perform well in many cases encountered in practice, we
consider other methods, including a bootstrap or simulation based approach, in Appendix \ref{other_cval_sec}.
While our results in Appendix \ref{other_cval_sec} do not give a formal result showing an improvement in coverage accuracy, similar methods have been shown to lead to higher order improvements in the coverage accuracy in other settings (see Appendix \ref{other_cval_sec} for details and references to the literature).

\subsection{Moment Selection Procedures and the Choice of $t_n$}\label{mom_sel_sec}

The rejection probabilities of the tests defined above will converge to $\alpha$ when the conditional mean $\bar m_j(\theta,x)$ is equal to zero for all $x\in\mathcal{X}$ for all $j$.  %
If these inequalities only bind on a subset of $\mathcal{X}$, the rejection probability will be strictly less than $\alpha$, and it would seem that there would be the potential for large power improvements at nearby alternatives by using a smaller critical value that take this into account.  Perhaps surprisingly,
it turns out that there will be no first order power improvement from doing this in cases where the subset on which the conditional moments bind has positive probability
\citep[in certain cases where the binding subset has zero probability, our test loses a $\log n$ term in the rate at which local alternatives can approach the identified set and be detected, while certain other approaches lose a polynomial term; see][]{armstrong_asymptotically_2011,armstrong_weighted_2014}.
While this result should certainly not be taken to mean that the effect on power will be always be negligible in finite samples, the result suggests that our procedure will be less sensitive to moment selection than other procedures in the literature for which moment selection has a large effect on power asymptotically \citep[see, for example][]{armstrong_asymptotically_2011,andrews_inference_2013}.

To see why this holds, first, note that, it can be shown that,
if for some set $\tilde{\mathcal{X}}$,
$\bar m_j(\theta,x)>0$ for all $x\not\in \tilde{\mathcal{X}}$ and $j=1,\ldots,d_Y$,
the first display of
Theorem \ref{asym_dist_thm_randx} will hold with $\tilde{\mathcal{X}}$ replacing $\mathcal{X}$.  Thus, if we use prior knowledge of such a set $\tilde{\mathcal{X}}$ with strictly positive volume, or find such a set with a first stage test, we would obtain a critical value $\hat q_{1-\alpha}$ with $\mathcal{X}$ replaced by $\tilde{\mathcal{X}}$.  But note that, regardless of $\tilde{\mathcal{X}}$,
we have, letting $\tilde q_{1-\alpha}$ be the critical value formed with
$vol(\tilde{\mathcal{X}})$ in place of
$vol(\hat{\mathcal{X}})$
\begin{align*}
\frac{\tilde q_{1-\alpha}}{
(2\log t_n^{-d_X})^{1/2}/n^{1/2}}
\to 1.
\end{align*}
Thus, even with prior knowledge of the contact set, the contact set would have only a second order effect on the critical value.

The above calculations can also be used to understand the effect of the choice of the minimal window width $t_n$ on the power of the test.  Suppose that $t_n$ is chosen proportional to $n^{-\delta}$ for some $0<\delta<1$.  Then, by the same calculations, we will have
\begin{align*}
\frac{\hat q_{1-\alpha}}{(2 d_X \delta\log n)^{1/2}/n^{1/2}}
\stackrel{p}{\to} 1.
\end{align*}
As shown in Section \ref{power_sec},
larger values of $\delta$ are required to obtain optimal power properties for less smooth conditional means.  While choosing a larger value of $\delta$ does not affect the rate at which local alternatives can approach the null space and be detected (the test is adaptive with $t_n$ decreasing as quickly as allowed), it does have a non negligible effect on power through larger critical values.  If $t_n$ is chosen as $n^{-\delta_2}$ for some value $\delta_2$ instead of some other value $\delta_1$ where $\delta_1>\delta_2$, the critical value will increase by a factor of $(\delta_1/\delta_2)^{1/2}$.

It can also be shown that,
letting $\bar q_{1-\alpha}$ be the critical value for a test that only takes the infimum over all $s$ with $t$ fixed at $t_n$ (i.e. a test that uses the kernel approach considered in \citet{chernozhukov_intersection_2013} and \citet{ponomareva_inference_2010}), we have $\hat q_{1-\alpha}/\bar q_{1-\alpha}\to 1$.
Thus, considering larger bandwidths only has a second order effect on the critical value when using the multiscale approach in this paper.
By examining the critical values derived in \citet{chernozhukov_intersection_2013}, it can also be seen that the discussion above regarding the effect of ``moment selection'' on the critical value applies to these tests as well: the critical value converges to the same constant at the same scaling regardless of the set $\tilde{\mathcal{X}}$ so long as it has positive measure.

\section{Local Power}\label{power_sec}

This section derives asymptotic approximations to power functions by
considering the power of these tests under sequences of alternative
parameter values that approach the boundary of the identified set.
While we consider a single underlying distribution $P$ and sequence of local alternatives, \citet{armstrong_weighted_2014} shows that the test has power approaching one uniformly over certain classes of underlying distributions and parameters that are bounded away from the identified set by a sequence that approaches zero at the same rate \citep[technically, the results in][apply to a slightly different test where the truncation is done in a different way, but the results can be shown to apply to the version of the test considered here as well]{armstrong_weighted_2014}.  \citet{armstrong_choice_2014} shows that, under additional regularity conditions, several other tests considered in the literature perform strictly worse under the alternatives considered in this section, and in the uniform sense described above.
Appendix \ref{power_comp_sec} gives a more detailed description of these results and power comparisons with other tests in the literature.

Consider a parameter value $\theta_0$ on the boundary of the
identified set, and a sequence of local alternatives given by
$\theta_n=\theta_0+a r_n$ for some vector $a\in\mathbb{R}^{d_\theta}$
and some sequence of scalars $r_n\to 0$.
We impose the following conditions
\citep[see][for verification in several examples of a set of conditions that imply Assumption \ref{local_alt_assump}]{armstrong_weighted_2014}.

\begin{assumption}\label{local_alt_assump}
~
\begin{itemize}
\item[a.)] $\bar m(\theta,x)$ is differentiable in $\theta$ with derivative $\bar m_{\theta}(\theta,x)$ that is continuous as a function of $\theta$ uniformly in $(\theta,x)$.
\item[b.)] For some $\gamma$, $C$, $j$ and $x_0\in\mathcal{X}$, we
  have $\bar m_j(\theta_0,x_0)=0$ and, for all $x$ in a neighborhood
  of $x_0$,
\begin{align*}
|\bar m_j(\theta_0,x)-\bar m_j(\theta_0,x_0)|\le C\|x-x_0\|^\gamma.
\end{align*}
\end{itemize}
\end{assumption}

Part (b) of Assumption \ref{local_alt_assump} is a smoothness
condition on the conditional mean under $\theta_0$.  If $\bar
m_j(\theta_0,x_0)=0$ for some $x_0$, part (b) will hold with $\gamma=1$
if $\bar m_j(\theta_0,x)$ has a continuous first derivative in $x$, and
it will hold with $\gamma=2$ if $\bar m_j(\theta_0,x_0)=0$ has a
continuous second derivative in $x$ and $x_0$ is on the interior of
$\mathcal{X}$.

The following theorem gives local power results for sequences of local
alternatives.  To state the results, let $C(\cdot)$ be any bounded function on
the unit sphere such that Assumption \ref{local_alt_assump} holds with
$C$ replaced by $C((x-x_0)/\|x-x_0\|)$.  We can always take this
function to be a constant function under Assumption
\ref{local_alt_assump}, but, using this notation, we can state power
results that are more precise.

\begin{theorem}\label{local_alt_thm}
Suppose that %
Assumption \ref{local_alt_assump} holds for $\theta_0$
and that Assumption \ref{asym_dist_assump} holds with the constants in part (a) uniform over a neighborhood of $\theta_0$.
Let $\theta_n=\theta_0+a r_n$ for some $a\in\mathbb{R}^{d_\theta}$ and
a sequence of scalars $r_n\to 0$.
Suppose that, for some index $j$ such that part (b) of Assumption \ref{local_alt_assump} holds for $j$,
\begin{align*}
&\liminf r_n
\left(\frac{n}{2\log t_n^{-d_X}}\right)^{\gamma/(d_X+2\gamma)}  \\
&  > %
-\left\{\inf_{s,t} \frac{f(x_0)^{1/2}\int_{u\in\mathcal{U}, s<u<s+t}
  \left\{[\bar m_{\theta,j}(\theta_0,x_0) a]
  +C\left(\frac{u}{\|u\|}\right)\|u\|^\gamma\right\}\, du}
{\Sigma_{jj}^{1/2}(x_0)\text{vol}\{u\in\mathcal{U}|
  s<u<s+t\}^{1/2}}\right\}^{-\gamma/(d_X/2+\gamma)}
\end{align*}
where $\mathcal{U}=\cup_{k=1}^\infty (\mathcal{X}-x_0)/r_k$,
$\Sigma_{jj}(x)=var(m_j(W_i,\theta)|X_i=x)$
and the right hand side is taken be infinity if the infimum in the
brackets is zero.
Then, if $t_n<\eta (n/\log
n)^{-1/(d_X+2\gamma)}$ for small enough $\eta$, we will have
\begin{align*}
P(S_n(\theta_n)>\hat q_{1-\alpha})\to 1.
\end{align*}
\end{theorem}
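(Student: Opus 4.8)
The plan is to establish a lower bound on the test statistic $S_n(\theta_n) \ge T_{n,j}(\theta_n)$ that grows faster than the critical value $\hat q_{1-\alpha}$, under the stated rate condition on $r_n$. Since $T_{n,j}$ is the negative part of an infimum over windows $I(s,t)$, it suffices to exhibit a single (possibly data-dependent, but controllable) choice of window $\hat I_n = I(\hat s_n, \hat t_n)$ for which
\begin{align*}
- E_n \frac{m_j(W_i,\theta_n) I(\hat s_n < X_i < \hat s_n + \hat t_n)}{\hat\sigma_{n,j}(\hat s_n,\hat t_n,\theta_n)}
\end{align*}
is large with probability approaching one. First I would center the numerator: write $E_n m_j(W_i,\theta_n) I(X_i \in \hat I_n) = E_n [m_j(W_i,\theta_n) - \bar m_j(\theta_n, X_i)] I(X_i \in \hat I_n) + E_n \bar m_j(\theta_n, X_i) I(X_i \in \hat I_n)$. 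The first term is a mean-zero sample average whose standard deviation, after dividing by $\hat\sigma_{n,j}$, is of order $(n \cdot P(X_i \in \hat I_n))^{-1/2} \asymp (n \hat t_n^{d_X} f(x_0))^{-1/2}$; the second term is the ``signal.'' Using Assumption \ref{local_alt_assump}(a) to linearize $\bar m_j(\theta_n,x) \approx \bar m_j(\theta_0,x) + \bar m_{\theta,j}(\theta_0,x_0) a \, r_n$ and Assumption \ref{local_alt_assump}(b) to bound $\bar m_j(\theta_0,x) - \bar m_j(\theta_0,x_0) = \bar m_j(\theta_0,x)$ near $x_0$ (recall $\bar m_j(\theta_0,x_0)=0$), the signal term over a window $\hat I_n = x_0 + r_n(s,t)$ (rescaled by the natural local scale $r_n$) becomes $r_n$ times an integral of the form appearing on the right-hand side of the theorem, divided by $\Sigma_{jj}^{1/2}(x_0)$ and the square root of the window volume, up to $o(1)$ terms and the $f(x_0)^{1/2}$ Jacobian factor.

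The heart of the argument is the bandwidth balancing. Choosing the window scale to be the sequence $h_n \asymp (2\log t_n^{-d_X}/n)^{1/(d_X+2\gamma)}$ — which is feasible precisely because of the hypothesis $t_n < \eta (n/\log n)^{-1/(d_X+2\gamma)}$, so that $h_n \ge t_n$ for small $\eta$ and the window is admissible in the infimum — balances the signal term (order $r_n h_n^{\gamma}$, since the integrand's increment over a window of radius $h_n$ is $O(h_n^\gamma)$ once we are in the regime where the $\|u\|^\gamma$ curvature term dominates the linear-in-$r_n$ drift term, or more precisely both are of the same order after the optimal rescaling) against the noise term (order $(n h_n^{d_X})^{-1/2}$, suitably inflated by the $\sqrt{\log}$ cost of scanning). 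Comparing this balanced magnitude with $\hat q_{1-\alpha} \sim (2\log t_n^{-d_X})^{1/2} / n^{1/2}$ from Section \ref{mom_sel_sec}, the ratio of signal to critical value is, up to constants, $r_n (n/(2\log t_n^{-d_X}))^{\gamma/(d_X+2\gamma)}$ times the bracketed infimum expression; the rate condition in the theorem is exactly what forces this ratio to exceed one in the limit. I would take the infimum over $(s,t)$ inside the bracket because we get to choose the best window, and pass to the set $\mathcal{U} = \cup_k (\mathcal{X}-x_0)/r_k$ to handle the boundary case where $x_0 \in \partial\mathcal{X}$ and the rescaled support does not fill all of $\mathbb{R}^{d_X}$.

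The main obstacle I anticipate is controlling the stochastic fluctuations \emph{uniformly} enough over the relevant windows, and in particular showing $\hat\sigma_{n,j}(\hat s_n, \hat t_n, \theta_n) \to \Sigma_{jj}(x_0)^{1/2}$ in probability (up to the local rescaling) so that the normalization behaves as claimed; this requires a uniform law of large numbers for $E_n m_j(W_i,\theta_n)^2 I(X_i \in I)$ over the class of small rectangles, which follows from the sub-exponential tail bound in Assumption \ref{asym_dist_assump}(a)(i), continuity of the conditional variance in Assumption \ref{asym_dist_assump}(a)(ii), and a standard bracketing/VC argument for indicator classes of rectangles. A secondary subtlety is that $\hat{\mathcal{X}}$, the convex hull of the sample, differs from $\mathcal{X}$, but this discrepancy is $O_p((\log n / n)^{1/d_X})$ near the boundary and is negligible at the window scale $h_n$; alternatively one picks $x_0$ and the balancing window so that $\hat I_n \subseteq \hat{\mathcal{X}}$ eventually with probability one. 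Once the signal dominates the noise plus the critical value by a factor bounded away from one from below, a Markov/Bernstein-type concentration inequality for the single-window average (not a maximal inequality — we only need one window) gives $P(T_{n,j}(\theta_n) > \hat q_{1-\alpha}) \to 1$, hence $P(S_n(\theta_n) > \hat q_{1-\alpha}) \to 1$.
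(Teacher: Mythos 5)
Your proposal is essentially the approach the paper takes: lower-bound $S_n$ by the statistic at a single well-chosen local window around $x_0$, linearize $\bar m_j(\theta_n,\cdot)$ using parts (a) and (b) of Assumption \ref{local_alt_assump}, convert the bias over the window into a rescaled integral via a change of variables, show $\hat\sigma_{n,j}$ localizes to $\Sigma_{jj}^{1/2}(x_0)$, and then check that the (negative) signal exceeds the noise plus $\hat q_{1-\alpha}\sim (2\log t_n^{-d_X})^{1/2}/n^{1/2}$, which produces exactly the stated $\liminf$ condition on $r_n$.

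One small wrinkle worth flagging: in the middle of your sketch you write $\hat I_n = x_0 + r_n(s,t)$, i.e.\ a window of scale $r_n$, but you then switch to $h_n \asymp (2\log t_n^{-d_X}/n)^{1/(d_X+2\gamma)}$. The scale $r_n$ is only correct when $\gamma = 1$; in general the curvature term $C\|x-x_0\|^\gamma$ matches the drift $r_n[\bar m_{\theta,j}a]$ at radius $r_n^{1/\gamma}$, not $r_n$. The paper sets $h = r_n^{1/\gamma}$ exactly for this reason, making both the drift and curvature contributions $O(1)$ after rescaling, so the signal factors cleanly as $r_n^{(d_X/2+\gamma)/\gamma}\lambda(s,t,\mathcal{U},0)$. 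Your $h_n$-choice coincides with $r_n^{1/\gamma}$ under the theorem's threshold rate (so it gives the right answer), but it is less transparent: it does not expose why both terms in the integrand are of the same order, nor why the domain of integration converges to $\mathcal{U}$ (which requires the window scale to be pinned to $r_n^{1/\gamma}$, and convexity of $\mathcal{X}$ to get monotonicity of the rescaled sets). If you write the proof out, I'd recommend fixing the scale as $r_n^{1/\gamma}$ from the start; with that substitution the remaining steps you list (single-window concentration, convergence of $\hat\sigma$, handling $\hat{\mathcal X}$ versus $\mathcal X$) match the paper's argument.
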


Theorem \ref{local_alt_thm} states that, if $r_n$ is given by some constant $K$ times $[2(\log t_n^{-d_X})/n]^{\gamma/(d_X+2\gamma)}$, then the power of the test will converge to one so long as $K$ is strictly greater than the right hand side of the first display in the theorem.
If $\bar m_{\theta_j}(\theta_0,x_0) a$ is strictly negative, which
will typically be the case as long as $\theta_n$ is outside of the
identified set, then this result shows that the power of the test
approaches one as long as $\theta_n$ approaches $\theta_0$ at a
$(n/\log n)^{\gamma/(d_X+2\gamma)}$ rate with a large enough scaling.
This corresponds to the fastest rate
among available procedures even if
$\gamma$ were known, and corresponds to the optimal rate for certain related nonparametric testing problems
(see Appendix \ref{power_comp_sec} for details).
Theorem \ref{local_alt_thm} shows that our test is adaptive in the
sense that it achieves this rate simultaneously for all $\gamma$
without prior knowledge of $\gamma$.  Taking $t_n$ to be a $\log n$ term times $n^{-1/d_X}$, the condition that $t_n<\eta(n/\log n)^{-1/(d_X+2\gamma)}$ will be satisfied regardless of $\gamma$.  Another possibility is to take the smallest value of $\gamma$ that the researcher thinks is likely, and to choose a value of $t_n$ that is optimal for a particular data generating process and sequence of alternatives with that value of $\gamma$.  Theorem \ref{local_alt_thm} shows that this approach will achieve the optimal rate even if $\gamma$ is larger than the value used to choose $t_n$.

Note also that the conditions on $r_n$ depend only on $t_n$ and $\gamma$ and not on the volume of the set $\mathcal{X}$ or the number of moments $d_Y$, even though both of these quantities enter the critical value $\hat q_{1-\alpha}$.  This comes from the fact that $\hat q_{1-\alpha}/[(2\log t_n^{-d_X})^{1/2}/n^{1/2}]\to 1$ regardless of $\mathcal{X}$ and $d_Y$, as discussed in Section \ref{mom_sel_sec}.  Thus, if Theorem \ref{local_alt_thm} applies to show consistency against some sequence of alternatives given by $\theta_0+ar_n$ with a given set of moment functions $m_j(W_i,\theta)$ for $j=1,\ldots,d_Y$, it will also apply if one adds additional moment functions $m_k(W_i,\theta)$ for $k=d_Y+1,\ldots, d_Y+\ell$ for some $\ell>0$, even if these moment functions do not contribute to power.  We further explore the issue of irrelevant moments in one of our monte carlo designs in Section \ref{monte_carlo_sec}.

\section{Monte Carlo}\label{monte_carlo_sec}

We perform monte carlos with several designs based on a median regression model with potentially endogenously missing data.  We consider a missing data model where the conditional median of $W_i^*$ given $X_i$ is given by $q_{1/2}(W_i^*|X_i)=\theta_1+\theta_2 X_i$, and $W_i^*$ is missing for some observations.  Letting $W_i^H=W_i^*$ when $W_i^*$ is observed and $\infty$ otherwise, this leads to the conditional moment inequality $E[I(\theta_1+\theta_2 X_i\le W_i^H)-1/2|X_i]\ge 0$ a.s. (In practice, one would form another inequality based on a lower bound for $W_i^*$ of $-\infty$ when $W_i^*$ is not observed, but we focus on a single moment inequality here for simplicity.  We explore the consequences of including other moments later in this section.)

In each design, we simulate the data from a median
regression given by
$W_i^*=\theta_1^*+\theta_2^* X_i+u$
for some $(\theta_1^*,\theta_2^*)$ where $u\sim \text{unif}(-1,1)$ and $X_i\sim \text{unif}(0,1)$.  We then set $W_i^*$ to be missing with probability
$p(X_i)$ independently of $W_i^*$ for some function $p(x)$ (note that,
while we generate the data using a parameter value that satisfies
missingness at random, the test is designed to give confidence regions
that are robust to the failure of this assumption).  We consider 3 designs with $\theta_1^*=\theta_2^*=0$ and $p(x)$ given as follows for each design:
\begin{align*}
\begin{array}{ll}
\text{Design 1:} & p(x)=.1  \\
\text{Design 2:} & p(x)=.02+2\cdot .98\cdot |x-.5|  \\
\text{Design 3:} & p(x)=.02+4\cdot .98\cdot (x-.5)^2.
\end{array}
\end{align*}
Design 1 corresponds to a flat conditional mean, while Designs 2 and 3 correspond to $\gamma=1$ and $\gamma=2$ respectively in Assumption \ref{local_alt_assump}.  For each design, we consider the sample sizes $n=100,500,1000$ and the truncation parameters $t_n=n^{-1/5},n^{-1/3},n^{-1/2}$ for each sample size.  Note that $n^{-1/3}$ is the optimal rate for $t_n$ for Design 2 and $n^{-1/5}$ is the optimal rate for $t_n$ for Design 3, while $t_n=n^{-1/2}$ is smaller than optimal for all three designs, but still achieves the optimal rate for local alternatives by Theorem \ref{local_alt_thm}.

For each design, we test several parameter values with $\theta_2$ fixed at $0$ and $\theta_1$ varying.  For a given design, let $\overline \theta_1$ be the largest value of $\theta_1$ such that $(\theta_1,0)$ is in the identified set.  First, to examine the finite sample size of the test based directly on the asymptotic distribution, we report monte carlo estimates of the false rejection probability under $(\overline\theta_1,0)$ and Design 1, which corresponds to a least favorable null distribution with the conditional moment inequality equal to zero for all $x$.  This gives an idea of the worst (most liberal) size distortions one can expect from tests based on critical values calculated directly from the asymptotic distribution (at least, in situations similar to the median regressions with potentially endogenously missing or censored data considered here).

Table \ref{false_rejection_prob_table} reports these results.  We note that size distortions are generally minimal, except for the smaller sample sizes with the largest value of $t_n$, particularly with nominal size $\alpha=.1$.  As one might expect from the methods used in the derivation of the asymptotic distribution, which rely on tail approximations, the asymptotic approximation performs better for the smaller value of the nominal size $\alpha$.  The fact that size distortions are more severe with the larger $t_n=n^{-1/5}$ is likely a reflection of the fact that, for a fixed nominal size $\alpha$, the asymptotic approximations depend on $t_n$ being small relative to the support of $X_i$.  In contrast, size distortions are minimal for $t_n=n^{-1/3}$ for most cases considered here.

Next, we examine the power of our test.  We report monte carlo estimates of the power of our test for each design and parameters given by $(\overline\theta_1+a,0)$ for $a=.1,.2,.3,.4,.5$.  To ensure that power is not driven by false rejection under the null, we use critical values based on monte carlo estimates of the finite sample exact least favorable distribution.  We report power results for level $.05$ tests.  Tables \ref{power_table_d1}, \ref{power_table_d2} and \ref{power_table_d3} report the results.  As expected, moving away from the identified set by a given amount generally leads to more power under the designs with smoother conditional means.  In addition, the finding that the choice of the truncation parameter $t_n$ doesn't matter much as long as it is small enough appears to be borne out in the monte carlos (e.g. for Design 2, $t_n$ proportional to $n^{-1/3}$ is optimal, and this value of $t_n$ performs best, but choosing $t_n=n^{-1/2}$ gives close to the same power, while $t_n=n^{-1/5}$ gives much worse power).

Tables \ref{power_table_d1}, \ref{power_table_d2} and \ref{power_table_d3}
also include
the monte carlo rejection probability at the null value where $a=0$ for each design.  For Design 1, this corresponds to the least favorable null, so that the rejection probability is equal to size (up to ties in the monte carlo draws).  For Designs 2 and 3, the rejection probability at the boundary of the null ($a=0$) is strictly less than the size.  This is expected as well, since the conditional moment inequality is not binding for all values of $x$.

As discussed in Section \ref{mom_sel_sec}, the critical value will be conservative if one has prior knowledge of values of $x$ and $j$ such that the conditional mean $\bar m_j(\theta,x)$ is positive and large enough in magnitude so that these values of $x$ and $j$ do not affect the sampling distribution of the test statistic.  To examine this issue using monte carlo data, we consider additional designs where, in addition to observing $X_i$ and $W_i^H$ following one of the data generating processes described above, we observe an additional variable $\tilde W_i^H$, which is known to satisfy $E[I(\theta_1+\theta_2 X_i\le \tilde W_i^H)-1/2|X_i]\ge 0$ a.s., but where $\tilde W_i^H$ is large enough in magnitude that this inequality does not affect the sampling distribution of the test statistic under the parameter values considered in the monte carlos.\footnote{We thank an anonymous referee for suggesting these additional monte carlo designs}

In particular, we assume that $\tilde W_i^H$ is independent of $X_i$ and $W_i^H$
with $\tilde W_i^H\ge \theta_1+\theta_2 X_i$ a.s. for all values of $(\theta_1,\theta_2)$ considered in the monte carlos.
We use $\max\{T_{n,1},T_{n,2}\}$ where $T_{n,1}$ and $T_{n,2}$ are the test statistics formed using each of these moment inequalities, with the same tuning parameters.
We then generate critical values based on the distribution where both equalities bind almost surely.  This corresponds to $m_1(W_i,\theta)$ and $m_2(W_i,\theta)$ being distributed $\text{Bernoulli}(1/2)$ independently of each other and the $X_i$'s.  Note that, $T_{n,1}$ and $T_{n,2}$ are iid under this distribution, so we can compute the critical value as the $(1-\alpha)^{1/2}$ quantile of the distribution of $T_{n,1}$.  Thus, performing the monte carlo analysis with these designs is equivalent to performing the original monte carlo analysis with $\alpha$ replaced by $1-(1-\alpha)^{1/2}$.

Results are reported in Tables \ref{power_table_em_d1}, \ref{power_table_em_d2} and \ref{power_table_em_d3}.  While there is some decrease in power relative to the tests where the irrelevant moment is not included, this decrease in power is relatively small in most cases.

Finally, we consider tests in \citet{chetverikov_adaptive_2012}, which are based on test statistics similar to the one proposed in the present paper, but which include versions of the test that use moment selection procedures.
We provide details of the tests in Appendix \ref{additional_mc_sec}.  Tables \ref{power_table_ch_d1}, \ref{power_table_ch_d2} and \ref{power_table_ch_d3} report the results.
\citet{armstrong_choice_2014}
reports the
results of a monte carlo analysis of some other tests under the same designs.

The monte carlo results for these tests include two ways of forming the critical value.
The plug-in asymptotic (PIA) critical value uses a bootstrap estimate of the distribution of the test statistic when all moments bind, while the generalized moment selection (GMS) critical value uses a pre-test to determine which moments are close to binding.
The performance of these tests is generally similar to the tests considered here (depending on the tuning parameter $t_n$).
The power improvement from using critical values based on moment selection (RMS) is small, which confirms the prediction in Section \ref{mom_sel_sec} that test statistics of this form (both the test proposed by \citet{chetverikov_adaptive_2012} and the one proposed in the present paper) are not very sensitive to moment selection.  This also mirrors the findings in the monte carlo designs reported by \citet{chetverikov_adaptive_2012}, in which the test statistic proposed in that paper has similar power regardless of whether PIA or RMS critical values are used.

\section{Empirical Illustration}\label{application_sec}

We apply our methods to a median regression model with endogenously censored and missing data, using data from the Health and Retirement Study.  The setup follows Section 9 of \citet{armstrong_asymptotically_2011}, but we repeat it here for convenience.  Letting $X_i$ and $W_i^*$ be yearly income and prescription drug expenditures for participant $i$ respectively, we posit the model
\begin{align}
q_{1/2}(W_i^*|X_i)=\theta_1+\theta_2 X_i
\end{align}
where $q_{1/2}(W_i^*|X_i)$ is the median of $W_i^*$ conditional on $X_i$.

In this survey, participants who did not report a point value for prescription drug expenditures were given a series of brackets for this variable, resulting in interval censoring for a portion of the observations, and some observations with a completely missing outcome variable.  In other words, we do not observe $W_i^*$, but only observe a random interval $[W_i^L,W_i^H]$ known to contain $W_i^*$.  The data is censored in a way that is likely to violate a missingness at random or censoring at random assumption: the variable is censored only for those who do not recall how much they spent, and it is likely that remembering how much one spent is correlated with the level of spending itself.

This endogenous censoring problem makes it impossible to estimate $(\theta_1,\theta_2)$ consistently in general.  We construct bounds using the conditional moment inequalities
\begin{align}\label{int_reg_ineq}
E[m(X_i,W_i^L,W_i^H,\theta)|X_i]
\equiv E\left[\left.\begin{array}{c}I(\theta_1+\theta_2 X_i\le W_i^H)-1/2  \\
1/2-I(\theta_1+\theta_2 X_i\le W_i^L)
\end{array} \right|X_i\right]
\ge 0   \,\,\,\,\,\,\,\,\,  \text{a.s.}
\end{align}
We test (\ref{int_reg_ineq}) at the .05 level using our methods for each value of $(\theta_1,\theta_2)$, and report a $95\%$ confidence region that inverts these tests.  The resulting confidence region contains the true parameter value with probability at least $.95$.

We restrict our sample to the 1996 wave of the survey and women with no more than \$15,000 of yearly income who report using prescription medications.  The data set also contains observations with a censored covariate (income), but, for illustrative purposes, we focus on endogenous censoring of the outcome variable and throw away observations where income is missing or censored (this is valid if remembering prescription drug expenditures is not correlated with income, but may be correlated with spending itself).  Our data set has 636 observations, of which 54 have an interval censored outcome variable, and an additional 7 have a completely missing outcome variable.  See \citet{armstrong_asymptotically_2011} for additional details about the data set.
For the truncation parameter $t_n$, we use
$n^{-1/3}\cdot (\max_{1\le i\le n} X_i-\min_{1\le i\le n} X_i)$.
The $n^{-1/3}$ scaling results in a test statistic that is rate adaptive to smoothness between Lipschitz continuity and 2 derivatives of the conditional truncation probabilities (a smaller value could be used to adapt to a less smooth data generating process).
For the critical value for our test, we use the analytically computed critical value defined in (\ref{cval_eq}).

Figure \ref{cr95_asym_dist_fig} shows the resulting confidence region.  For comparison, Figures \ref{cr95_est_fig} and \ref{cr95_cons_fig} show confidence regions using
the test statistic $\left|\inf_{I(s,t)\subseteq \hat{\mathcal{X}}} E_n m(W_i,\theta)I(s<X_i<s+t)\right|_{-}$, which is similar to the statistic used in this paper, but does not weight moments by the inverse of their standard deviation.
The constant weighting used by this statistic falls under the conditions of \citet{andrews_inference_2013} and \citet{kim_kyoo_il_set_2008}.
These figures are
taken directly from \citet{armstrong_asymptotically_2011}.  Figure \ref{cr95_est_fig} uses this statistic along with a critical value proposed in \citet{armstrong_asymptotically_2011}, while Figure \ref{cr95_cons_fig} uses a critical value that is more conservative, but valid under weaker conditions.
The test considered in this paper can be thought of as introducing an optimal weighting to the \citet{andrews_inference_2013} statistic.
This improves the rate for local alternatives from $n^{-\gamma/(2d_X+2\gamma)}$ to the $(n/\log n)^{-\gamma/(d_X+2\gamma)}$ rate obtained in Theorem \ref{local_alt_thm} in the set identified case, while worsening the rate by a $\log n$ term in the point identified case.  The \citet{armstrong_asymptotically_2011} test yields a slightly better improvement in power in certain situations, but is not robust to failure of certain smoothness conditions.
We also report confidence regions for each component of $(\theta_1,\theta_2)$, formed by projecting the confidence region onto each component.  Table \ref{ci_table} reports these confidence intervals, along with the corresponding confidence intervals formed using other methods reproduced from \citet{armstrong_asymptotically_2011} for convenience.

The slope parameter, $\theta_2$, gives the median increase in yearly prescription drug spending associated with an increase in income.  Thus, according to the results using the test proposed in this paper, a 95\% confidence interval puts the median increase in prescription drug expenditures associated with a \$1,000 in income between \$5.30 and \$32.00.  It is worth making a few notes in comparing this with the confidence regions using the unweighted statistic.  As predicted by the asymptotic power results, the confidence region for the slope parameter is tighter than the one obtained using an unweighted test statistic with a critical value formed using subsampling with a conservative rate.  The unweighted statistic gives a better lower bound for the slope parameter when subsampling with an estimated rate is used to form the critical value, but this test is less robust in the sense that it relies on additional smoothness conditions.

Comparing the joint confidence regions for $(\theta_1,\theta_2)$, we see that the tests based on unweighted statistics with subsampling based critical values lead to disconnected regions of rejected and accepted parameter values.  While the test based on a conservative rate (Figure \ref{cr95_cons_fig}) has only a small island of rejected parameter values in the confidence region, the test based on an estimated rate proposed in \citet{armstrong_asymptotically_2011} (Figure \ref{cr95_est_fig}) leads to numerous isolated areas in the confidence region.  In contrast, our test leads to a connected confidence region.  A likely explanation for this phenomenon is that the subsampling based confidence regions use critical values that implicitly estimate where the data generating process is in the null space.  This leads to disconnected confidence regions when, as the parameter moves in some direction, the test first begins to reject as the test statistic increases, but then fails to reject when the critical value increases as well.  In contrast, our test uses a least favorable critical value, so the test always moves from acceptance to rejection as the test statistic increases.

\section{Conclusion}\label{conclusion_sec}

This paper considers inference in conditional moment inequality models using a multiscale statistic.  The asymptotic distribution of our test statistic is derived, and the results are used to obtain feasible critical values.
The test obtains
certain optimal rates for power against local alternatives adaptively, and is the only feasible test available that does so for the best possible range of smoothness classes.  Our results also have implications for the effect of moment selection procedures on power, and our test has the additional advantage of being adaptive without requiring such tests.
An empirical application to a regression model with endogenous censoring and missing data illustrates the power improvement from the test.

\appendix

\section{Uniformity in the Underlying Distribution}\label{uniformity_sec}

We prove a stronger version of Theorem \ref{asym_dist_thm_randx} that holds uniformly in certain classes $\mathcal{P}$ of underlying distributions for which Assumption \ref{asym_dist_assump} holds uniformly over $P\in\mathcal{P}$.  To state and prove this result, we introduce some notation for indexing certain quantities by the underlying distribution $P$.  We use the notation $E_P$ to denote expectation with respect to the probability distribution $P$, and use similar notation for conditional expectations and conditional and unconditional variances, covariances and correlations.  We make explicit the dependence of the identified set on $P$ and define $\Theta_0(P)=\{\theta\in\Theta|E_P[m(W_i,\theta)|X_i]\ge 0 \text{ $a.s.$}\}$.

In the following theorem, the conditional distribution (including the conditional mean) of $m(W_i,\theta)$ given $X_i=x$ is allowed to vary over $\mathcal{P}$.
In particular, since no conditions are placed on the conditional mean of distributions in $\mathcal{P}$, the result shows that tests based on this asymptotic distribution result control the asymptotic size uniformly over distributions for which the conditional mean can be nonsmooth in arbitrary ways, although there are some mild continuity assumptions on the conditional variance.  We do, however, impose the same distribution of $X_i$ for all $P\in\mathcal{P}$.  This is mostly to avoid introducing additional notation in the proof, and could be relaxed (although the volume of the support would have to be bounded away from zero and the boundary would have to be uniformly well behaved in some sense).

\begin{theorem}\label{asym_dist_thm_randx_unif_p}
Let $\hat c_n$, $a(\hat c_n)$ and $b(\hat c_n)$ be defined as in Theorem \ref{asym_dist_thm_randx}.  Suppose that Assumption \ref{asym_dist_assump} holds for the same constants in part (a) for all $P\in\mathcal{P}$ and with the continuity in part (ii) of part (a) uniform over $P\in\mathcal{P}$.  Then, for any vector $r\in\mathbb{R}^{d_Y}$,
\begin{align*}
\liminf_n\inf_{P\in\mathcal{P},\theta_0\in\Theta_0(P)} P(a(\hat c_n) T_n(\theta_0)-b(\hat c_n)\le r)
\ge P(Z\le r)
\end{align*}
where $Z$ is a $d_Y$ dimensional vector of independent standard type $I$ extreme value random variables.
If, in addition, %
$E_P[m_j(W_i,\theta_0)|X_i=x]=0$ for all $j$ and $x$
for all $P\in\mathcal{P}$ for some $\theta_0$, then, for this $\theta_0$,
\begin{align*}
a(\hat c_n)T_n-b(\hat c_n)\stackrel{d}{\to} Z
\end{align*}
uniformly over $P\in\mathcal{P}$.
\end{theorem}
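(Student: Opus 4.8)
The plan is to re-examine the proof of Theorem~\ref{asym_dist_thm_randx} and check that every probabilistic estimate in it depends on the underlying distribution $P$ only through the quantities that Assumption~\ref{asym_dist_assump} bounds uniformly over $\mathcal{P}$ --- the exponential-moment constants $\lambda$ and $M_\lambda$, the lower bound on the density $f$, the modulus of continuity of the conditional variance $var(m_j(W_i,\theta)\mid X_i=\cdot)$, and the upper bound on conditional correlations --- so that the conclusions hold with a rate that does not depend on $P$ or on $\theta_0\in\Theta_0(P)$. Equivalently, it is enough to show that for an arbitrary sequence $\{P_n\}\subseteq\mathcal{P}$ with $\theta_n\in\Theta_0(P_n)$ one has $\liminf_n P_n\big(a(\hat c_n)T_n(\theta_n)-b(\hat c_n)\le r\big)\ge P(Z\le r)$, after which the $\liminf$--$\inf$ form in the statement follows by a routine subsequence argument. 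A helpful first remark is that $a(\hat c_n)$ and $b(\hat c_n)$ do not depend on $P$: they are functions of $t_n$ and of the convex hull $\hat{\mathcal{X}}$ of $\{X_i\}_{i=1}^n$, and the law of $X_i$ is the same for every $P\in\mathcal{P}$. This is exactly what lets a single centering and scaling work uniformly.

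\emph{Reduction to a mean-zero moment function.} First I would reduce to the case $\bar m_j(\theta_0,\cdot)\equiv 0$. Put $\tilde m_{n,j}(W_i)=m_j(W_i,\theta_n)-\bar m_j(\theta_n,X_i)$, so $E_{P_n}(\tilde m_{n,j}\mid X_i)=0$. Under the null $\bar m_j(\theta_n,\cdot)\ge 0$ on $\mathcal{X}$, so every window sum obeys $E_n m_j(W_i,\theta_n)I(s<X_i<s+t)\ge E_n\tilde m_{n,j}(W_i)I(s<X_i<s+t)$; since $y\mapsto|y|_{-}$ is nonincreasing, $T_{n,j}(\theta_n)$ is dominated pathwise, hence stochastically, by the same functional built from $\tilde m_{n,j}$ with the same denominator $\hat\sigma_{n,j}$. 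Moreover, up to a uniformly asymptotically negligible relative error, the variance estimator built from $m_j$ is no smaller than the one built from $\tilde m_{n,j}$ --- it carries in addition the within-window variation of $\bar m_j(\theta_n,\cdot)$, controlled uniformly by the uniform variance continuity and the $nt_n^{d_X}/|\log t_n|^4\to\infty$ part of Assumption~\ref{asym_dist_assump}(c) --- and a larger denominator only shrinks each normalized block mean toward zero, so $T_{n,j}(\theta_n)$ is in fact dominated by the mean-zero statistic of Theorem~\ref{asym_dist_thm_randx} formed entirely from $\tilde m_{n,j}$. It therefore remains to treat the case $E_P(m_j(W_i,\theta_0)\mid X_i)=0$ identically, establishing there the lower bound needed for the first display as well as the full convergence asserted in the second.

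\emph{The mean-zero case, uniformly.} Here I would run the moderate-deviation and tail-equicontinuity argument of \citet{chan_maxima_2006} that underlies Theorem~\ref{asym_dist_thm_randx}, recording each bound as an explicit function of the uniform constants above. The three ingredients are: (i) a moderate-deviation approximation to $P_n\big(E_n[m_j(W_i,\theta_n)I(s<X_i<s+t)/\hat\sigma_{n,j}]< -x\big)$ over the relevant range of $x$, whose dependence on $P_n$ is only through cumulant bounds implied by the uniform exponential-moment condition; (ii) a chaining bound on the increments of the normalized process over nearby windows $(s,t)$, which uses only the uniform moment and variance-continuity bounds; and (iii) the asymptotic independence across $j$ of the extreme-value limit, which follows from the uniform bound on conditional correlations. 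Assembling these as in Theorem~\ref{asym_dist_thm_randx} gives, uniformly over $P\in\mathcal{P}$ and over $\theta_0$ with $\bar m(\theta_0,\cdot)\equiv 0$, both $\liminf_n\inf_P P_n(a(\hat c_n)T_n-b(\hat c_n)\le r)\ge P(Z\le r)$ and the matching upper bound $\limsup_n\sup_P P_n(a(\hat c_n)T_n-b(\hat c_n)\le r)\le P(Z\le r)$. The lower bound, together with the domination above, yields the first display. The lower and upper bounds together force $\sup_P\big|P_n(a(\hat c_n)T_n-b(\hat c_n)\le r)-P(Z\le r)\big|\to 0$ for each $r$, and, the limiting cdf being continuous, uniformly in $r$ by a P\'olya-type argument; this is the second display.

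\emph{Main obstacle.} The crux is the uniformity of ingredients (i)--(iii): because the conditional law of $m(W_i,\theta_0)$ given $X_i$ may vary arbitrarily within $\mathcal{P}$ subject only to Assumption~\ref{asym_dist_assump}, one cannot pass to a convergent subsequence of distributions and invoke the pointwise Theorem~\ref{asym_dist_thm_randx}; instead every constant in the moderate-deviation expansions and in the equicontinuity chaining must be tracked and shown to depend on $P$ only through the uniformly controlled quantities. A secondary difficulty is making the variance-estimator comparison in the reduction step uniform in $(s,t)$ and in $P$, which requires a uniform-in-window control of the variation of $\bar m_j(\theta_n,\cdot)$ over windows of width $t_n$.
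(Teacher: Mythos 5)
Your sketch follows essentially the same route as the paper's proof: reduce to the mean-zero case by the one-sided pathwise domination of $T_n$ under the null together with a one-sided $o_P((\log n)^{-1})$ bound on the variance estimator (two-sided when $\bar m\equiv 0$), and then derive the extreme-value limit for the mean-zero, standardized process by verifying the \citet{chan_maxima_2006} moderate-deviation and tail-equicontinuity conditions (C), (A1)--(A5) uniformly over $\mathcal{P}$. The paper makes one organizational choice your sketch leaves implicit --- it first conditions on $\{X_i\}$ to reduce to deterministic covariates (Lemmas \ref{x_lemma} and \ref{vol_lemma}) and in Lemma \ref{sigmahat_lemma} compares $\hat\sigma_{n,j}$ to the deterministic conditional-variance sum $\sigma_{n,j}$ rather than to the empirical variance of the centered data --- but this is the same argument in a different order, not a different approach.
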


The second display in Theorem \ref{asym_dist_thm_randx_unif_p} shows that, for certain classes of underlying distributions where the conditional moment inequalities all bind for all values of $x$, our test is uniformly asymptotically similar.  While this typically only holds for very restricted classes (e.g. classes of distributions for the missing data model in Section \ref{application_sec} where the probability of missingness is zero), we include it here for completeness.
Note that the second display of Theorem \ref{asym_dist_thm_randx_unif_p} is stronger than necessary for the test to have asymptotic size $\alpha$.  Since size is defined as the supremum of the rejection probability over $\mathcal{P}$ with $\theta_0\in\Theta_0(P)$, the test will have asymptotic size $\alpha$ so long as
the first display in Theorem \ref{asym_dist_thm_randx_unif_p} holds
and
there exists a $P^*\in\mathcal{P}$ with $\theta_0\in\Theta_0(P^*)$ and
$E_{P^*}[m_j(W_i,\theta_0)|X_i=x]=0$ for all $x$ and $j$.  This follows from Theorem \ref{asym_dist_thm_randx_unif_p} along with the second display of Theorem \ref{asym_dist_thm_randx}.
Thus, in the missing data model in Section \ref{application_sec}, the test will have asymptotic size $\alpha$ over any class $\mathcal{P}$ that satisfies certain regularity conditions so long as it contains a distribution with no missingness.

We now comment briefly on the conditions on $\mathcal{P}$ and their relation to conditions used in other results in the literature.  First, note that the primary concern for uniform-in-$P$ asymptotics in the moment inequality literature is moment selection, which leads to concerns that a procedure may not be uniform in classes $\mathcal{P}$ where the inequality may be close to, but not quite, binding.  Since our procedure does not use moment selection, one might have less reason to be concerned and, indeed, the class $\mathcal{P}$ in Theorem \ref{asym_dist_thm_randx_unif_p} allows for such cases since it does not place any conditions on the conditional mean $E_P(m(W_i,\theta)|X_i=x)$.  Other tests in the literature have also been shown to be robust to classes of underlying distributions that place mild conditions or no conditions on the conditional mean, including \citet{andrews_inference_2013}, \citet{lee_testing_2013} and \citet{chetverikov_adaptive_2012} (the latter paper assumes some smoothness for the conditional mean, but allows for the cases where moments are ``nearly binding,'' which are the main concern in this literature).  \citet{chernozhukov_intersection_2013} place smoothness assumptions on the conditional mean, which is necessary in the case where higher order kernels or sieves are used, but could be relaxed for the case of a positive kernel.
Regarding the conditional variance, we assume continuity, as does \citet{chetverikov_adaptive_2012}.  Note that \citet{andrews_inference_2013} obtain uniformity in classes of distributions for which the set of covariance kernels for a certain process is compact, which may place some conditions on the conditional variance.  Regarding our exponential moment condition, \citet{chernozhukov_intersection_2013} also use strong moment assumptions in certain cases, while \citet{andrews_inference_2013} and \citet{chetverikov_adaptive_2012} only require polynomial moments.  We use the exponential moment condition to verify conditions for moderate deviations approximations, which allow us to take $t_n\to 0$ at the best possible rate (note that \citet{chetverikov_adaptive_2012} places stronger conditions on the rate at which the analogue of $t_n$ in that paper approaches zero, which preclude adaptivity in certain settings; however, the conditions in that paper, as well as ours, could be changed to trade off conditions on $t_n$ and moment conditions in other ways).

For completeness, we also include the following theorem, which states that the tests proposed in this paper control the size uniformly over classes of distributions that satisfy the conditions of the above theorem.

\begin{theorem}\label{test_size_unif_thm}
For any class $\mathcal{P}$ of distributions satisfying the conditions of Theorem \ref{asym_dist_thm_randx_unif_p},
\begin{align*}
\limsup_n\sup_{P\in\mathcal{P},\theta_0\in\Theta_0(P)} P(S_n(\theta_0)>\hat q_{1-\alpha})\le \alpha.
\end{align*}
\end{theorem}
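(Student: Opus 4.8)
The plan is to deduce this directly from the first display of Theorem \ref{asym_dist_thm_randx_unif_p}, in exactly the way Theorem \ref{test_size_thm} is deduced from Theorem \ref{asym_dist_thm_randx}; the only new feature is that the relevant statement in Theorem \ref{asym_dist_thm_randx_unif_p} is already uniform over $P\in\mathcal{P}$ and $\theta_0\in\Theta_0(P)$, so no extra work is needed to obtain uniformity here.

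First I would rewrite the rejection event in terms of the vector statistic. Since $a(\hat c_n)>0$ and $S_n(\theta_0)=\max_{1\le j\le d_Y}T_{n,j}(\theta_0)$, multiplying through by $a(\hat c_n)$ and subtracting $b(\hat c_n)$ in the definition (\ref{cval_eq}) of $\hat q_{1-\alpha}$ shows that
\[
\{S_n(\theta_0)>\hat q_{1-\alpha}\}=\left\{\max_{1\le j\le d_Y}\bigl(a(\hat c_n)T_{n,j}(\theta_0)-b(\hat c_n)\bigr)>\log(d_Y)-\log(-\log(1-\alpha))\right\},
\]
which is the complement of the event $\{a(\hat c_n)T_n(\theta_0)-b(\hat c_n)\le r^*\}$, where $r^*\in\mathbb{R}^{d_Y}$ is the fixed vector with every component equal to $\log(d_Y)-\log(-\log(1-\alpha))$. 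Taking complements and then suprema over $P\in\mathcal{P}$ and $\theta_0\in\Theta_0(P)$ gives
\[
\sup_{P\in\mathcal{P},\,\theta_0\in\Theta_0(P)}P\bigl(S_n(\theta_0)>\hat q_{1-\alpha}\bigr)=1-\inf_{P\in\mathcal{P},\,\theta_0\in\Theta_0(P)}P\bigl(a(\hat c_n)T_n(\theta_0)-b(\hat c_n)\le r^*\bigr).
\]

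Next I would apply the first display of Theorem \ref{asym_dist_thm_randx_unif_p} with the fixed vector $r=r^*$ (fixed because $d_Y$ and $\alpha$ do not depend on $n$ or $P$), which yields $\liminf_n\inf_{P,\theta_0}P(a(\hat c_n)T_n(\theta_0)-b(\hat c_n)\le r^*)\ge P(Z\le r^*)$. Combining this with the identity above and the elementary fact $\limsup_n(1-g_n)=1-\liminf_n g_n$ gives
\[
\limsup_n\sup_{P\in\mathcal{P},\,\theta_0\in\Theta_0(P)}P\bigl(S_n(\theta_0)>\hat q_{1-\alpha}\bigr)\le 1-P(Z\le r^*).
\]
Finally, since the components of $Z$ are independent standard type I extreme value variables with cdf $\exp(-\exp(-r))$, each coordinate contributes $\exp\bigl(-\exp(-(\log d_Y-\log(-\log(1-\alpha))))\bigr)=\exp\bigl(\log(1-\alpha)/d_Y\bigr)=(1-\alpha)^{1/d_Y}$, so $P(Z\le r^*)=(1-\alpha)$ and the right-hand side above equals $\alpha$.

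There is no serious obstacle here: the argument is purely a bookkeeping exercise once Theorem \ref{asym_dist_thm_randx_unif_p} is available. The only points that require a little care are the interchange of the complement with the supremum, which converts it into the infimum appearing in the uniform statement of Theorem \ref{asym_dist_thm_randx_unif_p}, and the verification that the Gumbel tail computation produces exactly $1-\alpha$; both are routine.
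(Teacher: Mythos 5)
Your proof is correct and matches the paper's intended argument, which the paper summarizes by simply stating that Theorem \ref{test_size_unif_thm} follows immediately from the first display of Theorem \ref{asym_dist_thm_randx_unif_p}. The reformulation of the rejection event, the passage from $\sup$ of rejection probability to $1-\inf$ of acceptance probability, and the Gumbel computation $P(Z\le r^*)=\bigl((1-\alpha)^{1/d_Y}\bigr)^{d_Y}=1-\alpha$ are exactly the bookkeeping the paper leaves implicit.
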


Theorem \ref{test_size_unif_thm} follows immediately from Theorem \ref{asym_dist_thm_randx_unif_p}.
We prove Theorem \ref{asym_dist_thm_randx_unif_p} in the next appendix.

\section{Proof of Theorem \ref{asym_dist_thm_randx_unif_p}}

We first prove a version of Theorem \ref{asym_dist_thm_randx_unif_p} where the $X_i$s are deterministic and $\hat\sigma^2$ is replaced by a certain sample average of conditional variances defined below.  The result then follows from showing that the conditions of this result hold almost surely conditional on $\{X_i\}_{i=1}^n$,
and that replacing the sample average of conditional variances with $\hat\sigma^2$ does not change the test statistic too much.

Throughout this section, we fix $\theta$ and let $Y_i=m(W_i,\theta)$, and drop the $\theta$ notation elsewhere such as in the definition of $\hat\sigma_{n,j}(s,t,\theta)$.  We prove the following result with $\{X_i\}_{i=1}^n$ replaced by a deterministic sequence $\{x_i\}_{i=1}^n$.
We consider %
a set $\mathcal{P}$ determining the probability distribtuion of $Y_i$ for a given $x_i$.

Let ${\cal F} = \{ F_{x,P}: x \in {\cal X}, P\in\mathcal{P} \}$ be a family of $d_Y$-dimensional distribution functions,
with $\cX$ a compact, Jordan measurable subset of $\mathbb{R}^{d_X}$ such that vol($\cX) > 0$, that is
it has positive $d_X$ dimensional volume.
Consider $(x_1, Y_1), (x_2, Y_2), \ldots$ with 
$x_i$ deterministic and $Y_i \sim F_{x_i}$ independent. Define 
$\mu_P(x) = E_{x,P} Y_i$ and $\Sigma_P(x) = {\rm Cov}_{x,P} Y_i$, where the
subscript $x,P$ denotes with respect to $Y_i \sim F_{x,P}$.
We use the notation $z_{i,j}$ to denote the $j$th coordinate of the $i$th observation or element in a sequence $\{z_i\}$.
Let $I(s,t)= \prod_{j=1}^{d_t} [s_j,s_j+t_j)$.
We abuse notation slightly and define $\text{vol}(t) = \prod_{j=1}^{d_t} t_j$ for a vector $t$.
Let $J_n(s,t) = \{ i: 1 \leq i \leq n,
x_i \in I(s,t) \}$. %
We consider the following regularity conditions.

\begin{assumption}%
\label{assump_a1}
~
\begin{enumerate}
\item[a.)] There exists $\lambda>0$ and $M_\lambda < \infty$ such that
$$E_{x,P} (e^{\lambda |Y_{i,j}|}) \leq M_\lambda \mbox{ for all } x \in \mathcal{X}, 1 \leq j \leq d_Y, P\in\mathcal{P}.
$$
Hence the characteristic function of $Y_{i,j}$ is analytic on $(-\lambda,\lambda)$ for all $j$ and
$Y_i \sim F_{x,P}$, $x \in \cal X$, $P\in\mathcal{P}$.

\item[b.)] $\sigma_{j,P}(x) \equiv \Sigma_{jj,P}^{1/2}(x)$ is continuous and positive on $\mathcal{X}$ for all
$1 \leq j \leq d_Y$ uniformly over $P\in\mathcal{P}$.

\item[c.)] (for $d_Y>1$): 
$$\rho \equiv 
\sup_{P\in\mathcal{P}}\sup_{i \neq j} \sup_{x \in \cal X} 
\frac{\Sigma_{ij,P}(x)}{\sigma_{i,P}(x) \sigma_{j,P}(x)} < 1. 
$$
\end{enumerate}
\end{assumption}

\begin{assumption}%
\label{assump_a2}
There exists a continuous, positive and bounded 
density function
$f$ on $\mathcal{X}$ and a sequence $t_n \rightarrow 0$ such that
\begin{enumerate}
\item[a.)] $nt_n^{d_X} | \log t_n|^{-4}
\rightarrow \infty$,

\item[b.)] for any $\delta>0$, $\# J_n(s,t) \sim n \int_{I(s,t)} f(x) dx$ 
uniformly over
$I(s,t) \subseteq
\cal X$ such that $\text{vol}(t) \geq \delta t_n^{d_X}/|\log t_n|^2$. 
\end{enumerate}
\end{assumption}

\noindent Define $\sigma_{n,j}(s,t) = \{ \sum_{i \in J_n(s,t)} [\sigma_{j,P}(x)]^2 \}^{1/2}$ and let
$$\tilde T_{n,j} = - \inf_{I(s,t) \subseteq {\cal X}, t \geq t_n {\bf 1}} \sum_{i \in J_n(s,t)}
Y_{i,j} \Big/ [\sqrt{n}\sigma_{n,j}(s,t)]
$$
(we suppress the dependence of $\sigma_{n,j}(s,t)$ and $\tilde T_{n,j}$ on $P$ for notational convenience).

\begin{theorem}\label{asym_dist_thm}
Suppose that $\mu_P(x) \geq 0$ for all $x \in \cal X$, $P\in\mathcal{P}$ and that Assumptions \ref{assump_a1} and \ref{assump_a2} hold. 
Let $a_n=(2 n\log t_n^{-d_X})^{1/2}$ and
$b_n=2\log t_n^{-d_X}+(2d_X-\frac{1}{2})\log\log t_n^{-d_X}
  -\log [2\sqrt{\pi}/\text{vol}(\mathcal{X})]$.
Then, for any vector $r\in\mathbb{R}^{d_Y}$,
\begin{align*}
\liminf_{n\to\infty}\inf_{P\in\mathcal{P}}
  P\left(a_n \tilde T_n-b_n\le r\right)\ge P(Z\le r)
\end{align*}
where $Z$ is a $d_Y$ dimensional vector of independent standard type I extreme value random variables.  If, in addition, $\mu_P(x) = 0$ for all $x \in \cal X$, $P\in\mathcal{P}$, then
\begin{align*}
\lim_{n\to\infty}\sup_{P\in\mathcal{P}}
  \left|P\left(a_n \tilde T_n-b_n \le r\right)- P(Z\le r)\right|=0.
\end{align*}
\end{theorem}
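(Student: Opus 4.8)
The plan is to prove the equality (second) statement first and then deduce the $\liminf$ (first) statement from it by a stochastic dominance argument. When $\mu_P(x)\ge 0$, decompose $\sum_{i\in J_n(s,t)}Y_{i,j}=\sum_{i\in J_n(s,t)}(Y_{i,j}-\mu_{j,P}(x_i))+\sum_{i\in J_n(s,t)}\mu_{j,P}(x_i)$ and discard the nonnegative second sum; this shows that $\tilde T_{n,j}$ is pointwise no larger than the statistic $\tilde T_{n,j}^{\mathrm{cent}}$ built from the centered variables $Y_{i,j}-\mu_{j,P}(x_i)$, which have mean zero, the same conditional covariances, and (with possibly halved $\lambda$ and adjusted $M_\lambda$, since the exponential moment bound forces $|\mu_{j,P}|$ to be bounded) still satisfy Assumptions \ref{assump_a1}--\ref{assump_a2}. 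Hence $P(a_n\tilde T_n-b_n\le r)\ge P(a_n\tilde T_n^{\mathrm{cent}}-b_n\le r)$ for every $P$, and applying the equality statement to the centered family gives $\liminf_n\inf_{P}P(a_n\tilde T_n-b_n\le r)\ge P(Z\le r)$. So from now on assume $\mu_P\equiv 0$.

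Write $G_{n,j}(s,t)=-\sum_{i\in J_n(s,t)}Y_{i,j}/\sigma_{n,j}(s,t)$, a mean-zero, variance-one field, so that $\{a_n\tilde T_n-b_n\le r\}=\bigcap_{j=1}^{d_Y}\{\sup_{I(s,t)\subseteq\mathcal X,\,t\ge t_n\mathbf 1}G_{n,j}(s,t)\le \sqrt n\,u_{n,j}\}$ with $\sqrt n\,u_{n,j}=(b_n+r_j)/\sqrt{2\log t_n^{-d_X}}\sim\sqrt{2\log t_n^{-d_X}}$. The first step is \emph{localization}: since the number of roughly nonoverlapping windows of volume $v$ is of order $\mathrm{vol}(\mathcal X)/v$, the maximum of the standardized field over windows of volume $v$ is of order $\sqrt{2\log(\mathrm{vol}(\mathcal X)/v)}$, which falls strictly below the threshold once $v\gg t_n^{d_X}$; a union bound combined with an exponential Markov inequality makes this precise, showing that with probability tending to one uniformly in $P$ only windows with $t_n^{d_X}\le\mathrm{vol}(t)\le\kappa_n t_n^{d_X}$ contribute, for a slowly growing power $\kappa_n$ of $|\log t_n|$. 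The second step \emph{discretizes}: replace the continuum of such windows by a grid with spacing a suitable power of $t_n/|\log t_n|$ and use a tail-equicontinuity estimate in the spirit of \citet{chan_maxima_2006}---bounding the probability that $G_{n,j}$ varies by more than $\varepsilon\sqrt n/a_n$ across a grid cell while exceeding a level near $\sqrt n\,u_{n,j}$---to show the discretization changes $\sup G_{n,j}$ by $o_P(1/a_n)$ after rescaling, again uniformly in $P$.

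The core is then a Chen--Stein Poisson approximation for the number $N_{n,j}$ of grid windows (more precisely, of excursion clumps, isolated by counting local maxima) at which $G_{n,j}$ exceeds $\sqrt n\,u_{n,j}$. For a single window, Assumption \ref{assump_a1}(a) yields a Cram\'er-type moderate deviation expansion $P(G_{n,j}(s,t)>x)=\bar\Phi(x)(1+o(1))$, uniform over the relevant windows and over $x$ of order $\sqrt{2\log t_n^{-d_X}}$, the error being controlled since $x^3/\sqrt{\#J_n(s,t)}\lesssim|\log t_n|^{3/2}/\sqrt{nt_n^{d_X}}\to 0$ by Assumption \ref{assump_a2}(a)--(b); a bivariate version of the same expansion handles the second-moment term. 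One must then show that $\sum_{\text{grid}}\bar\Phi(\sqrt n\,u_{n,j})$ times the appropriate clump-size correction converges to $e^{-r_j}$: this is precisely where the constants in $b_n$ get pinned down---the $(2d_X-\tfrac12)\log\log t_n^{-d_X}$ term (one $\log\log$ per local endpoint direction, with a $-\tfrac12$ correction from the $\sigma_{n,j}$ normalization) and the additive $-\log[2\sqrt\pi/\mathrm{vol}(\mathcal X)]$. Carrying out this clump-size/local-geometry computation directly on the sample for a nongaussian, nonstationary field via moderate deviations (rather than by Gaussian comparison), and verifying the accompanying tail-equicontinuity and $b_2$ bounds in the Chen--Stein inequality, is the main obstacle. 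Granting this, $P(N_{n,j}=0)\to e^{-e^{-r_j}}=P(Z_j\le r_j)$.

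Finally, joint convergence follows from asymptotic independence of $N_{n,1},\dots,N_{n,d_Y}$: Assumption \ref{assump_a1}(c) ($\rho<1$) together with a bivariate moderate-deviation bound gives $P(G_{n,j}(s,t)>\sqrt n\,u_{n,j},\,G_{n,k}(s,t)>\sqrt n\,u_{n,k})=o(\bar\Phi(\sqrt n\,u_{n,j}))$ for $j\ne k$, so distinct coordinates' excursion sets are disjoint with probability tending to one, the counts are asymptotically independent, and $P(\bigcap_{j}N_{n,j}=0)\to\prod_{j}e^{-e^{-r_j}}=P(Z\le r)$. Every error term above---the moderate-deviation remainders, the equicontinuity modulus, and the Chen--Stein remainders---depends only on the uniform constants $M_\lambda$, the common continuity modulus of the $\sigma_{j,P}$, and $\rho$ from Assumptions \ref{assump_a1}--\ref{assump_a2}, which delivers $\sup_{P\in\mathcal P}\big|P(a_n\tilde T_n-b_n\le r)-P(Z\le r)\big|\to 0$.
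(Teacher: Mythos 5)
Your outline tracks the paper's proof closely: the reduction to $\mu_P\equiv 0$ by monotonicity (the paper writes $\tilde T_{n,j}$ as the supremum of $\mathbb{H}_{n,j}/\sqrt n$ minus a nonnegative drift and proves the limit for $\mathbb{H}_n$), localization to the smallest scales, discretization, Feller-type moderate deviations for a single window, a Poisson approximation for exceedance clumps, and asymptotic independence of components via $\rho<1$. The one genuine structural variation is the Poisson step. You invoke Chen--Stein, which would require bounding the $b_1,b_2$ terms over the large grid of overlapping windows. The paper instead rescales via $\mathbb{X}_c(-u,v)=\mathbb{H}_n(ut_n,(v-u)t_n)$, partitions $t_n^{-1}\mathcal{X}$ into cubes of side $|\log t_n|$, and uses the fact that the field restricted to disjoint cubes is \emph{exactly} independent (since it is built from disjoint blocks of observations); the Poisson limit then comes from the classical Binomial-to-Poisson theorem, with Lemma~\ref{lem1} disposing of the large-scale, cross-cube, and boundary contributions. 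This sidesteps Chen--Stein entirely and is where the $(1-L^{-1})^d$ bookkeeping and the truncation to $D_L$ earn their keep.

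The substantive gap is exactly the one you flag yourself: you leave ``the clump-size/local-geometry computation \dots\ and verifying the accompanying tail-equicontinuity and $b_2$ bounds'' as ``the main obstacle.'' That computation is the content of Sections~\ref{c_a1_a2_sec} and~\ref{a3_a5_sec}. The paper feeds the within-cube problem to Corollary~2.7 of \citet{chan_maxima_2006} after verifying conditions (C) and (A1)--(A5): (A1) is the moderate-deviation estimate you cite (Lemma~\ref{lem2}); (C)/(A2) identify the local conditional increment field $W_{-u,v}$ with drift and covariance driven by the normalized window geometry, which is what produces the explicit constant $H(-u,v)=4^{-2d}\,\mathrm{vol}(v-u)^{-2}$ and hence pins down $(2d_X-\tfrac12)\log\log t_n^{-d_X}$ and $-\log[2\sqrt{\pi}/\mathrm{vol}(\mathcal{X})]$ in $b_n$; and (A3)--(A5) are the tail-equicontinuity conditions, whose non-Gaussian verification needs Wichura's maximal inequality (Lemma~\ref{lem6}(a)) and is the technically delicate part. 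Your heuristic ``one $\log\log$ per local endpoint direction, with a $-\tfrac12$ correction from the normalization'' is the right intuition but is not a derivation; the constant genuinely comes out of the $H(-u,v)$ integral over the change-of-variables domain. So the proposal is an accurate roadmap, but what it calls the main obstacle is precisely the part the paper spends three subsections proving.
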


The result follows from this and the following lemmas.

\begin{lemma}\label{x_lemma}
Under Assumption \ref{asym_dist_assump}, part (b) of Assumption \ref{assump_a2} above holds for almost all sequences $\{X_i\}_{i=1}^n$.
\end{lemma}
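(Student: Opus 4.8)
The statement to prove is that, under Assumption \ref{asym_dist_assump} (in particular part (b), which guarantees that $X_i$ has a density $f$ bounded away from zero on a compact convex Jordan measurable support $\mathcal{X}$), for almost every realization of the sequence $\{X_i\}_{i=1}^\infty$ the deterministic condition in part (b) of Assumption \ref{assump_a2} holds: namely, for every $\delta>0$,
\[
\# J_n(s,t) \sim n \int_{I(s,t)} f(x)\,dx
\]
uniformly over rectangles $I(s,t)\subseteq\mathcal{X}$ with $\text{vol}(t)\ge \delta t_n^{d_X}/|\log t_n|^2$. The plan is to view this as a uniform law of large numbers over the (infinite, but low-complexity) class of indicator functions of half-open rectangles, combined with a careful accounting of the rate $n t_n^{d_X}/|\log t_n|^2$ at which the smallest admissible cells shrink, which is permitted to be slow by Assumption \ref{asym_dist_assump}(c) ($n t_n^{d_X}/|\log t_n|^4\to\infty$).

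First I would fix $\delta>0$ and reduce to controlling, uniformly over admissible $(s,t)$, the centered sum
\[
\frac{1}{n}\sum_{i=1}^n \mathbf{1}\{X_i\in I(s,t)\} - P(X_i\in I(s,t)),
\]
and show this is $o\big(P(X_i\in I(s,t))\big)$ uniformly, which, together with $P(X_i\in I(s,t))=\int_{I(s,t)}f$, gives the claim. Since $f$ is bounded above and below on $\mathcal{X}$, we have $P(X_i\in I(s,t))\asymp \text{vol}(I(s,t)\cap\mathcal{X})\gtrsim \text{vol}(t)\gtrsim \delta t_n^{d_X}/|\log t_n|^2$ for the relevant cells (using convexity of $\mathcal{X}$ to ensure the intersection of a rectangle meeting $\mathcal{X}$ with $\mathcal{X}$ has volume comparable to that of the rectangle when the rectangle is small — or more simply, restricting attention to the cells that matter). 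So it suffices to prove a Bernstein-type deviation bound for each fixed cell, then union over a sufficiently fine grid of $(s,t)$ and handle the gap between grid points by monotonicity of $\#J_n$ and a "sandwiching" argument. Concretely: by Bernstein's inequality, for a fixed cell with $p=p(s,t)=P(X_i\in I(s,t))$,
\[
P\!\left(\Big|\tfrac1n\textstyle\sum_i \mathbf{1}\{X_i\in I(s,t)\} - p\Big| > \varepsilon p\right)
\le 2\exp\!\left(-\frac{c\,\varepsilon^2 n p}{1+\varepsilon}\right)
\]
for a universal constant $c$; since $np\gtrsim n t_n^{d_X}/|\log t_n|^2 \to\infty$ faster than $|\log t_n|^2$, this probability is eventually smaller than any fixed power of $t_n$, while a grid fine enough to control the sandwiching error needs only polynomially-in-$t_n^{-1}$ many points (a grid spacing of order $t_n^{d_X+1}$, say, in each of $s$ and $t$). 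A union bound over the grid then yields a summable-in-$n$ (or at least Borel–Cantelli-summable along a subsequence, then interpolated) probability, so by Borel–Cantelli the uniform convergence holds almost surely.

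The step I expect to be the main obstacle is the \emph{sandwiching / interpolation between grid points}, done uniformly in the window width $t$ down to the minimal scale. One must verify that enlarging or shrinking a rectangle by a grid step changes both $\#J_n(s,t)$ and $\int_{I(s,t)}f$ by a relative amount that is $o(1)$ uniformly, even for the smallest admissible cells where $\text{vol}(t)$ is only of order $t_n^{d_X}/|\log t_n|^2$; this forces the grid to be fine relative to $t_n$ (spacing $o(t_n/|\log t_n|^2)$ in each coordinate of $t$, hence $o(t_n^{d_X+\text{something}}/\cdot)$ in volume terms), and one must check that the resulting number of grid cells — a fixed power of $t_n^{-1}$ — is still dwarfed by $\exp(c\,n t_n^{d_X}/|\log t_n|^2)$, which is exactly where Assumption \ref{asym_dist_assump}(c) with its $|\log t_n|^4$ (rather than $|\log t_n|^2$) is used to give room. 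A secondary technical point is handling rectangles that straddle the boundary $\partial\mathcal{X}$: here Jordan measurability of $\mathcal{X}$ (so $\partial\mathcal{X}$ has volume zero) plus convexity ensures that the boundary contributes negligibly and that $\text{vol}(I(s,t)\cap\mathcal{X})$ does not degenerate relative to $\text{vol}(t)$ in a way that would break the lower bound on $p(s,t)$; alternatively one reduces to a slightly shrunk inner region and argues the outer shell is negligible. Once these uniformity-over-scales bookkeeping issues are handled, the result follows from Bernstein plus Borel–Cantelli as above.
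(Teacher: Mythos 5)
Your argument is correct, but it takes a more elementary, self-contained route than the paper.  The paper's proof is essentially two lines: it rewrites $\# J_n(s,t) / (n\int_{I(s,t)} f) - 1$ as the relative error of the empirical measure on the rectangle $I(s,t)$, and then cites Theorem~37 in Chapter~2 of Pollard's \emph{Convergence of Stochastic Processes}, a relative uniform convergence theorem for VC classes which gives almost-sure convergence of $E_n g / Eg \to 1$ uniformly over the class of sets $g$ with $Eg \ge K_n(\log n)/n$ for any $K_n \to \infty$.  The only work the paper then does is the arithmetic verification that Assumption~\ref{asym_dist_assump}(c), namely $n t_n^{d_X}/|\log t_n|^4 \to \infty$, implies $\operatorname{vol}(t) \ge \delta t_n^{d_X}/|\log t_n|^2 \ge K_n(\log n)/n$ with $K_n \to \infty$, which is exactly the "room" you correctly identify.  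Your proposal unpacks Pollard's theorem from scratch: a Bernstein deviation bound for each fixed cell, a union bound over a polynomial-in-$t_n^{-1}$ grid, monotone sandwiching between grid points, and Borel--Cantelli.  That chain of reasoning is sound and is in fact the skeleton of the proof of the cited result (rectangles form a VC class, so the chaining/symmetrization machinery reduces to essentially what you wrote).  What you buy is self-containedness; what you lose is brevity.  One small superfluity: you spend effort on rectangles straddling $\partial\mathcal{X}$, but Assumption~\ref{assump_a2}(b) only requires uniformity over $I(s,t) \subseteq \mathcal{X}$, so no straddling occurs, and boundedness of $f$ away from zero on $\mathcal{X}$ immediately gives $P(X_i \in I(s,t)) \ge (\inf_{\mathcal{X}} f)\operatorname{vol}(t)$ without any Jordan-measurability or convexity argument at this point.
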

\begin{proof}
We have
\begin{align*}
\frac{\# J_n(s,t)}{n \int_{I(s,t)} f(x) dx}-1
=\frac{E_n I(s<X_i<s+t)-E I(s<X_i<s+t)}{E I(s<X_i<s+t)}.
\end{align*}
This converges to one uniformly over $(s,t)$ with $vol(t)\ge K_n (\log n)/n$ for any sequence $K_n\to\infty$ by Theorem 37 in Chapter 2 of \citet{pollard_convergence_1984}, and the conditions $nt_n^{d_X}/|\log t_n|^4\to\infty$ and 
$vol(t) \geq \delta t_n^{d_X}/|\log t_n|^2$
guarantee that $vol(t)\ge \delta t_n^{d_X}/|\log t_n|^2\ge K_n n^{-1}|\log t_n|^4/|\log t_n|^2\ge K_n (\log n)/n$
for some $K_n\to \infty$.
\end{proof}

\begin{lemma}\label{vol_lemma}
Under Assumption \ref{asym_dist_assump}, $vol(\hat{\mathcal{X}})\stackrel{p}{\to} vol(\mathcal{X})$.
\end{lemma}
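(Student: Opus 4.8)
Since $\mathcal{X}$ is convex and contains $X_1,\ldots,X_n$ by Assumption \ref{asym_dist_assump}(b), the convex hull $\hat{\mathcal{X}}$ of $\{X_i\}_{i=1}^n$ is contained in $\mathcal{X}$, so $vol(\hat{\mathcal{X}})\le vol(\mathcal{X})$ surely. It therefore suffices to show that for every $\epsilon>0$,
\begin{align*}
P\left(vol(\hat{\mathcal{X}})< vol(\mathcal{X})-\epsilon\right)\to 0.
\end{align*}
The plan is to approximate $\mathcal{X}$ from inside by a polytope whose vertices lie in the interior of $\mathcal{X}$, and then to argue that $\hat{\mathcal{X}}$ contains a slightly perturbed copy of that polytope with probability tending to one, using that $f$ is bounded away from zero.

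Fix $\epsilon>0$. Since $\mathcal{X}$ is convex with $vol(\mathcal{X})>0$ it has nonempty interior; pick an interior point $c$, and note that the shrunken body $(1-\eta)\mathcal{X}+\eta c$ lies in the interior of $\mathcal{X}$ and has volume $(1-\eta)^{d_X}vol(\mathcal{X})$, which exceeds $vol(\mathcal{X})-\epsilon/2$ for $\eta$ small enough. Approximating this body in volume from inside by an inscribed polytope, we obtain finitely many points $x_1,\ldots,x_K$, all in the interior of $\mathcal{X}$, with $vol(\text{conv}\{x_1,\ldots,x_K\})\ge vol(\mathcal{X})-\epsilon/2$. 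Because the map $(y_1,\ldots,y_K)\mapsto vol(\text{conv}\{y_1,\ldots,y_K\})$ is continuous, we may choose $\delta>0$ small enough that the closed balls $B(x_k,\delta)$ all lie in $\mathcal{X}$ and that $vol(\text{conv}\{y_1,\ldots,y_K\})\ge vol(\mathcal{X})-\epsilon$ whenever $y_k\in B(x_k,\delta)$ for every $k$.

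Now let $A_n$ be the event that for each $k=1,\ldots,K$ at least one of $X_1,\ldots,X_n$ lies in $B(x_k,\delta)$. On $A_n$ we may pick such points $y_1,\ldots,y_K$ among the $X_i$, so that $\hat{\mathcal{X}}\supseteq\text{conv}\{y_1,\ldots,y_K\}$ and hence $vol(\hat{\mathcal{X}})\ge vol(\mathcal{X})-\epsilon$. Since $f$ is bounded below by some $\underline f>0$ on $\mathcal{X}$, we have $p_k\equiv P(X_i\in B(x_k,\delta))\ge\underline f\cdot vol(B(0,\delta))>0$, so by independence and a union bound $P(A_n^c)\le\sum_{k=1}^K(1-p_k)^n\to 0$. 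This yields the displayed bound and hence the lemma. The only nonroutine ingredients are the two standard facts from convex geometry used above — that a convex body of positive volume is approximated in volume from within by inscribed polytopes, and that the volume of the convex hull of $K$ points is continuous in those points — and I do not anticipate any genuine obstacle beyond stating them carefully.
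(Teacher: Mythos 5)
Your proof is correct but takes a genuinely different route from the paper's. The paper covers $\mathcal{X}$ by a lattice of small cubes, shows that each grid point on the lattice has (with probability tending to one) a nearby observation in every coordinate, deduces that a slightly shrunken version of each interior cube lies in $\hat{\mathcal{X}}$, and then invokes Jordan measurability to argue that the union of interior cubes already captures nearly all the volume. You instead work directly with convex geometry: shrink $\mathcal{X}$ toward an interior point to get a compact subset of the interior with nearly full volume, approximate it from inside by an inscribed polytope, use continuity of the convex-hull-volume map to tolerate small perturbations of the vertices, and then show that each small ball around a vertex is hit by some $X_i$ with probability tending to one (using the density lower bound). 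The tradeoff is that your argument leans entirely on convexity (which makes Jordan measurability automatic and thus never needs to be invoked) at the price of citing two facts from convex geometry — inner polytope approximation of convex bodies and continuity of the convex-hull volume functional — whereas the paper's grid argument is more elementary and would adapt more readily if the convexity hypothesis on $\mathcal{X}$ were weakened to Jordan measurability alone. Both proofs are complete and correct; yours is arguably tidier given that the convexity of $\mathcal{X}$ is already assumed.
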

\begin{proof}
For a given $\varepsilon,\delta>0$, the following event will hold with probability approaching one:
for every point $\varepsilon k$ in the grid $(\varepsilon \mathbb{Z}^{d_X})\cap\mathcal{X}$,
at least one observation $X_i$ will have each component $X_{i,j}$ within $\delta$ of $\varepsilon k$.
Once this holds, the set $\varepsilon I((k_1+\delta,\ldots,k_{d_X}+\delta),(1-\delta,\ldots,1-\delta))$ will be contained in the convex hull of the $X_i$s for all $k$ such that
$\varepsilon I(k,{\bf 1})\subseteq \mathcal{X}$.
This gives a lower bound of
$(1-2\delta)^{d_X} vol(\cup_{\varepsilon I(k,{\bf 1})\subseteq \mathcal{X}} \varepsilon I(k,{\bf 1}))$
for the volume of the convex hull of the $X_i$s, which can be made arbitrarily close to $vol(\mathcal{X})$ by Jordan measurability.  The result follows from this and the upper bound $vol(\hat{\mathcal{X}})\le vol(\mathcal{X})$.
\end{proof}

\begin{lemma}\label{sigmahat_lemma}
Under Assumptions \ref{assump_a1} and \ref{assump_a2} (with the $X_i$'s treated as nonrandom),
$\sup_{s,s+t\in \hat{\mathcal{X}}, t\ge t_n} \frac{\sigma_{n,j}(s,t)}{\sqrt{n}\hat\sigma_{n,j}(s,t)}-1\le o_P(\log n)^{-1}$
uniformly over $P\in\mathcal{P}$
and, if $\bar m(\theta,x)=0$ for all x,
  $\sup_{s,s+t\in \hat{\mathcal{X}}, t\ge t_n} \left|\frac{\sigma_{n,j}(s,t)}{n\hat\sigma_{n,j}(s,t)}-1\right|=o_P(\log n)^{-1}$
uniformly over $P\in\mathcal{P}$.
\end{lemma}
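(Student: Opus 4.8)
The plan is to compare, uniformly over the range of $(s,t)$ in the supremum of the lemma, the identities
\[
n\hat\sigma_{n,j}^2(s,t)=\sum_{i\in J_n(s,t)}Y_{i,j}^2-\frac1n\Bigl(\sum_{i\in J_n(s,t)}Y_{i,j}\Bigr)^{\!2},
\qquad
\sigma_{n,j}^2(s,t)=\sum_{i\in J_n(s,t)}\sigma_{j,P}^2(x_i),
\]
by decomposing $Y_{i,j}=\mu_{P,j}(x_i)+\varepsilon_{i,j}$, where $\mu_{P,j}(x)=E_{x,P}Y_{i,j}$ is the $j$th coordinate of $\mu_P(x)=\bar m(\theta,x)$ and $\varepsilon_{i,j}=Y_{i,j}-\mu_{P,j}(x_i)$. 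Writing $J_n=J_n(s,t)$, this gives
\begin{align*}
n\hat\sigma_{n,j}^2(s,t)
&=\Bigl[\sum_{i\in J_n}\mu_{P,j}(x_i)^2-\tfrac1n\bigl(\sum_{i\in J_n}\mu_{P,j}(x_i)\bigr)^{2}\Bigr]\\
&\quad+2\Bigl[\sum_{i\in J_n}\mu_{P,j}(x_i)\varepsilon_{i,j}-\tfrac1n\bigl(\sum_{i\in J_n}\mu_{P,j}(x_i)\bigr)\bigl(\sum_{i\in J_n}\varepsilon_{i,j}\bigr)\Bigr]\\
&\quad+\Bigl[\sum_{i\in J_n}\varepsilon_{i,j}^2-\tfrac1n\bigl(\sum_{i\in J_n}\varepsilon_{i,j}\bigr)^{2}\Bigr].
\end{align*}
The first bracket is nonnegative by the Cauchy--Schwarz inequality together with $\#J_n(s,t)\le n$, and it is identically zero when $\bar m_j(\theta,\cdot)\equiv 0$; hence it may be dropped for the one-sided conclusion and is absent for the two-sided one. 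Assumption \ref{assump_a1}(a) yields, uniformly in $x$, $j$ and $P$, the bounds $|\mu_{P,j}(x)|\le M_\lambda/\lambda$, $\sigma_{j,P}^2(x)\le 2M_\lambda/\lambda^2$, $E_{x,P}Y_{i,j}^4=O(1)$, and $E_{x,P}e^{(\lambda/2)|\varepsilon_{i,j}|}=O(1)$ (so $\varepsilon_{i,j}$ is uniformly sub-exponential); together with Assumption \ref{assump_a1}(b) this pins $\sigma_{j,P}^2(x)$ between two positive constants uniformly in $P$, so that $\sigma_{n,j}^2(s,t)\asymp\#J_n(s,t)$. Finally, Assumption \ref{assump_a2}(b) gives $\#J_n(s,t)\ge\underline J_n$ over the relevant boxes, where $\underline J_n:=c\,nt_n^{d_X}$ for some $c>0$, and, as one checks directly, Assumption \ref{assump_a2}(a) forces $nt_n^{d_X}/(\log n)^4\to\infty$, so $\underline J_n$ outgrows every power of $\log n$ used below.

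It remains to bound the three brackets uniformly in $(s,t)$. Since every quantity depends on $(s,t)$ only through the set $J_n(s,t)\subseteq\{1,\dots,n\}$, of which there are at most polynomially many, a union bound over them costs merely an $O(\log n)$ factor in the exponents. First, $\sum_{i\in J_n}\varepsilon_{i,j}$ has conditional variance $\sigma_{n,j}^2\asymp\#J_n$, so Bernstein's inequality plus the union bound give $\max_{(s,t)}|\sum_{i\in J_n}\varepsilon_{i,j}|=O_P(\sqrt{\#J_n\log n})$, whence $\tfrac1n(\sum_{i\in J_n}\varepsilon_{i,j})^2=O_P(\#J_n\log n/n)=o_P((\log n)^{-1}\#J_n)$ because $(\log n)^2/n\to0$. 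Second, $\sum_{i\in J_n}\mu_{P,j}(x_i)\varepsilon_{i,j}$ has conditional variance at most $(M_\lambda/\lambda)^2\sigma_{n,j}^2$, hence is $O_P(\sqrt{\#J_n\log n})$, and $\tfrac1n|\sum_{i\in J_n}\mu_{P,j}(x_i)|\,|\sum_{i\in J_n}\varepsilon_{i,j}|\le(M_\lambda/\lambda)(\#J_n/n)O_P(\sqrt{\#J_n\log n})=O_P(\sqrt{\#J_n\log n})$; both are $o_P((\log n)^{-1}\#J_n)$ since $\#J_n\gg(\log n)^3$. Third, and this is the crux, $\sum_{i\in J_n}\varepsilon_{i,j}^2$ has conditional mean $\sigma_{n,j}^2$, and one needs $\max_{(s,t)}|\sum_{i\in J_n}(\varepsilon_{i,j}^2-\sigma_{j,P}^2(x_i))|=o_P((\log n)^{-1}\sigma_{n,j}^2)$. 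Because $\varepsilon_{i,j}^2$ has only a stretched-exponential tail, one truncates it at level $M_n\asymp(\log n)^2$: the event $\{\max_{i\le n}|\varepsilon_{i,j}|\le(4/\lambda)\log n\}$ has probability tending to one, the discarded tail changes each summand's mean by $O(n^{-1})$, and on this event $\sum_{i\in J_n}\varepsilon_{i,j}^2$ equals its truncation, to which Bernstein's inequality applies with summands bounded by $M_n$ and variances $O(1)$. Combined with the union bound this yields a deviation of order $o_P((\log n)^{-1}\#J_n)$ exactly because $\#J_n\ge\underline J_n\gg(\log n)^4$ --- the $(\log n)^{-1}$ target rate, the truncation level $\asymp(\log n)^2$, and the $\log n$ from the union bound together making up the fourth power.

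Assembling the three brackets and discarding the nonnegative first one gives $n\hat\sigma_{n,j}^2(s,t)\ge\sigma_{n,j}^2(s,t)(1-o_P((\log n)^{-1}))$ uniformly over the boxes in the supremum and over $P\in\mathcal{P}$ (all constants above being uniform in $P$), which after taking square roots is the first display of the lemma. When $\bar m_j(\theta,\cdot)\equiv0$ the first two brackets vanish and the third alone gives $n\hat\sigma_{n,j}^2(s,t)=\sigma_{n,j}^2(s,t)(1+o_P((\log n)^{-1}))$ uniformly, which is the second display, its normalization coinciding with $\sigma_{n,j}(s,t)/[\sqrt n\,\hat\sigma_{n,j}(s,t)]$ up to the smooth map $x\mapsto\sqrt x$ near $1$. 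The main obstacle is the quadratic term: forcing its relative error down to $o((\log n)^{-1})$ is what requires $\#J_n$, hence $nt_n^{d_X}$, to dominate the fourth power of $\log n$ --- precisely the content of Assumption \ref{assump_a2}(a) --- and is also why the exponential moment condition (rather than mere polynomial moments) is used, namely to get the sub-exponential tail of $\varepsilon_{i,j}$ and the resulting truncation at level $\asymp\log n$. Everything else is a routine combination of Bernstein's inequality, Cauchy--Schwarz, and a union bound over the polynomially many index sets.
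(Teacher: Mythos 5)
Your proof is correct and follows essentially the same strategy as the paper: rewrite the relevant quantities, isolate a nonnegative piece that can be discarded in the one-sided conclusion (and vanishes in the two-sided one), and control the remainder by truncation, Bernstein's inequality, and a union bound over the polynomially many index sets $J_n(s,t)$, with $\#J_n\gg(\log n)^4$ coming from Assumption~\ref{assump_a2}(a) to make the rates work.

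The one difference worth noting is organizational. You decompose $Y_{i,j}=\mu_{P,j}(x_i)+\varepsilon_{i,j}$ and split $n\hat\sigma_{n,j}^2(s,t)$ into a $\mu^2$-bracket, a cross bracket, and an $\varepsilon^2$-bracket, then use Cauchy--Schwarz with $\#J_n\le n$ directly to identify the nonnegative piece. The paper instead separates $\hat\sigma^2-\sigma^2/n$ into a fluctuation term $I=\frac1n\sum_{i\in J_n}(Y_{i,j}^2-E_{x_i}Y_{i,j}^2)$ (which, in your notation, combines the cross bracket and the centered $\varepsilon^2$-bracket) and a term $II=\frac1n\sum_{i\in J_n}\mu^2-\bigl(\frac1n\sum_{i\in J_n}Y_{i,j}\bigr)^2$, to which it applies the same Jensen/Cauchy--Schwarz nonnegativity $\frac1n\sum\mu^2\ge(\frac1n\sum\mu)^2$ and then factors a difference of squares $a^2-b^2=(a-b)(a+b)$. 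The paper's truncation is phrased directly on $W_i=Y_{i,j}^2-E_{x_i}Y_{i,j}^2$ (at a level $K_n$ with $K_n\gg(\log n)^2$) rather than on $\varepsilon_{i,j}$ at level $\asymp\log n$; these are the same in effect since $\sqrt{|W_i|}\lesssim|Y_{i,j}|$ inherits the exponential moment. Your three-bracket layout is arguably a touch cleaner in that it makes explicit which pieces vanish under $\bar m_j\equiv 0$, but both proofs rest on the same algebraic identity, the same nonnegativity observation, and the same Bernstein-plus-truncation-plus-union-bound machinery, with $nt_n^{d_X}\gg(\log n)^4$ the common quantitative engine.
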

\begin{proof}
First, note that, since $x\mapsto 1/x^2$ is decreasing and differentiable at one, it suffices to show that $\inf_{s,s+t\in\hat{\mathcal{X}}} \frac{n\hat\sigma_{n,j}^2(s,t)}{\sigma_{n,j}^2(s,t)}-1\ge -o_P(\log n)^{-1}$ and
$\sup_{s,s+t\in\hat{\mathcal{X}}} \left|\frac{n\hat\sigma_{n,j}^2(s,t)}{\sigma_{n,j}^2(s,t)}-1\right|
  = o_P(\log n)^{-1}$.
Note that
\begin{align*}
&\hat \sigma_{n,j}^2(s,t)-\sigma_{n,j}^2(s,t)/n
=\frac{1}{n}\sum_{i\in J_n(s,t)} Y_{i,j}^2-\left[\frac{1}{n}\sum_{i\in J_n(s,t)} Y_{i,j}\right]^2
-\frac{1}{n}\sum_{i\in J_n(s,t)} \sigma_{j,P}(x)^2  %
= I+II
\end{align*}
where
\begin{align*}
I\equiv \frac{1}{n}\sum_{i\in J_n(s,t)} \left(Y_{i,j}^2- E_{x_{i}} Y_{i,j}^2\right)
\end{align*}
and
\begin{align*}
II\equiv \frac{1}{n}\sum_{i\in J_n(s,t)} \left[E_{x_i} Y_{i,j}^2-\sigma_{j,P}(x)^2\right]
-\left[\frac{1}{n}\sum_{i\in J_n(s,t)} Y_{i,j}\right]^2
=\frac{1}{n}\sum_{i\in J_n(s,t)} [E_{x_{i}} Y_{i,j}]^2
-\left[\frac{1}{n}\sum_{i\in J_n(s,t)} Y_{i,j}\right]^2.
\end{align*}

We first bound $I/[\sigma^2_{n,j}(s,t)/n]$ where $I$ is given above.
Let $W_i=Y_{i,j}^2-E_{x_i,P}Y^2_{i,j}$.
Note that $\sigma_{n,j}^2(s,t)$ is bounded from below by a constant times $\# J_n(s,t)$ uniformly over $P\in\mathcal{P}$, so it suffices to consider
$\left(\sum_{i\in J_n(s,t)} W_i\right)/\# J_n(s,t)$.
For some sequence $K_n$, let $\tilde W_i=W_iI(|W_i|\le K_n)$ be a truncated version of $W_i$.  Note that, by Markov's inequality, for $\lambda>0$ given in Assumption \ref{assump_a1},
\begin{align*}
P\left(|W_i|> K\right)
\le E_{x_i,P}\exp(\lambda \sqrt{|W_i|}-\lambda\sqrt{K})
\end{align*}
so
\begin{align*}
P\left(|W_i|> K \text{ some $1\le i\le n$}\right)
\le n\exp(-\lambda\sqrt{K}) \sup_{x\in\mathcal{X},P\in\mathcal{P}} E_{x,P}\exp(\lambda \sqrt{|W_i|}),
\end{align*}
which goes to zero for any $K=K_n$ that increases faster than $(\log n)^2$.  To bound
$|E_{x_i,P} \tilde W_i|=|E_{x_i,P} \tilde W_i-E_{x_i,P} W_i|$, note that
\begin{align*}
\{E_{x_i,P}[|W_i| I(|W_i|>K)]\}^2
\le E_{x_i,P}(W_i^2) P(|W_i|>K)
\le C \exp(-\lambda \sqrt{K})
\end{align*}
for some constant $C$ that does not depend on $P$ or $x_i$.  Thus,
$|\sum_{i\in J_n(s,t)} E_{x_i,P} \tilde W_i|/\# J_n(s,t)\le [C \exp(-\lambda\sqrt{K_n})]^{1/2}$, which goes to zero at a polynomial rate for $K_n$ increasing faster than $(\log n)^2$, which is faster than the required $\log n$ rate.

Using the fact that the supremum over $(s,t)$ is determined by the maximum over no more than $n^{2d_X}$ possible deterministic configurations for $J_n(s,t)$, and that for any $\delta > 0$,
$\delta (\log n)^{-1} \ge \# J_n(s,t)^{-1/4}$ for large enough $n$,
\begin{align*}
&P\left(\sup_{s,s+t\in\hat{\mathcal{X}}, t\ge t_n} \left|\frac{\sum_{i\in J_n(s,t)} [\tilde W_i-E_{x_i,P}\tilde W_i]}{\# J_n(s,t)}\right|\ge \delta (\log n)^{-1}\right)  \\
&\le n^{2d_X}
\sup_{s,s+t\in\hat{\mathcal{X}}, t\ge t_n} P\left(\left|\frac{\sum_{i\in J_n(s,t)} [\tilde W_i-E_{x_i,P}\tilde W_i]}{\# J_n(s,t)}\right|\ge \# J_n(s,t)^{-1/4}\right).
\end{align*}
Now, using Bernstein's inequality, for $C$ a bound for the fourth moment of $Y_{i,j}$, the above display is bounded by
\begin{align*}
n^{2d_X} \sup_{s,s+t\in\hat{\mathcal{X}}, t\ge t_n} 2\exp\left(-\frac{[\# J_n(s,t) ^{3/4}]^2}{C \# J_n(s,t)+K_n [\# J_n(s,t)^{3/4}]/3}
  \right).
\end{align*}
Let $K_n$ be such that $K_n\le \# J_n(s,t)^{1/2}$ all $(s,t)$ and $K_n/(\log n)^2\to\infty$.
For large enough $n$, this gives a bound in the above display of
$n^{2d_X}\sup_{s,s+t\in\hat{\mathcal{X}}, t\ge t_n} \exp(-\# J_n(s,t)^{1/4})\to 0$.

As for $II$, we have
\begin{align*}
&II\ge \left[\frac{1}{n}\sum_{i\in J_n(s,t)} E_{x_{i},P} Y_{i,j}\right]^2
-\left[\frac{1}{n}\sum_{i\in J_n(s,t)} Y_{i,j}\right]^2  \\
&\ge - 2\left(\left|\frac{1}{n}\sum_{i\in J_n(s,t)} E_{x_{i},P} Y_{i,j}\right|
\vee\left|\frac{1}{n}\sum_{i\in J_n(s,t)} Y_{i,j}\right|\right) \left| \frac{1}{n}\sum_{i\in J_n(s,t)} (Y_{i,j}-E_{x_i,P} Y_{i,j}) \right|
\end{align*}
and similar methods show that the last line divided by $\sigma_{n,j}(s,t)/n$ converges to zero at a faster than $\log n$ rate uniformly over $(s,t)$ with $s,s+t\in\hat{\mathcal{X}}$, $t\ge t_n$.  If $\bar m_j(\theta,x)=0$ for all $x$, then
$II=-\left[\frac{1}{n}\sum_{i\in J_n(s,t)} Y_{i,j}\right]^2$,
and $\left[\frac{1}{n}\sum_{i\in J_n(s,t)} Y_{i,j}\right]^2/[\sigma_{n,j}(s,t)/n]$ 
also converges to zero
at a faster than $\log n$ rate 
uniformly over $(s,t)$ with 
$s,s+t\in\hat{\mathcal{X}}$, $t\ge t_n$ by similar arguments.
\end{proof}

\subsection{Proof of Theorem \ref{asym_dist_thm}}

We begin by proving the result in the case of a univariate outcome $Y_i=m(W_i,\theta)$.  Section \ref{multi_y_sec} generalizes the result to the case of multivariate $Y_i$.

To simplify notation, %
we let $d=d_X$ and we omit the subscript $P$ when dealing with expectations and other quantities that depend on the underlying distribution $P$. 
Let $Z_i=\mu(x_i)-Y_i$ and let $B_n = \{ (s,t): I(s,t) \subseteq {\cal X}, t \geq t_n {\bf 1} \}$.
Define 
\begin{equation} \label{Hn}
\mathbb{H}_n(s,t) = \frac{\sum_{i \in J_n(s,t)} Z_i}{\sigma_n(s,t)} \mbox{ for } (s,t) \in B_n.
\end{equation}
Let $c=\frac{(b_n+\zeta)\sqrt{n}}{a_n}$. Note in particular that
\begin{eqnarray} \label{c2}
& & t_n^{-d} (c^2/2)^{2d-\frac{1}{2}} e^{-c^2/2} \\
& \sim & t_n^{-d} (d |\log t_n|)^{2d-\frac{1}{2}} \exp \Big( -\frac{1}{2}
\Big\{ (2d | \log t_n|)^{1/2}+\frac{\log[(d | \log t_n|)^{2d-\frac{1}{2}} \text{vol}(\cX) e^\zeta
/2 \sqrt{\pi}]}{(2d | \log t_n|)^{1/2}} \Big\}^2 \Big) \nonumber \\
& \rightarrow & [2 \sqrt{\pi}/\text{vol}(\cX)] e^{-\zeta} \mbox{ as } n \rightarrow \infty. \nonumber
\end{eqnarray}
Theorem \ref{asym_dist_thm} in the case of univariate $Y$ follows from 
\begin{equation} \label{lP}
\lim_{n \rightarrow \infty}\sup_{P\in\mathcal{P}} \left|P \{ \sup_{(s,t) \in B_n} 
\mathbb{H}_n(s,t) \geq c \} - [1-\exp(-e^{-\zeta})]\right|\to 0 \mbox{ for all } \zeta \in \mathbb{R}.
\end{equation}

Consider a change-of-variables by defining $\mX_c$ such that
\begin{equation} \label{Xc}
\mX_c(-u,v) = \mathbb{H}_n(ut_n,(v-u)t_n) \mbox{ for } (ut_n,(v-u)t_n) \in B_n.
\end{equation}
The domain of $\mX_c$ is thus $D_c\equiv \{ (-u,v) \in (-t_n^{-1} \cX) \times t_n^{-1} \cX: v-u \geq
{\bf 1} \}$.
Note that $\mathbb{X}_c(-u,v)$ is a normalized sum over observations for which $x_i$ lies in the rectangle
$\{x|u t_n< x<v t_n\}$.  The change of variable and unusual notation are designed so that, for $a,b\ge 0$, the rectangle associated with $\mathbb{X}_c(-u+a,v+b)$ contains the rectangle associated with $\mathbb{X}_c(-u,v)$.  This helps with the verification of some of the conditions in \citet{chan_maxima_2006} involving positive increments of the process.
 
Let $\psi(z) = \frac{1}{z \sqrt{2 \pi}} e^{-z^2/2}$ and $\Delta_c = (2c^2)^{-1}$. 
Consider a restriction of $D_c$ to $D_L(=D_{c,L})\equiv \{ (-u,v) \in D_c: v-u \leq L {\bf 1} \}$
for some $L > 1$. Let
\begin{equation} \label{Dwstar}
D^*_w = \{ (-u,v) \in -I(w,|\log t_n|) \times I(w,|\log t_n|): {\bf 1} \leq (v-u) \leq 
L {\bf 1} \}.
\end{equation}
We will show that regularity conditions (C) and (A1)--(A5) in 
Corollary 2.7 of \citet{chan_maxima_2006} are satisfied 
uniformly on the domains $D_L$ and over $P\in\mathcal{P}$ and hence  
\begin{equation} \label{Psup}
q_{w,P} \equiv P \{ \sup_{(-u,v) \in D^*_w} \mX_c(-u,v) \geq c \} \sim \psi(c) \Delta_c^{-2d} \int_{D^*_w}
H(-u,v) d(-u,v)
\end{equation}
uniformly over $I(w,|\log t_n|) \subseteq t_n^{-1} \cX$ and $P\in\mathcal{P}$, where $H$ is defined in that paper and, as shown below, takes the form
\begin{equation} \label{Huv}
H(-u,v) = 4^{-2d} \text{vol}(v-u)^{-2}
\end{equation}
in our case.
Conditions (C) and (A1)--(A2) of \citet{chan_maxima_2006} are verified in Section \ref{c_a1_a2_sec}, and conditions (A3)--(A5) are verified in Section \ref{a3_a5_sec}.

We partition $t_n^{-1} \cX$ into cubes of length $|\log t_n|$ and apply (\ref{Psup}) on each
cube to show (\ref{lP}). More specifically, define $Q_n=\{ w \in (|\log t_n| \mathbb{Z})^d:
I(w,|\log t_n|) \subseteq t_n^{-1} \cX \}$. Since $\cX$ is Jordan measurable and $t_n | \log t_n|
\rightarrow 0$, 
\begin{equation} \label{hexQ}
\# Q_n \sim \text{vol}(\cX)/(t_n | \log t_n |)^d, 
\end{equation}
and it follows from (\ref{c2}), (\ref{Psup}) and (\ref{Huv}) that
\begin{equation} \label{qw}
\sum_{w \in Q_n} q_{w,P} \rightarrow \lambda \equiv (1-L^{-1})^d e^{-\zeta}
\end{equation}
uniformly over $P\in\mathcal{P}$,
noting that $\lambda$ is the limit of 
$\psi(c) \Delta_c^{-2d} (\#Q_n) |\log t_n|^d \int_{[0, L)^d}
\text{vol}(t)^{-2} dt$. Since $\mX_c$ is
independent over $D_{w_1}^*$ and $D_{w_2}^*$ for $w_1, w_2 \in Q_n$, $w_1 \neq
w_2$, it follows from the Poisson limit of the Binomial distribution that
$$P \{ \sup_{w \in Q_n} \sup_{(-u,v) \in D_w^*} \mX_c(-u,v) \geq c \} \rightarrow 1-e^{-\lambda}
$$
uniformly over $P\in\mathcal{P}$.
Hence to show (\ref{lP}), it suffices for us to prove the following:

\begin{lemma} \label{lem1}
{\rm (a)}
For all $\epsilon > 0$, there exists $L$ large enough such that
$$p_1\equiv \sup_{P\in\mathcal{P}} P \{ \sup_{(-u,v) \in D_c \setminus D_{c,L}} \mX_c(-u,v) \geq c \} \leq
\epsilon \mbox{ for all large } c.
$$

{\rm (b)}
$p_2 \equiv \sup_{P\in\mathcal{P}} \sum_{w_1,w_2 \in Q_n, w_1 \neq w_2} P \{ \sup_{u \in I(w_1,|\log t_n|), v \in 
I(w_2,|\log t_n|), {\bf 1} \leq v-u \leq L {\bf 1}} \mX_c(-u,v) \geq c \} \rightarrow 0$.

\smallskip
{\rm (c)}
 $p_3 \equiv \sup_{P\in\mathcal{P}} P \{ \sup_{(-u,v) \in D_L \setminus \cup_{w \in Q_n} 
I(w,|\log t_n|)} \mX_c(-u,v) \geq c
\} \rightarrow 0$.
\end{lemma}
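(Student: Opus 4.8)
The plan is to reduce all three parts to one master estimate plus bookkeeping with the kernel $H$. The first step is to extract from the verification of conditions (C) and (A1)--(A5) of \citet{chan_maxima_2006} (carried out in Sections \ref{c_a1_a2_sec} and \ref{a3_a5_sec}) not only the sharp asymptotic (\ref{Psup}) but also the cruder one-sided bound
\begin{align*}
P\bigl\{\textstyle\sup_{(-u,v)\in A}\mX_c(-u,v)\ge c\bigr\}\le C\,\psi(c)\,\Delta_c^{-2d}\int_A H(-u,v)\,d(-u,v),
\end{align*}
valid for any measurable $A\subseteq D_c$ on which those conditions hold, with $C$ not depending on $P$, $n$ or $A$ (this is of the type produced by the union-bound/moderate-deviation step in that paper). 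Combined with (\ref{c2}), which gives that $\psi(c)\Delta_c^{-2d}t_n^{-d}$ stays bounded as $n\to\infty$ along $c=(b_n+\zeta)\sqrt n/a_n$, and the explicit form (\ref{Huv}) of $H$, each of (a)--(c) becomes a volume computation; uniformity over $P\in\mathcal{P}$ is then automatic since none of these ingredients depends on $P$.

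For (a), change variables from $(-u,v)$ to (center, $t=v-u$) and use $H=4^{-2d}\mathrm{vol}(t)^{-2}$: the center contributes at most $t_n^{-d}\mathrm{vol}(\cX)$, and $\int_{\{t\ge\mathbf 1,\ \max_j t_j>L\}}\mathrm{vol}(t)^{-2}\,dt\le dL^{-1}$, so $\int_{D_c\setminus D_{c,L}}H\le C't_n^{-d}L^{-1}$ and hence $p_1\le C''L^{-1}$, which is below $\epsilon$ once $L$ is large, for all large $c$ and uniformly in $P$. One subtlety is that the conditions of \citet{chan_maxima_2006} were checked only on $D_L$ (bounded rectangles), so to apply the master bound on $D_c\setminus D_{c,L}$ I would first split that region dyadically in each coordinate; on each dyadic block the rectangle sizes are comparable, which makes re-checking the conditions routine, and the geometric sum over blocks again produces the $L^{-1}$. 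I expect this extension to the large-rectangle (large-scale, differently nonstationary) regime, uniform over all large $c$ rather than along one sequence, to be the main obstacle; parts (b) and (c) are standard Poisson-approximation bookkeeping once the master bound is available.

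For (b), note first that when $\mathbf 1\le v-u\le L\mathbf 1$ and $u\in I(w_1,|\log t_n|)$, $v\in I(w_2,|\log t_n|)$ with $w_1\ne w_2$, the cubes must be neighbors for $n$ large: if they differed in some coordinate $j$ by a multiple of $|\log t_n|$ exceeding one step, then $v_j-u_j$ could not lie in $[1,L]$ since $|\log t_n|\to\infty$. Thus $p_2$ has $O(\#Q_n)$ nonzero terms. For a neighboring pair differing in coordinate $j$, the admissible region is thin: $v_j-u_j$ is pinned to an interval of length $O(L)$ rather than $O(|\log t_n|)$, costing a factor $|\log t_n|^{-1}$, so by the master bound each term is $O(\psi(c)\Delta_c^{-2d}|\log t_n|^{d-1})$ (with a further $|\log t_n|^{-1}$ for diagonal neighbors). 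Summing over the $O(\mathrm{vol}(\cX)(t_n|\log t_n|)^{-d})$ pairs gives $p_2=O(\psi(c)\Delta_c^{-2d}t_n^{-d}|\log t_n|^{-1})=O(|\log t_n|^{-1})\to0$.

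For (c), the leftover region consists of pairs with $u$ or $v$ lying outside every full cube of $Q_n$, hence within $O(t_n|\log t_n|)$ of $\partial(t_n^{-1}\cX)$. Fixing $\epsilon>0$, Jordan measurability of $\cX$ supplies a $\delta>0$ such that the $\delta$-neighborhood of $\partial\cX$ has volume below $\epsilon$; for $n$ large this neighborhood, rescaled by $t_n^{-1}$, is covered by $O(\epsilon\,t_n^{-d}|\log t_n|^{-d})$ cubes of side $|\log t_n|$, and applying the master bound on each such cube (restricted to $D_L$) and summing gives $\limsup_n p_3\le C\epsilon$. Letting $\epsilon\downarrow0$ yields $p_3\to0$, completing the proof of the lemma.
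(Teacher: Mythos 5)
Your proposal is essentially the paper's own argument. The paper also proves (a) by a dyadic-in-scale decomposition: it defines rescaled processes $\mX_{c,\ell}$ and domains $D_\ell=\{(-u,v)\in(t_nL^\ell)^{-1}\mathcal{C}^2:\mathbf 1\le v-u\le L\mathbf 1\}$, applies Corollary 2.7 of \citet{chan_maxima_2006} on each $D_\ell$ uniformly over $\ell$ and $P$, and sums the resulting $O(|L^\ell|^{-1})$ bounds to get $p_1\le C[(1-L^{-1})^{-d}-1]$, which is your geometric-sum step. Part (b) is handled exactly as you say: for $n$ large the constraint $\mathbf 1\le v-u\le L\mathbf 1$ forces $w_1,w_2$ to be neighboring cubes (at most $3^d-1$ neighbors each), and on each such pair the admissible region has volume down by a factor of $|\log t_n|$ in each differing coordinate, so the per-term bound times $\#Q_n$ is $o(1)$. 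Part (c) in the paper is slightly more streamlined than your covering argument --- it introduces $\widetilde Q_n$ (cubes meeting $t_n^{-1}\cX$) and uses Jordan measurability to get $\#Q_n\sim\#\widetilde Q_n$, so $p_3\le\sum_{w\in\widetilde Q_n\setminus Q_n}q_{w,P}=o(\sum_{w\in Q_n}q_{w,P})=o(1)$ --- but it is the same idea, and like yours it relies on first extending the domain of $\mathbb{H}_n$ past the boundary (the paper embeds $(x_i,Y_i)$ in a sequence supported on a larger cube $\mathcal{C}$) so the Chan--Lai conditions can be verified on the boundary cubes.

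One small caution about how you packaged the argument: you invoke a single ``cruder one-sided master bound'' $P\{\sup_A\mX_c\ge c\}\le C\psi(c)\Delta_c^{-2d}\int_A H$ valid for arbitrary measurable $A\subseteq D_c$. Corollary 2.7 of \citet{chan_maxima_2006} only delivers a two-sided asymptotic, uniformly over a family of domains on which conditions (C), (A1)--(A5) have been verified with uniform constants; it is not automatically a uniform one-sided inequality over all $A$ at finite $c$. This is why the paper, rather than stating a single master bound, applies Corollary 2.7 separately to the three explicit families $\{D_\ell\}$, $\{D^*_{w_1,w_2}\}$ and $\{D^*_w: w\in\widetilde Q_n\}$, in each case noting uniformity over the index and over $P$. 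You anticipated this (``the main obstacle''), and the dyadic block structure is exactly what makes the re-verification routine, so your route does close; I just note it is the repeated, indexed application of Corollary 2.7 rather than a genuine master inequality.
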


We prove this lemma in Section \ref{lem1_sec}.  Sections \ref{c_a1_a2_sec} and \ref{a3_a5_sec} verify the conditions of \citet{chan_maxima_2006} for the tail approximations used in the above argument.  Section \ref{multi_y_sec} extends the results to multivariate $Y_i$.

\subsection{On (\ref{Huv}) and the Verification of (C), (A1) and (A2)}
  \label{c_a1_a2_sec}

Let $\Phi$ be the c.d.f. of the standard normal.

\begin{lemma} \label{lem2} 

{\rm (a)} Let $S_n = U_1 + \cdots + U_n$ and $s_n^2 = {\rm Var}(S_n)$. Assume that $U_1, \ldots,
U_n$ are independent mean 0 random variables and there exists $\lambda > 0$, $M_\lambda < \infty$
and $\sigma_0^2 > 0$ such that
$$E(e^{\lambda |U_k|}) \leq M_\lambda, \quad {\rm Var}(U_k) \geq \sigma_0^2, \quad 1 \leq k \leq n.
$$
Let $1 \leq x_n =o(n^{1/6})$. Then there exists a constant $C>0$ dependent only on $\lambda$,
$M_\lambda$, $\sigma_0$ and $\{ x_n \}_{n \ge 1}$ such that
$$\Big| \frac{P(S_n > x s_n)}{1-\Phi(x)} - 1 \Big| \leq \frac{C x^3}{\sqrt{n}} \mbox{ for all }
1 \leq x \leq x_n, n \geq 1. 
$$

{\rm (b) [(A1) of \citet{chan_maxima_2006}]} 
$P \{ \mathbb{H}_n(s,t) \geq c-y/c \} \sim \psi(c-y/c) [\sim 1-\Phi(c-y/c)]$ 
uniformly over $P\in\mathcal{P}$ and positive, bounded values of $y$ and $(s,t) \in B_n$. 
\end{lemma}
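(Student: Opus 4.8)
The plan is to obtain part (b) as a short corollary of part (a) together with the classical Mills--ratio asymptotic, so the real content is part (a). For part (a) I would use the conjugate--distribution (exponential tilting) argument for Cram\'er-type moderate deviations in the non-identically-distributed case; it can also be cited from the literature on sums of independent random variables (e.g.\ Petrov's monograph, or Saulis and Statulevi\v{c}ius). Concretely: for $x$ in the stated range, pick the tilting parameter $h=h(x)>0$ solving $\sum_k \psi_k'(h)=x s_n$ with $\psi_k(h)=\log E e^{hU_k}$; the uniform bound $E e^{\lambda|U_k|}\le M_\lambda$ puts $h$ well inside the common strip of analyticity and gives $h\asymp x/\sqrt n$, while $\mathrm{Var}(U_k)\ge\sigma_0^2$ gives $s_n^2\ge n\sigma_0^2$. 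Rewrite $P(S_n> x s_n)$ as an integral of $e^{-hz}$ against the density of the tilted, standardized sum, apply an Edgeworth expansion with one correction (a Berry--Esseen bound with the third-cumulant term) to that density, and collect the resulting multiplicative factor. Because the cumulants of the $U_k$ obey uniform bounds $|\kappa_{k,m}|\le m!\,B^m$ for a $B=B(\lambda,M_\lambda)$, the Edgeworth remainder is controlled uniformly in $h$, and the factor is $\exp(O(x^3/\sqrt n))=1+O(x^3/\sqrt n)$ on $x=o(n^{1/6})$, with the implicit constant depending only on $\lambda$, $M_\lambda$, $\sigma_0$ and $\{x_n\}$.

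For part (b), write $N=\#J_n(s,t)$ and observe that $\mathbb{H}_n(s,t)=\bigl(\sum_{i\in J_n(s,t)}Z_i\bigr)/\sigma_n(s,t)$ is exactly of the form in (a): the $\{Z_i=\mu_P(x_i)-Y_i\}_{i\in J_n(s,t)}$ are $N$ independent mean-zero variables, standardized by $\sigma_n(s,t)=\bigl(\sum_{i\in J_n(s,t)}\sigma_{1,P}(x_i)^2\bigr)^{1/2}=\bigl(\mathrm{Var}\sum_{i\in J_n(s,t)}Z_i\bigr)^{1/2}$. The hypotheses of (a) hold for these variables uniformly over $P\in\mathcal{P}$ and $(s,t)\in B_n$ by Assumption~\ref{assump_a1}: the exponential moment survives the mean shift since $E e^{\lambda|Y_{i,j}|}\le M_\lambda$ forces $|\mu_P(x_i)|$ to be uniformly bounded, and $\sigma_{1,P}$ is bounded below uniformly by positivity and (uniform) continuity on the compact set $\mathcal{X}$. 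Since $t\ge t_n\mathbf{1}$ gives $\mathrm{vol}(t)\ge t_n^{d_X}$, part (b) of Assumption~\ref{assump_a2} and $\inf_{\mathcal{X}}f>0$ yield $N\ge \tfrac12(\inf_{\mathcal{X}}f)\,n t_n^{d_X}$ for large $n$, and hence $c=o(N^{1/6})$ because $c\sim(2d_X|\log t_n|)^{1/2}$ while $N^{1/6}\gtrsim (n t_n^{d_X})^{1/6}$ and $n t_n^{d_X}/|\log t_n|^4\to\infty$ by part (a) of Assumption~\ref{assump_a2}. Applying (a) with $x=c-y/c$, which lies in $[1,x_n]$ for large $n$ (any $x_n$ with $c=o(x_n)=o(n^{1/6})$ works, e.g.\ $x_n=n^{1/7}$, since $|\log t_n|=O(\log n)$), and bounding $(c-y/c)^3\le c^3$, we get
\begin{align*}
\sup_{P\in\mathcal{P}}\ \sup_{(s,t)\in B_n}\ \sup_{0<y\le\bar y}\left|\frac{P\{\mathbb{H}_n(s,t)\ge c-y/c\}}{1-\Phi(c-y/c)}-1\right|\ \le\ \frac{C\,c^3}{\sqrt{N}}\ =\ O\!\left(\frac{|\log t_n|^{3/2}}{\sqrt{n\,t_n^{d_X}}}\right)\ \longrightarrow\ 0.
\end{align*}
Finally $c-y/c\to\infty$ uniformly over the bounded $y$, so the Mills--ratio asymptotic $1-\Phi(z)\sim\psi(z)$ as $z\to\infty$ gives $P\{\mathbb{H}_n(s,t)\ge c-y/c\}\sim 1-\Phi(c-y/c)\sim\psi(c-y/c)$ uniformly over $P\in\mathcal{P}$, $(s,t)\in B_n$ and bounded positive $y$, which is the claim.

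The main obstacle is part (a): securing the relative error $Cx^3/\sqrt n$ with a constant depending \emph{only} on $\lambda$, $M_\lambda$, $\sigma_0$ (and $\{x_n\}$), so that the bound is genuinely uniform over $\mathcal{P}$ and over the full moderate-deviation range, requires careful control of the Edgeworth remainder in the tilted model uniformly in the tilt $h(x)$; once (a) is in hand, (b) is essentially bookkeeping. A secondary point to watch is that within a window $J_n(s,t)$ the $Z_i$ are independent but not identically distributed (the $x_i$ vary), which is exactly why the non-i.i.d.\ version of the moderate deviation theorem is needed.
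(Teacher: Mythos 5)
Your proposal is correct and follows essentially the same route as the paper: the paper proves (a) by following step-by-step the conjugate-distribution/moderate-deviation argument of Feller's Theorems 1 and 3 in Chapter 16 of \citet{feller_introduction_1971}, using the uniform lower variance bound to convert the $O(x^3/s_n)$ error into a $Cx^3/\sqrt{n}$ bound with a uniform constant, and then deduces (b) by exactly the same bookkeeping you describe — identifying $\mathbb{H}_n(s,t)$ as a normalized sum of independent mean-zero summands with uniform exponential-moment and variance bounds, bounding $\sigma_n^{-2}(s,t)=O((nt_n^{d_X})^{-1})=o(|\log t_n|^{-4})$ via Assumption~\ref{assump_a2}, and noting $c=O(|\log t_n|^{1/2})$. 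Your exponential-tilting/Edgeworth description is the content of Feller's argument, and citing Petrov or Saulis--Statulevi\v{c}ius instead of Feller is an inessential difference.
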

\begin{proof}
The special case of i.i.d. $U_k$ in (a) reduces to Theorem 1 in Chapter 16.6 of 
\citet{feller_introduction_1971}. Theorem 3 in Chapter 16.7 of \citet{feller_introduction_1971}
extends Theorem 1 to non-identically distributed random variables $U_k$ such that
$E(|U_k|^3)/E(U_k^2)$ are uniformly bounded, with a $O(\frac{x^3}{s_n})$ instead of a 
$\frac{Cx^3}{\sqrt{n}}$ error bound. We follow step-by-step the proof of Feller's Theorem 3, 
using the additional condition ${\rm Var}(U_k) \geq \sigma_0^2$ to obtain the 
$\frac{Cx^3}{\sqrt{n}}$ error bound in (a).

Under Assumption \ref{assump_a1}, $\mathbb{H}_n(s,t)=\frac{S^*}{\sigma_n(s,t)}$, where $S^*$ is
a sum of independent mean 0 random variables satisfying (i) and (ii)
with the bounds uniform over $P\in\mathcal{P}$
and ${\rm Var}(S^*) = 
\sigma_n^2(s,t)$. Hence by (a),
 \begin{equation} \label{3.2}
\frac{P \{ \mathbb{H}_n(s,t) \geq c-y/c \}}{1-\Phi(c-y/c)} = 1+O \Big( \frac{(c-y/c)^3}
{\sigma_n(s,t)} \Big) \mbox{ as } c-y/c \rightarrow \infty  %
\end{equation}
uniformly over $P\in\mathcal{P}$ and $(s,t) \in B_n$.
Since vol($I(s,t)) \geq t_n^d$ for $(s,t) \in B_n$,
by Assumption \ref{assump_a2}(b), 
$$\liminf_{n \rightarrow \infty} [\inf_{(s,t) \in B_n} 
\# J_n(s,t)]/(nt_n^d) \geq \inf_{x \in \cal X} f(x) > 0.
$$
Hence by Assumption \ref{assump_a1}(b) and \ref{assump_a2}(a), $\sigma_n^{-2}(s,t) = O((nt_n^d)^{-1}) = 
o(| \log t_n|^{-4})$ uniformly over $(s,t) \in B_n$ and $P\in\mathcal{P}$. Since
$c = O(| \log t_n|^{1/2})$, (b) follows from (\ref{3.2}). %
\end{proof}

\medskip

Let $\rho_c(-u,v,-u_1,v_1) = {\rm Cov}(\mathbb{X}_c(-u,v),\mathbb{X}_c(-u_1,v_1))$
(we suppress the dependence of $\rho_c$ on $P$ in the notation)
and let                
$\{ W_{-u,v}(q,r): (q,r) \in [0,\infty)^{2d} \}$ be a 
continuous Gaussian random field satisfying
\begin{eqnarray} \label{Wuv}
& & W_{-u,v}(0)=0, \quad E[W_{-u,v}(q,r)] = -\sum_{j=1}^d \frac{q_j+r_j}
{4(v_j-u_j)}, \\
& & {\rm Cov}(W_{-u,v}(q,r), W_{-u,v}(\alpha,\beta)) = \sum_{j=1}^d
\frac{\min(q_j,\alpha_j)+\min(r_j,\beta_j)}{2(v_j-u_j)}. \nonumber
\end{eqnarray}

\begin{lemma} \label{lem3}

{\rm (a)} {\rm [(C) of \citet{chan_maxima_2006}]} 
$1-\rho_c(-u,v,-u+\delta_u,v+\delta_v) \sim \sum_{j=1}^d \frac{\delta_{u,j}+\delta_{v,j}}
{2(v_j-u_j)}$ uniformly over $(-u,v) \in D_L$
and $P\in\mathcal{P}$
and compact sets of $(\delta_u,\delta_v)/
\Delta_c > 0$. 

\smallskip
{\rm (b)} {\rm [(A2) of \citet{chan_maxima_2006}]} 
For any $a>0$ and positive integer $m$, as $c \rightarrow \infty$,
\begin{eqnarray*}
& & \{ c[\mX_c(-u+ak_u \Delta_c,v+ak_v \Delta_c)-\mX_c(-u,v)]: 0\leq (k_u,k_v) < m {\bf 1} \}
| \mX_c(-u,v) = c-y/c \cr
& \stackrel{d}{\to}& \{ W_{-u,v}(ak_u,ak_v): 0\leq
(k_u, k_v) < m {\bf 1} \},
\end{eqnarray*}
uniformly over $(-u,v) \in D_L$
and $P\in\mathcal{P}$
and positive bounded values of $y$.

\smallskip
{\rm (c)} $H(-u,v)\equiv\lim_{K \rightarrow \infty} \int_0^\infty e^y P \{ \sup_{{\bf 0} \leq (q,r) \leq K {\bf 1}}
W_{-u,v}(q,r) \geq y \} dy$ has the closed-form given in {\rm (\ref{Huv})}.
\end{lemma}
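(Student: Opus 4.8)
The plan is to verify conditions (C) and (A2) of \citet{chan_maxima_2006} for the rescaled field $\mathbb{X}_c$ on the domain $D_L$, uniformly over $P\in\mathcal{P}$, and then to identify the functional $H$. The device driving all three parts is the monotonicity built into the change of variables (\ref{Xc}): enlarging $(-u,v)$ to $(-u+\delta_u,v+\delta_v)$ with $\delta_u,\delta_v\ge 0$ enlarges the rectangle $I(ut_n,(v-u)t_n)$, so the corresponding index sets $J_n$ are nested. Throughout, the increments that matter are of order $\Delta_c=(2c^2)^{-1}=O(|\log t_n|^{-1})\to 0$, and on $D_L$ one has $v_j-u_j\in[1,L]$; these two facts supply all the uniformity.

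\emph{Part (a): condition (C).} With $(s,t)=(ut_n,(v-u)t_n)$ and $(s',t')$ the rectangle at $(-u+\delta_u,v+\delta_v)$, $\mathbb{X}_c(-u,v)$ has unit variance and, by nestedness of $J_n$ together with ${\rm Var}(Z_i)={\rm Var}(Y_i)=\sigma^2(x_i)$,
$$\rho_c(-u,v,-u+\delta_u,v+\delta_v)=\frac{{\rm Cov}\big(\sum_{i\in J_n(s,t)}Z_i,\ \sum_{i\in J_n(s',t')}Z_i\big)}{\sigma_n(s,t)\sigma_n(s',t')}=\frac{\sigma_n(s,t)}{\sigma_n(s',t')}$$
holds exactly; hence $1-\rho_c\sim\tfrac12\,\sigma_n^{-2}(s,t)\sum_{i\in J_n(s',t')\setminus J_n(s,t)}\sigma^2(x_i)$. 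Since $I(s',t')$ shrinks to a point, $\sigma^2$ and $f$ are nearly constant over it (Assumptions \ref{assump_a1}(b) and \ref{assump_a2}); covering the ``L-shaped'' set $I(s',t')\setminus I(s,t)$ by $O(d)$ rectangular slabs, each of volume of order $\Delta_c t_n^d\gg\delta t_n^d/|\log t_n|^2$, and applying the counting estimate of Assumption \ref{assump_a2}(b) to each slab and to $I(s,t)$ (of volume $\ge t_n^d$, since $v_j-u_j\ge 1$), the common factor $\sigma^2 f$ cancels and $1-\rho_c\sim\tfrac12[\text{vol}(v-u+\delta_u+\delta_v)-\text{vol}(v-u)]/\text{vol}(v-u)$. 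A first-order expansion of $\text{vol}(\cdot)=\prod_j(\cdot)_j$ in the small $\delta$'s then gives $\sum_j(\delta_{u,j}+\delta_{v,j})/[2(v_j-u_j)]$, uniformly over $D_L$, over $P$, and over the prescribed compact set of values of $(\delta_u,\delta_v)/\Delta_c$.

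\emph{Part (b): condition (A2).} Writing $(s'',t'')$ for the rectangle at $(-u+ak_u\Delta_c,v+ak_v\Delta_c)$, split
$$c[\mathbb{X}_c(-u+ak_u\Delta_c,v+ak_v\Delta_c)-\mathbb{X}_c(-u,v)]=c\Big(\sum_{i\in J_n(s,t)}Z_i\Big)\Big(\tfrac{1}{\sigma_n(s'',t'')}-\tfrac{1}{\sigma_n(s,t)}\Big)+\frac{c\sum_{i\in J_n(s'',t'')\setminus J_n(s,t)}Z_i}{\sigma_n(s'',t'')}.$$
On $\{\mathbb{X}_c(-u,v)=c-y/c\}$ the first term is deterministic, equal to $c(c-y/c)(\rho_c-1)=-(c^2-y)(1-\rho_c)$, which by part (a) and $c^2\Delta_c=\tfrac12$ tends to $-\sum_j a(k_{u,j}+k_{v,j})/[4(v_j-u_j)]=E[W_{-u,v}(ak_u,ak_v)]$. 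The second term involves only $\{Z_i:i\in J_n(s'',t'')\setminus J_n(s,t)\}$, which is independent of $\{Z_i:i\in J_n(s,t)\}$ and hence of the conditioning event, so its conditional law is its unconditional one; joint asymptotic normality over the finite lattice $\{{\bf 0}\le(k_u,k_v)<m{\bf 1}\}$ then follows by the Cram\'er--Wold device, since any linear combination is a triangular-array sum of independent mean-zero terms with bounded coefficients, whose number diverges (being of order $nt_n^d|\log t_n|^{-1}\to\infty$ by Assumption \ref{assump_a2}(a)) and for which the Lindeberg condition is immediate from the exponential-moment bound in Assumption \ref{assump_a1}(a). The limiting covariance matches (\ref{Wuv}) because, for two lattice points, $(J_n(s'',t'')^{(1)}\setminus J_n(s,t))\cap(J_n(s'',t'')^{(2)}\setminus J_n(s,t))$ consists of the $x_i$ lying in the intersection of the two enlarged rectangles but outside $I(s,t)$, whose volume --- by the slab computation of part (a) --- produces the $\min(q_j,\alpha_j)+\min(r_j,\beta_j)$ structure after the normalization $c^2\Delta_c=\tfrac12$. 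Uniformity in $P$, in $(-u,v)\in D_L$, and in bounded $y$ holds because every modulus of continuity, moment bound, and counting rate invoked is uniform. I expect this conditional central limit theorem --- securing the Gaussian limit together with its covariance, uniformly, while tracking the nested and overlapping fresh-observation sets and the conditioning-event densities --- to be the main obstacle.

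\emph{Part (c): the form of $H$.} By (\ref{Wuv}) the mean and covariance of $W_{-u,v}(q,r)$ are sums over $j$ of terms depending on $q_j$ alone or on $r_j$ alone, so $W_{-u,v}(q,r)=\sum_{j=1}^d[B^u_j(q_j)+B^v_j(r_j)]$ with the $2d$ summands independent, each $B^u_j$ a Brownian motion in $q_j\ge 0$ with variance rate $1/[2(v_j-u_j)]$ and drift rate $-1/[4(v_j-u_j)]$ (and $B^v_j$ likewise in $r_j$). Since the field, its mean, and the index box all factor coordinatewise, the functional $H$ of \citet{chan_maxima_2006} factors into a product of $2d$ one-dimensional constants, and matching our explicit field to that functional is routine once this is observed. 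Each component has variance rate exactly twice its negative drift rate, so the linear time change $q_j\mapsto q_j/[4(v_j-u_j)]$ turns it into the drifted Brownian motion $\sqrt{2}\,\tilde{B}(\cdot)-|\cdot|$ that carries the standard Pickands constant $\mathcal{H}_1=1$; under this time change the per-unit contribution is rescaled by the factor $1/[4(v_j-u_j)]$. Multiplying the $2d$ factors gives $H(-u,v)=\prod_{j=1}^d\big(1/[4(v_j-u_j)]\big)^2=4^{-2d}\,\text{vol}(v-u)^{-2}$, which is (\ref{Huv}); combined with (\ref{c2}) and (\ref{hexQ}) this delivers the limit $(1-L^{-1})^d e^{-\zeta}$ for $\sum_{w\in Q_n}q_{w,P}$ in (\ref{qw}).
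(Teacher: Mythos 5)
Your proof is correct and follows essentially the same route as the paper: in (a) and (b) you use the same nesting identity $\rho_c(-u,v,-u+\delta_u,v+\delta_v)=\sigma_n(z_0)/\sigma_n(z_\delta)$ exploited through Assumption \ref{assump_a2}(b), and the same decomposition of the conditional increment into a deterministic piece (from the variance ratio, evaluated on the conditioning event) plus the independent sum over the fresh observations $J_n(z_\delta)\setminus J_n(z_0)$, with the covariance structure emerging from the intersection-rectangle volumes. For (c) the paper simply cites Lemma 2.3 of \citet{chan_maxima_2006}; you instead explicitly unpack it via the coordinatewise factorization of $W_{-u,v}$ into $2d$ independent drifted Brownian motions and the linear time change to the canonical Pickands form $\sqrt{2}\tilde B(\cdot)-|\cdot|$, arriving at the same constant $4^{-2d}\mathrm{vol}(v-u)^{-2}$ -- a harmless expansion of the paper's citation rather than a different argument.
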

\begin{proof}
Let $z_0 = (s,t)$, where $s=ut_n$, $t=(v-u)t_n$ and $z_{\delta} = (s-\delta_u t_n,
t+(\delta_v+\delta_u)t_n)$ for some $\delta_u, \delta_v \geq 0$. Then 
\begin{eqnarray} \label{rhoc}
& & \rho_c(-u,v,-u+\delta_u,v+\delta_v) = \sigma_n(z_0)/\sigma_n(z_\delta) \\ 
& = & \Big(1+\frac{\sigma_n^2(z_\delta)-\sigma_n^2(z_0)}{\sigma_n^2(z_0)} \Big)^{-1/2} = 
1-\frac{\sigma_n^2(z_\delta)-\sigma_n^2(z_0)}{2 \sigma_n^2(z_0)} + O \Big(
\frac{\sigma_n^2(z_\delta)-\sigma_n^2(z_0)}{\sigma_n^2(z_0)} \Big)^2. \nonumber
\end{eqnarray}
Since $\Delta_c \sim (4 | \log t_n |)^{-1}$, by Assumption \ref{assump_a2}, 
\begin{equation} \label{sig0}
\sigma_n^2(z_0) \sim n \sigma^2(s) f(s) \text{vol}(t), \quad [\sigma_n^2(z_\delta)-\sigma_n^2(z_0)] 
\sim n \sigma^2(s) f(s) \text{vol}(t) \sum_{j=1}^d \frac{\delta_{u,j}+\delta_{v,j}}
{v_j-u_j},
\end{equation}
and (a) follows from substituting (\ref{sig0}) into (\ref{rhoc}). 

Let $a>0$ and let $\delta_u=ak_u \Delta_c$, $\delta_v=ak_v \Delta_c$. Then
\begin{equation} \label{cXc}
c[\mX_c(-u+\delta_u,v+\delta_v)-\mX_c(-u,v)] = c \Big\{ \frac{\sum_{i \in J_n(z_\delta) \setminus
J_n(z_0)} Z_i}{\sigma_n(z_\delta)} + \mX_c(-u,v) \Big[ \frac{\sigma_n(z_0)}{\sigma_n(z_\delta)}
-1 \Big] \Big\}.
\end{equation}
We note here that as $\delta_u, \delta_v \geq 0$, so $J_n(z_\delta) \supseteq J_n(z_0)$.
By (\ref{Wuv})--(\ref{sig0}), conditioned on $\mX_c(-u,v)=c-y/c$ and noting that 
$c^2 \Delta_c = \frac{1}{2}$, 
\begin{eqnarray} \label{cX}
c \mX_c(-u,v) \Big[ \frac{\sigma_n(z_0)}{\sigma_n(z_\delta)}-1 \Big] & \rightarrow & 
\sum_{j=1}^d \frac{\delta_{u,j}+\delta_{v,j}}{4(v_j-u_j)} = E[W_{-u,v}(ak_u,ak_v)], \\
\label{VX}
{\rm Var} \Big( \frac{c \sum_{i \in J_n(z_\delta) \setminus J_n(z_0)} Z_i}{\sigma_n(z_\delta)}
\Big) & = & \frac{c^2[\sigma_n^2(z_\delta)-\sigma_n^2(z_0)]}{\sigma_n^2(z_\delta)}
\rightarrow {\rm Var}(W_{-u,v}(ak_u,ak_v)).
\end{eqnarray}
Similarly, if $z_{\tilde \delta} = (s-\widetilde \delta_u t_n,t+(\widetilde \delta_u +
\widetilde \delta_v) t_n)$, where $\delta_u = a \widetilde k_u \Delta_c$, $\delta_v = a
\widetilde k_v \Delta_c$ with $\widetilde k_u, \widetilde k_v \geq 0$, then
\begin{eqnarray} \label{rCov}
{\rm Cov} \Big( \frac{c \sum_{i \in J_n(z_\delta) \setminus J_n(z_0)}
Z_i}{\sigma_n(z_\delta)}, \frac{c \sum_{i \in J_n(z_{\tilde \delta}) \setminus J_n(z_0)}
Z_i}{\sigma_n(z_{\tilde \delta})} \Big) & = & 
\frac{c^2[\sigma_n^2(z_{\min(\delta, \tilde \delta)})
-\sigma_n^2(z_0)]}{\sigma_n(z_\delta) \sigma_n(z_{\tilde \delta})} \\
& \rightarrow & {\rm Cov}(W_{-u,v}(ak_u,ak_v), W_{-u,v}(a \widetilde k_u, a \widetilde k_v)).
\nonumber
\end{eqnarray}
Since $\sum_{i \in J_n(z_\delta) \setminus J_n(z_0)} Z_i$ is independent of $\mX_c(-u,v)$ and
is asymptotically normal by Assumptions \ref{assump_a1}(b)--(c) and \ref{assump_a2}(b), (b) follows from (\ref{cX})--(\ref{rCov}).
Lastly, 
(c) is a direct consequence of Lemma 2.3 of \citet{chan_maxima_2006}. %
\end{proof}

\subsection{Proof of Lemma \ref{lem1}}\label{lem1_sec}

To deal with technicalities associated with non-rectagular edges, we 
extend the domain of $\mathbb{H}_n$ to ${\cal C} \times [t_n,1)^d$ for some ${\cal C} = 
[-C,C]^d$ by embedding
$(x_1,Y_1), (x_2,Y_2), \ldots$ as a subsequence of $(\widetilde x_1, \widetilde Y_1), (\widetilde 
x_2, \widetilde Y_2), \ldots$ with $\widetilde x_i \in [-(C+1),(C+1)]^d$.  
Hence the domain of $\mathbb{X}_c$ can be extended to $\{ (-u,v) \in t_n^{-1} {\cal C}^2: 
{\bf 1} \leq v-u \leq t_n^{-1} {\bf 1} \}$ with (C) and (A1)--(A5) 
satisfied uniformly over $\{ (-u,v) \in t_n^{-1}
{\cal C}^2: {\bf 1} \leq v-u \leq L {\bf 1} \}$ for any fixed $L > 1$. 

\begin{proof}[Proof of Lemma \ref{lem1}(c)] 
Let $\widetilde 
Q_n = \{ w \in (|\log t_n| \mathbb{Z})^d: I(w,|\log
t_n|) \cap (t_n^{-1} \cX) \neq \emptyset \}$. Since $\cX$ is Jordan measurable, $\# Q_n \sim
\# \widetilde Q_n$. Hence by (\ref{Psup}) and (\ref{qw}), 
$$p_3 \leq \sup_{P \in \mathcal{P}} \sum_{w \in \tilde Q_n \setminus Q_n} q_{w,P} 
= o \Big( \sup_{P \in \mathcal{P}} \sum_{w \in Q_n} q_{w,P} \Big) = o(1). %
$$
\end{proof}

\begin{proof}[Proof of Lemma \ref{lem1}(b)]
For $n$ large enough such that $|\log t_n|>L$,
$u \in I(w_1,|\log t_n|)$, $v \in I(w_2,|\log t_n|)$, $v-u \leq L {\bf 1}$ can occur only when
$w_1$, $w_2$ are neighboring cubes. Note that each cube has not more than $3^d-1$ neighbors. When 
$w_1$ and $w_2$ neighboring cubes, define
$$D_{w_1,w_2}^* = \{ (-u,v): u \in I(w_1,|\log t_n|), v \in I(w_2,|\log t_n|), {\bf 1} \leq v-u \leq
L {\bf 1} \}.
$$
Since vol($D_{w_1,w_2}^*) = o(| \log t_n |^{-d})$ and $H(-u,v) \leq 1$, by Corollary 2.7 of Chan and
Lai (2006), 
\begin{eqnarray*}
P \{ \sup_{(-u,v) \in D_{w_1,w_2}^*} \mX_c(-u,v) \geq c \} & \sim & \psi(c) \Delta_c^{-2d}
\int_{D_{w_1,w_2}^*} H(-u,v) d(-u,v) \cr
& = & o(\psi(c) \Delta_c^{-2d} | \log t_n |^{-d})
\end{eqnarray*}
uniformly over $P\in\mathcal{P}$ and over neighboring $w_1$ and $w_2$.
Hence by (\ref{c2}) and (\ref{hexQ}), $p_2 = o((\# Q_n) \psi(c) \Delta_c^{-2d} | \log t_n |^{-d}) =
o(1)$. %
\end{proof}

\begin{proof}[Proof of Lemma \ref{lem1}(a)]
For each $\ell \in \mathbb{Z}^d$, $\ell \neq 0$ with
$0 \leq \ell \leq [\log_L (2t_n^{-1} C)] {\bf 1}$, define
$$\mX_{c,\ell}(-u,v) = \mathbb{H}_n(ut_n L^\ell,(v-u)t_n L^\ell) \mbox{ for }
u,t \in (t_n L^\ell)^{-1} {\cal C} \mbox{ with } (v-u) \geq {\bf 1}.
$$
We use here the convention $a {\cal C} = \prod_{j=1}^d [-a_j C, a_j C]$. To avoid double
counting, we restrict the domain of $\mX_{c,\ell}$ to
$$D_{\ell} \equiv  \{ (-u,v) \in (t_n L^\ell)^{-1} {\cal C}^2: {\bf 1} \leq v-u \leq L {\bf 1} \}.
$$
By Corollary 2.7 of \citet{chan_maxima_2006}, 
\begin{equation} \label{Dl}
P \{ \sup_{(-u,v) \in D_\ell} \mX_c(-u,v) \geq c \} \sim \psi(c) \Delta_c^{-2d} \int_{D_\ell}
H(-u,v) d(-u,v) \mbox{ uniformly over } \ell \mbox{ and } P\in\mathcal{P}, 
\end{equation}
with $H(-u,v) = O(1)$ uniformly over $\ell$ and $D_\ell$. By (\ref{c2}), 
$\psi(c) \Delta_c^{-2d} \text{vol}(t_n^{-1} {\cal C}) = O(1)$ and so
\begin{equation} \label{Dl2}
\psi(c) \Delta_c^{-2d} 
\int_{D_\ell} H(-u,v) d(-u,v) = O(|L^\ell|^{-1}) \mbox{ uniformly over } \ell.
\end{equation}
Hence by (\ref{Dl}) and (\ref{Dl2}), 
$$p_1 \leq \sup_{P\in\mathcal{P}}
\sum_{0 \leq \ell \leq [\log_L (2t_n^{-1} C)] {\bf 1}, \ell \neq 0}
P \{ \sup_{(-u,v) \in D_\ell} \mX_c(-u,v) \geq c \} = O \Big( 
 \sum_{0 \leq \ell \leq [\log_L (2t_n^{-1} C)] {\bf 1}, \ell \neq 0} |L^\ell|^{-1} \Big).
$$
The sum above within $O(\cdot)$ is bounded by $(\sum_{k=0}^\infty L^{-k})^d-1 = 
(1-L^{-1})^{-d}-1$ which can be made arbitrarily small by choosing $L$ large enough. %
\end{proof}

\subsection{Extension to Multivariate $Y$}\label{multi_y_sec}

Let $E_{w,j} = \{ \sup_{(-u,v) \in D_w^*} \mX_{c,j} \geq c \}$, where $c = \frac{b_n+
\min_{1 \leq j \leq d_Y} \zeta_j \sqrt{n}}{a_n}$, for given $\zeta_1, \ldots,
\zeta_{d_Y}$, see (\ref{Dwstar}). To extend
the proof of Theorem \ref{asym_dist_thm} to $d_Y>1$, it suffices to prove the following:

\begin{lemma} \label{lem4}
$p_4 \equiv \sup_{P \in\mathcal{P}} \sum_{w \in Q_n} \sum_{j_1 \neq j_2} P(E_{w,j_1} \cap E_{w,j_2}) \rightarrow 0$.
\end{lemma}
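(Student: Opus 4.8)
The plan is to bound the probability $P(E_{w,j_1}\cap E_{w,j_2})$ for a fixed pair $j_1\ne j_2$ and a fixed cube $w$, and then sum over the $\#Q_n$ cubes and the $d_Y(d_Y-1)$ ordered pairs. The key structural fact to exploit is Assumption \ref{assump_a1}(c): the conditional correlation between the $j_1$ and $j_2$ coordinates is bounded away from $1$ by $\rho<1$, so the two rescaled sum-processes $\mX_{c,j_1}$ and $\mX_{c,j_2}$ are ``jointly subcritical'' in the large-deviations sense. Concretely, if one of the processes exceeds $c$ at some point $(-u,v)$ and the other exceeds $c$ at some point $(-u_1,v_1)$, then along any fixed rectangle containing both index rectangles, the pair of partial sums forms a bivariate sum whose Gaussian approximation has covariance matrix with off-diagonal at most $\rho$, so the event forces a large deviation of size roughly $2c^2/(1+\rho) > c^2$. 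This is the analogue, in the multivariate setting, of the independence-of-cubes argument used for a single coordinate.

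The steps, in order, are as follows. First, I would reduce to a single representative rectangle per process by covering $D_w^*$ with $O(|\log t_n|^{2d})$ sub-rectangles of side comparable to $\Delta_c=(2c^2)^{-1}$ (the natural discretization scale for $\mX_c$, as used in the verification of conditions (A2)--(A5)), so that on each sub-rectangle the process oscillates by $O(1/c)$ with high probability; this replaces the supremum event by a union over a polynomial-in-$|\log t_n|$ grid of pointwise events. Second, for each pair of grid points $(-u,v)$, $(-u_1,v_1)$ I would apply the bivariate moderate-deviations estimate — the bivariate analogue of Lemma \ref{lem2}(a), which follows by the same Feller-type argument applied to the sum $\alpha S^*_{j_1}+\beta S^*_{j_2}$ for a well-chosen direction $(\alpha,\beta)$ — to get $P(\mX_{c,j_1}(-u,v)\ge c,\ \mX_{c,j_2}(-u_1,v_1)\ge c)\le \exp(-\frac{c^2}{1+\rho}(1+o(1)))$ uniformly over $P\in\mathcal{P}$. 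Third, I would sum: there are $\#Q_n\cdot d_Y^2\cdot O(|\log t_n|^{4d})$ terms, $\#Q_n\sim \text{vol}(\cX)/(t_n|\log t_n|)^d$, and $c^2\sim 2d|\log t_n|$, so the total is of order $t_n^{-d}|\log t_n|^{O(1)}\exp(-\frac{2d|\log t_n|}{1+\rho})= t_n^{-d}|\log t_n|^{O(1)} t_n^{2d/(1+\rho)}$. Since $\rho<1$ gives $2/(1+\rho)>1$, the exponent $-d+2d/(1+\rho)=d(1-\rho)/(1+\rho)>0$ on $t_n$ (recall $t_n\to 0$, so $t_n^{\text{positive}}\to 0$), and the polylogarithmic factors do not overturn this power gain; hence $p_4\to 0$.

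The main obstacle I expect is the second step: obtaining a \emph{bivariate} moderate-deviations bound that is uniform over $P\in\mathcal{P}$, over all pairs of index rectangles, and sharp enough in the exponent (i.e., with the correct constant $\frac{1}{1+\rho}$ rather than something weaker). The subtlety is that the two index rectangles $J_n(s,t)$ and $J_n(s_1,t_1)$ overlap in a set of unknown size, so the relevant bivariate sum decomposes into a common part (contributing correlation) and two disjoint parts (contributing independence), and one must check that the worst case — maximal overlap — still yields an effective correlation at most $\rho$ after normalization, uniformly. One clean way around this is to bound $P(E_{w,j_1}\cap E_{w,j_2})$ by first conditioning on the $j_1$-process and using that, given any value of $\mX_{c,j_1}$, the conditional distribution of the $j_2$-partial-sums is still a sum of independent variables with variances bounded below and with conditional mean shifted by at most $\rho$ times the conditioning value (a linear-projection bound using Assumption \ref{assump_a1}(c)), so that $\{\mX_{c,j_2}\ge c\}$ still requires a deviation of order $(1-\rho)c$ beyond its shifted mean; this routes the estimate through the already-established univariate Lemma \ref{lem2}(a) and Lemma \ref{lem2}(b) applied to the conditional law, avoiding a genuinely new bivariate large-deviation theorem. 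The remaining bookkeeping — discretization error, uniformity in $P$, and the sum over cubes — is then routine given the estimates already assembled in Sections \ref{c_a1_a2_sec} and \ref{lem1_sec}.
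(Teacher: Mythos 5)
Your proposal is essentially the paper's argument: discretize $D^*_w$ at scale $\Delta_c$, bound the joint pointwise exceedance via a moderate-deviations estimate that gains the factor $1/(1+\rho)$ in the exponent from Assumption~\ref{assump_a1}(c), then sum over grid points and cubes, noting that $t_n^{d(1-\rho)/(1+\rho)}$ beats the polylogarithmic counting factors. Two remarks on the details. First, the ``obstacle'' you anticipate about overlapping index rectangles is not really there: the paper takes precisely the linear-combination route you mention first (with $\alpha=\beta=1$), bounding $P(\mX_{c,j_1}(z_1)>c',\ \mX_{c,j_2}(z_2)>c')\le P(\mX_{c,j_1}(z_1)+\mX_{c,j_2}(z_2)>2c')$, and the sum $S(z_1,z_2)=\mX_{c,j_1}(z_1)+\mX_{c,j_2}(z_2)$ has variance between $2(1-\rho)$ and $2(1+\rho)$ by Cauchy--Schwarz applied observation by observation, \emph{regardless} of the size of $J_n(z_1)\cap J_n(z_2)$; a single application of Lemma~\ref{lem2}(a) to $S$ then gives $1-\Phi(\kappa(c-\theta/c))$ with $\kappa=(2/(1+\rho))^{1/2}>1$. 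The conditioning alternative you sketch as a workaround is therefore unnecessary (and, as you half-suspect, would need additional work since the conditional law of one coordinate given another is not a sum of independent variables in general outside the Gaussian case). Second, your discretization step needs to be quantified more carefully than ``oscillates by $O(1/c)$ with high probability'': the paper controls the event that the supremum over a $\Delta_c$-cube exceeds $c$ while the grid value is below $c-\theta/c$ via the Chan--Lai tail expansion $P(G_{z,j}\cap\{\mX_{c,j}(z)\le c-\theta/c\})\sim\psi(c)H_\theta(z)$ (an extension of their Theorem~2.4), and chooses $\theta$ large to make $H_\theta$ uniformly small; this is what makes the discretization error genuinely $\epsilon\,O(1)$ after summing over all $w\in Q_n$, rather than merely small on each fixed cube.
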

\begin{proof}
Fix $w$ and partition $D_w^*$ into cubes of length $\Delta_c$. More specifically, define
$K_c = \{ z \in (\Delta_c \mathbb{Z})^{2d}: I(z,\Delta_c {\bf 1}) \cap D_w^* \} \neq \emptyset$.
Let $G_{z,j} = \{ \sup_{(-u,v) \in I(z,\Delta_c {\bf 1})} \mX_{c,j}(-u,v) \geq c \}$. Then,
uniformly over $z$, $P\in\mathcal{P}$ and $1 \leq j \leq d_Y$,
\begin{eqnarray} \label{PG}
& & P(G_{z,j} \cap \{ \mX_{c,j}(z) \leq c-\theta/c \}) \sim \psi(c) H_{\theta}(z), \\
& \mbox{where} & H_{\theta}(z) = \int_\theta^\infty e^y P \{ \sup_{0 \leq w \leq {\bf 1}}
W_z(w) > y \} dy. \nonumber
\end{eqnarray}
This extends Theorem 2.4 of \citet{chan_maxima_2006} to $\theta \neq 0$, using the
same proof. Since $H_0(z) < \infty$, for any 
given $\epsilon > 0$, we can select $\theta$ large enough such that 
$H_{\theta}(z) \leq \epsilon$. In addition, by (\ref{Wuv}), 
this selection can be made to be uniform over
$z \in K_c$ and $1 \leq j \leq d_Y$. Note that
\begin{eqnarray*}
P(E_{w,j_1} \cap E_{w,j_2}) & \leq & P(\mX_{c,j_1}(z_1) > c-\theta/c, \mX_{c,j_2}(z_2)
> c-\theta/c \mbox{ for some } z_1, z_2 \in K_c) + \eta_{c,w}, \cr
\mbox{ where } \eta_{c,w} & = & \sum_{z \in K_c} [P(G_{z,j_1} \cap \{ \mX_{c,j_1}(z) \leq
c-\theta/c \}) + P(G_{z,j_2} \cap \{ \mX_{c,j_2}(z) \leq c-\theta/c \})],
\end{eqnarray*}
and with $\theta$ selected so that $H_{\theta}(z) \leq \epsilon$, it
follows from (\ref{PG}) that
$$\eta_{c,w} = \epsilon O(\psi(c)(\# K_c)) = \epsilon O(\psi(c) \Delta_c^{-2d} | \log t_n |^d).
$$
By (\ref{c2}), $\psi(c) = O(t_n^d \Delta_c^{2d})$ and hence by 
(\ref{hexQ}), $\sum_{w \in Q_n} \eta_{c,w} = \epsilon O(1)$. It remains for
us to show that for all $\theta > 0$,
\begin{equation} \label{Szz}
\sum_{z_1, z_2 \in K_n} P(\mX_{c,j_1}(z_1) > c-\theta/c, \mX_{c,j_2}(z_2) > c-\theta/c)
=o(\psi(c) \Delta_c^{-2d} | \log t_n |^d).
\end{equation}
Now by Assumption \ref{assump_a1}(d), $S(z_1,z_2)\equiv\mX_{c,j_1}(z_1) \cap \mX_{c,j_2}(z_2)$ has mean 0 and
variance lying between $2(1-\rho)$ and $2(1+\rho)$. Let $\kappa = (\frac{2}{1+\rho})^{1/2} (>1)$. By Lemma \ref{lem2}(a),
\begin{equation} \label{PS}
P \{ S(z_1,z_2) \geq 2(c-\theta/c) \} \leq \Big[1+O \Big( \frac{c^3}{\sqrt{n t_n^d}} 
\Big) \Big] [1-\Phi(\kappa(c-\theta/c))] \sim
\frac{1}{\kappa c (2 \pi)^{1/2}} e^{-\kappa^2 c^2/2+\kappa \theta}
\end{equation}
uniformly over $P\in\mathcal{P}$.
Since $\# K_c =O(\Delta_c^{-2d} | \log t_n |^d)$, it follows from (\ref{PS}) that
\begin{eqnarray*}
& & \sum_{z_1,z_2 \in K_c} P(\mX_{c,j_1}(z_1) > c-\theta/c, \mX_{c,j_2}(z_2) > c-\theta/c) \cr
& \leq & \sum_{z_1,z_2 \in K_c} P \{ S(z_1,z_2) > 2(c-\theta/c) \} = O(\psi(c) e^{-(\kappa^2-1)c^2/2}
\Delta_c^{-4d} | \log t_n |^{2d})
\end{eqnarray*}
uniformly over $P\in\mathcal{P}$
and (\ref{Szz}) holds because $| \log t_n |^d = O(c^{2d})$ and $c^{6d} e^{-(\kappa^2-1)c^2/2}
=o(1)$. %
\end{proof}

\subsection{Verification of (A3)--(A5)}\label{a3_a5_sec}

Conditions
(C), (A1) and (A2) have been verified in Section \ref{c_a1_a2_sec}. The remaining 
regularity conditions that lead to (\ref{Psup}) will be verified in Lemmas \ref{lem5} and 
\ref{lem6} below.

\begin{lemma} \label{lem5}

{\rm (a) [(A3) of \citet{chan_maxima_2006}]} Let $\gamma > 0$ and $k_u, k_v \geq 0$. 
There exists a positive function $h$ such that $\lim_{y \rightarrow \infty} h(y) = 0$ and
$$P \{ \mX_c(-u+k_u \Delta_c,v+k_v \Delta_c) > c-\gamma/c, \mX_c(-u,v) \leq c-y/c \}
\leq h(y) \psi(c) \mbox{ for all large } c,
$$
uniformly over $(-u,v) \in D_L$ and $P\in\mathcal{P}$.

\smallskip
{\rm (b) [(A5) of \citet{chan_maxima_2006}]} There exists a nonincreasing positive function $r$ on
$[0,\infty)$ such that $r(\| k \|) =O(e^{-\|k \|^p})$ for some $p>0$ such that for any $\gamma > 0$,
$$P \{ \mX_c(-u,v) > c-\gamma/c, \mX_c(-u+k_u \Delta_c, v+k_v \Delta_c) > c- \gamma/c \}
\leq \psi(c-\gamma/c) r(\| k_u,k_v \|) \mbox{ for all large } c,
$$
uniformly over $P\in\mathcal{P}$, $(-u,v), (-u+k_u \Delta_c,v+k_v \Delta_c) \in D_w^*$ and $w \in Q_n$.
\end{lemma}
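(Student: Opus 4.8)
The plan is to exploit the nested structure of the index sets $J_n(s,t)$. Fix $(-u,v)$ and set $\delta_u=k_u\Delta_c$, $\delta_v=k_v\Delta_c$; I first treat the case $k_u,k_v\ge 0$, so that, with $z_0,z_\delta$ as in the proof of Lemma \ref{lem3}, $I(z_\delta)\supseteq I(z_0)$ and hence $J_n(z_\delta)\supseteq J_n(z_0)$. Since $\sigma_n(z_\delta)\mX_c(-u+\delta_u,v+\delta_v)=\sum_{i\in J_n(z_\delta)}Z_i$ and $\sigma_n(z_0)\mX_c(-u,v)=\sum_{i\in J_n(z_0)}Z_i$, this gives the orthogonal decomposition
\[
\mX_c(-u+\delta_u,v+\delta_v)=\rho_c\,\mX_c(-u,v)+B_c,\qquad B_c:=\sigma_n(z_\delta)^{-1}\!\!\!\sum_{i\in J_n(z_\delta)\setminus J_n(z_0)}\!\!\!Z_i,
\]
where $\rho_c=\sigma_n(z_0)/\sigma_n(z_\delta)\in(0,1]$ is the correlation $\rho_c$ computed in the proof of Lemma \ref{lem3}, $B_c$ is a mean-zero sum over an index set disjoint from $J_n(z_0)$ --- hence genuinely \emph{independent} of $\mX_c(-u,v)$ --- and $\mathrm{Var}(B_c)=1-\rho_c^2$. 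From the variance asymptotics in the proof of Lemma \ref{lem3} together with Assumption \ref{assump_a2} one gets $1-\rho_c^2\asymp\tau_c^2\wedge 1$ (hence $\mathrm{sd}(B_c)\asymp\tau_c\wedge 1$), uniformly over $P\in\mathcal P$ and over $(-u,v)$ with $v-u\in[\mathbf 1,L\mathbf 1]$, where $\tau_c^2:=\Delta_c\sum_j(k_{u,j}+k_{v,j})/(v_j-u_j)$ and $\Delta_c=(2c^2)^{-1}\asymp|\log t_n|^{-1}$.

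For part (a) I would slice $\{0\le\mX_c(-u,v)\le c-y/c\}$ into the (finitely many) intervals $\{c-\tfrac{y+j+1}{c}<\mX_c(-u,v)\le c-\tfrac{y+j}{c}\}$; on the $j$th interval $0\le\rho_c\mX_c(-u,v)\le c-\tfrac{y+j}{c}$, so $\{\mX_c(-u+\delta)>c-\gamma/c\}$ forces $B_c>\tfrac{y+j-\gamma}{c}$, and by independence that contribution is at most $P(\mX_c(-u,v)>c-\tfrac{y+j+1}{c})\,P(B_c>\tfrac{y+j-\gamma}{c})$; on the complementary event $\{\mX_c(-u,v)<0\}$ the event forces $B_c>c-\gamma/c$, contributing at most $P(B_c>c-\gamma/c)=o(\psi(c))$. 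For summands within its validity window, Lemma \ref{lem2}(a) gives $P(\mX_c(-u,v)>c-\tfrac{y+j+1}{c})\lesssim\psi(c)e^{y+j+1}$, while (here $k$ is fixed so $\tau_c\to 0$, and $v_j-u_j\ge 1$) $\mathrm{sd}(B_c)\asymp c^{-1}(\sum_j(k_{u,j}+k_{v,j}))^{1/2}$ gives $P(B_c>\tfrac{y+j-\gamma}{c})\lesssim\exp\{-(y+j-\gamma)^2/C_k\}$ with $C_k:=\sum_j(k_{u,j}+k_{v,j})$ a fixed constant; summands outside the window are bounded crudely by a Chernoff estimate from Assumption \ref{assump_a1}(a) and are $o(\psi(c))$. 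Summing the resulting Gaussian-against-exponential series in $j$ yields $P(\,\cdot\,)\le\psi(c)h(y)$ with $h(y)=O(\sqrt{C_k}\,e^{C_k/4}\,e^{-(y-\gamma-C_k/2)^2/C_k})\to 0$; since all constants depend only on $y$ and the fixed $k,\gamma$, the bound is uniform in $P$ and $(-u,v)$.

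For part (b) I would slice $\{\mX_c(-u,v)>c-\gamma/c\}$ into $\{c-\tfrac{\gamma}{c}+\tfrac{j}{c}<\mX_c(-u,v)\le c-\tfrac{\gamma}{c}+\tfrac{j+1}{c}\}$, $j\ge 0$; on the $j$th piece $\{\mX_c(-u+\delta)>c-\gamma/c\}$ forces $B_c>(1-\rho_c)(c-\tfrac{\gamma}{c})-\rho_c\tfrac{j+1}{c}$, so independence and Lemma \ref{lem2}(a) give
\[
P(\,\cdot\,)\lesssim\psi(c-\gamma/c)\sum_{j\ge 0}e^{-j}\,P\!\Big(B_c>(1-\rho_c)(c-\tfrac{\gamma}{c})-\rho_c\tfrac{j+1}{c}\Big).
\]
The geometric weight makes this dominated by small $j$, where the threshold is $\asymp(1-\rho_c)c$; dividing by $\mathrm{sd}(B_c)$ gives the ratio $\asymp g(\tau_c)c$ with $g(\tau):=(\sqrt{1+\tau^2}-1)/\tau$, which is $\asymp\sqrt{C_k}$ when $\tau_c=o(1)$ (equivalently $\|k\|=o(c^2)$) and $\asymp c$ once $\tau_c\gtrsim 1$. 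Thus $P(\,\cdot\,)\le\psi(c-\gamma/c)\,r(\|(k_u,k_v)\|)$ with the bound $\lesssim e^{-c_1\|k\|}$ in the first regime and $\lesssim e^{-c_1 c^2}$ in the second; since both points lie in $D_w^*$ one has $\|k\|=O(|\log t_n|^2)=O(c^4)$, so $\sqrt{\|k\|}=O(c^2)$ and both bounds are $\le e^{-c_*\|k\|^{1/2}}$, and I would take $r(s):=e^{-c_*\sqrt s}$, which is nonincreasing with $r(s)=O(e^{-s^p})$, $p=\tfrac12$. For mixed-sign $k_u,k_v$ I would instead decompose \emph{both} $\mX_c(-u,v)$ and $\mX_c(-u+\delta)$ relative to the common sub-rectangle $I(z_0)\cap I(z_\delta)$ into a shared component plus pairwise-independent residuals and bound the joint tail by an integral of a product of two tails; the same estimates then apply, with the bound only smaller.

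The main obstacle I anticipate is making the moderate-deviation bookkeeping uniform across the whole relevant range of $\|k\|$: from bounded $k$, where $B_c$ has vanishing variance of order $c^{-2}$ and the tail arguments --- of magnitude up to order $c^2$ --- must still sit safely inside the window $[1,o(n^{1/6})]$ of Lemma \ref{lem2}(a), out to $\|k\|$ of order $|\log t_n|^2$, where $\mathrm{Var}(B_c)$ is of order one and $\rho_c$ is bounded away from $1$. Getting this right requires choosing the truncation level in the crude Chernoff bounds so that everything outside the moderate-deviation window is genuinely $o(\psi(c))$ uniformly in $P$ and $(-u,v)$, and tracking the factor $\Delta_c=(2c^2)^{-1}$ precisely enough in both regimes to pin down the exponent $p=\tfrac12$.
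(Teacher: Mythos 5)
Your proof of part (a) is essentially the paper's own argument. Both rest on the orthogonal decomposition $\mX_c(-u+\delta_u,v+\delta_v)=\rho_c\,\mX_c(-u,v)+B_c$ (valid for $k_u,k_v\ge0$ since then $J_n(z_\delta)\supseteq J_n(z_0)$ so $B_c$ is independent of $\mX_c(-u,v)$), slice the range of $\mX_c(-u,v)$, apply the moderate-deviations estimate of Lemma~\ref{lem2}(a) slice by slice, and sum a Gaussian-against-exponential series. The paper discretizes with step $\xi_c$, uses monotonicity of $P\{\mX_c(-u_1,v_1)>a\mid\mX_c(-u,v)=b\}$ in $b$, obtains an asymptotic expression on $\{c-\omega\le\mX_c(-u,v)\le c-y/c\}$, and then shows the residual $\{\mX_c(-u,v)<c-\omega\}$ is $o(\psi(c))$; you slice with step $1/c$ and bound each slice via containment plus independence, handling $\{\mX_c(-u,v)<0\}$ and out-of-window slices crudely. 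These are cosmetic variants of the same argument, and your version is fine.

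Part (b) is where you diverge, and where you should be careful. The paper's proof is essentially one line: bound $P\{\mX_c(-u,v)>c-\gamma/c,\ \mX_c(-u_1,v_1)>c-\gamma/c\}$ by $P\{\mX_c(-u,v)+\mX_c(-u_1,v_1)>2(c-\gamma/c)\}$, apply Lemma~\ref{lem2}(a) once to the sum (which is again a normalized sum of independent terms), and expand $\rho_c=\sigma_n^2(z^*)/[\sigma_n(z_0)\sigma_n(z_\delta)]$ with $z^*$ the \emph{intersection} rectangle, which gives $\rho_c=1-(4+o(1))\Delta_c g_{-u,v}(|k|)$ for \emph{arbitrary-signed} $k_u,k_v$ in one stroke. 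This yields $\psi\bigl(2(c-\gamma/c)/\sqrt{2+2\rho_c}\bigr)$ and hence $r(\tau)=\exp[-\min_{\|k\|=\tau} g_{-u,v}(|k|)/4]$, which decays at rate $e^{-c\tau}$ (any $p<1$). Your route instead ports the slicing machinery of part (a) into (b). The issue is that the clean two-term orthogonal decomposition $\mX_c(-u_1,v_1)=\rho_c\mX_c(-u,v)+B_c$ with $B_c\perp\mX_c(-u,v)$ requires $J_n(z_\delta)\supseteq J_n(z_0)$, i.e.\ $k_u,k_v\ge0$ — a restriction (A3) imposes but (A5) does not. For mixed-sign $k$ you propose a three-way decomposition around the intersection rectangle and an ``integral of a product of two tails,'' but this is exactly the part you leave as a sketch, and it is not automatic that ``the same estimates apply, with the bound only smaller'': the Laplace-type balance between the shared component's density and the two residual tails needs to be carried out, and the uniformity over $(-u,v)\in D_w^*$, $P\in\mathcal P$ and the full range $1\lesssim\|k\|\lesssim|\log t_n|^2\asymp c^4$ needs to be tracked through that computation. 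Your estimates in the $k_u,k_v\ge0$ case do give a valid $r(\tau)=O(e^{-\tau^{1/2}})$, so $p=1/2$ suffices for (A5); but the mixed-sign case is the genuine content of (b) and currently a gap. I'd recommend replacing that portion with the paper's sum-exceeding-twice-the-threshold trick, which is both shorter and sign-agnostic.
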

\begin{proof}
Let $\omega > 1$ to be specified later. By Lemma \ref{lem2}(a),
there exists $\xi_c \rightarrow 0$ such that 
$$P \{ \mX_c(-u,v) \geq c-y'/c \} = [1+O(\xi_c^2)] e^{y'} \psi(c)
$$
uniformly over $\gamma \leq y' \leq \omega c$ and $P \in \mathcal{P}$. 
Let $y_j=y+j \xi_c$, $j=0,1,\ldots$. 
Let $u_1 = u-k_u \Delta_c$ and $v_1 = v+k_v
\Delta_c$. Since $e^{\xi_c} = 1+\xi_c+O(\xi_c^2)$, 
\begin{eqnarray} \label{A3.4}
& & P \{ \mX_c(-u,v) > c-y_{j+1}/c \} - P \{ \mX_c(-u,v) > c-y_j/c \} \\
& = & [1+O(\xi_c^2)]e^{y_j+\xi_c} \psi(c) - [1+O(\xi_c^2)]e^{y_j} \psi(c) \sim 
\xi_c e^{y_j} \psi(c) \nonumber
\end{eqnarray}
uniformly over $\gamma \leq y_j \leq \omega c$ and $P \in \mathcal{P}$. 
Since $P \{ \mX_c(-u_1,v_1)>a|\mX_c(-u,v)=b \}$ increases with $b$ for
any fixed $a$, it follows from (\ref{A3.4}) that
\begin{eqnarray} \label{A3.2}
& & P \{ \mX_c(-u_1,v_1) > c-y/c, c-\omega \leq \mX_c(-u,v) < c-y/c \} \\
& \leq & \sum_{0 \leq j \leq (\omega c-y)/\xi_c} P \{ \mX_c(-u_1,v_1) > c-\gamma/c|\mX_c(-u,v)=
c-y_j/c \} \nonumber \\
& & \times [P \{ \mX_c(-u,v) > c-y_{j+1}/c \}-P \{ \mX_c(-u,v) > c-y_j/c \}] \nonumber \\
& \sim & \psi(c) \xi_c \sum_{0 \leq y_j \leq \omega c} e^{y_j} P \{ \mX_c(-u_1,v_1) > c-\gamma/c|\mX_c(-u,v)=
c-y_j/c \}. \nonumber
\end{eqnarray}

Let $z_\delta = (u_1 t_n, (v_1-u_1)t_n)$, $k=(k_u,k_v)$ and let 
\begin{equation} \label{gk}
g_{-u,v}(k) = \sum_{j=1}^d \frac{k_{u,j}+k_{v,j}}{4(v_j-u_j)} \{ = E[W_{-u,v}(k_u,
k_v)] = {\rm Var}(W_{-u,v}(k_u,k_v))/2 \},
\end{equation}
(see (\ref{Wuv})). Then by (\ref{sig0})--(\ref{VX}) with $a=1$, 
\begin{eqnarray} \label{a1}
& & P \{ c[\mX_c(-u_1,v_1)-\mX_c(-u,v)] \geq y_j-\gamma|\mX_c(-u,v) = c-y_j/c \} \\
& = & P \Big\{ \frac{\sum_{i \in J_n(z_\delta) \setminus J_n(z_0)} Z_i}{\sqrt{ \sigma_n^2
(z_\delta)-\sigma_n^2(z_0)}} \geq \frac{y_j-\gamma-c(c-y_j/c)
[\frac{\sigma_n(z_0)}{\sigma_n(z_\delta)}
-1]\sigma_n(z_\delta)}{c \sqrt{\sigma_n^2(z_\delta)-\sigma_n^2(z_0)}} \Big\} \nonumber \\
& = & P \Big\{ \frac{\sum_{i \in J_n(z_\delta) \setminus J_n(z_0)} Z_i}{\sqrt{ \sigma_n^2
(z_\delta)-\sigma_n^2(z_0)}} \geq \frac{y_j
-\gamma+g_{-u,v}(k)+o(1)}{\sqrt{2g_{-u,v}(k)+o(1)}} \Big\} \sim 
\psi \Big( \frac{y_j-\gamma+g_{-u,v}(k)}
{\sqrt{2 g_{-u,v}(k)}} \Big), \nonumber
\end{eqnarray}
with $o(1)$ uniform over $y \leq y_j \leq \omega c$ and $(-u,v) \in D_L$ and $P\in\mathcal{P}$, noting that as 
$y'=O(c)$, the relative error of 
the normal tail approximation in (\ref{a1}) is
$$O \Big( \frac{c^3}{\sqrt{\sigma_n^2(z_\delta)-\sigma_n^2(z_0)}} \Big) = O \Big( \frac{c^4}{\sqrt{n
t_n^d}} \Big) \rightarrow 0
$$
(see Assumption \ref{assump_a2}(a) and Lemma \ref{lem2}(a)). By (\ref{A3.2}) and (\ref{a1}), 
\begin{eqnarray} \label{md}
&  & P \{ \mX_c(-u_1,v_1) > c-\gamma/c, c-\omega < \mX_c(-u,v) \leq c-y/c \} \\
& \sim & 
\psi(c) \int_y^{\omega c} e^{y'} \psi \Big( \frac{y'-\gamma+g_{-u,v}(k)}{\sqrt{2g_{-u,v}(k)}}
\Big) dy'. \nonumber
\end{eqnarray}

To complete the proof of (a), it suffices to show that
\begin{equation} \label{(II)}
{\rm (II)}\equiv P \{ \mX_c(-u_1,v_1) > c-\gamma/c, \mX_c(-u,v) \leq c-\omega \} = o(\psi(c))
\end{equation}
for all $\omega$ large. By (\ref{Hn}) and (\ref{Xc}), 
\begin{equation} \label{Zisum}
\sum_{i \in J_n(z_\delta) \setminus J_n(z_0)} Z_i = \sigma_n(z_\delta)
\mX_c(-u_1,v_1)-\sigma_n(z_0) \mX_c(-u,v).
\end{equation}
Since $\sigma_n(z_\delta) \geq \sigma_n(z_0)$, by (\ref{VX}), 
(\ref{gk}) and Lemma \ref{lem2}(a) with $a=1$, 
\begin{eqnarray*}
{\rm (II)} & \leq & P \Big\{ \frac{\sum_{i \in J_n(z_\delta) \setminus J_n(z_0)} Z_i}{\sqrt{
\sigma_n^2(z_\delta)-\sigma_n^2(z_0)}} \geq \frac{\sigma_n(z_0)(\omega c-\gamma)}
{c \sqrt{\sigma_n^2(z_\delta)-\sigma_n^2(z_0)}} \Big\} = 
\Big[1+O \Big( \frac{x_n^3}{\sqrt{nt_n^d}} \Big) \Big] \psi(x_n), \cr
\mbox{ where } x_n & = & 
\frac{\omega c-\gamma}{c \sqrt{4 \Delta_c[g_{-u,v}(k)+o(1)]}} = \frac{\omega c-\gamma}
{\sqrt{2[g_{-u,v}(k)+o(1)]}},
\end{eqnarray*}
and indeed (\ref{(II)}) holds when $\omega > \sqrt{2 g_{-u,v}(k)}$.

To prove (b), we apply Lemma \ref{lem2}(a) to the right-hand side of the inequality
$$P \{ \mX_c(-u,v) > c-\gamma/c, \mX_c(-u_1,v_1) > c-\gamma/c \} \leq 
P \{ \mX_c(-u,v)+\mX_c(-u_1,v_1) > 2(c-\gamma/c) \} {\rm [\equiv\rm (III)]}. 
$$
As in the proof of Lemma \ref{lem1}(b), the relative error of the normal approximation goes to 0
due to Assumption \ref{assump_a2}(a), that is,
\begin{equation} \label{III}
{\rm (III)} \sim \psi \Big( \frac{2(c-\gamma/c)}{\sqrt{2+2 \rho_c(-u,v,-u_1,v_1)}} \Big)
\mbox{ as } c \rightarrow \infty.
\end{equation}
Note that 
in the statement of (b), the restriction $k_u, k_v \geq 0$ is removed and we have in place of
(\ref{rhoc}), 
$$\rho_c(-u,v,-u_1,v_1) = \frac{\sigma_n^2(z^*)}{\sigma_n(z_0) \sigma_n(z_\delta)},
$$
where $z^* = (-(u\vee u_1),(v \wedge v_1))$.
Since $J_n(z^*) = J_n(z_0) \cap J_n(z_\delta)$, so by expanding $\sigma_n(z^*)/\sigma_n(z_0)$
and $\sigma_n(z^*)/\sigma_n(z_\delta)$ as in (\ref{rhoc}), it follows from (\ref{sig0}) and
(\ref{gk}) with
$\delta_u = k_u \Delta_c$, $\delta_v = k_v \Delta_c$ that
\begin{eqnarray*}
\rho_c(-u,v,-u_1,v_1) & = & 1-(1+o(1)) \Big\{ \frac{\sigma_n^2(z_\delta)-\sigma_n^2(z^*)}
{2 \sigma_n^2(z^*)}+\frac{\sigma_n^2
(z_0)-\sigma_n^2(z^*)}{2 \sigma_n^2(z^*)} \Big\} \cr
& = & 1-(1+o(1)) \Big\{ \sum_{j=1}^d \frac{(\delta_{u,j})^+ + (\delta_{v,j})^+}{v_j-u_j}
+ \sum_{j=1}^d \frac{(\delta_{u,j})^- + (\delta_{v,j})^-}{v_j-u_j} \Big\} \cr
& = & 1-(4+o(1)) \Delta_c g_{-u,v}(|k|),
\end{eqnarray*}
from which it follows that 
$$\frac{2(c-\gamma/c)}{\sqrt{2+2 \rho_c(-u,v,-u_1,v_1)}} = (c-\gamma/c)[1-(2+o(1)) \Delta_c 
g_{-u,v}(|k|)]^{-1/2} \geq c+[g_{-u,v}(|k|)/2-\gamma+o(1)]/c,
$$
and (b) with $r(\tau) = \exp[-\min_{\| k \|=\tau} g_{-u,v}(|k|)/4]$
and $0 < p < 1$, follows from (\ref{III}). %
\end{proof}

\begin{lemma} \label{lem6}

{\rm (a) [Theorem 1 of \citet{wichura_inequalities_1969}]} Let $\cal A$ be a finite subset of $\mathbb{R}^d$ and let $U_i$,
$i \in {\cal A}$ be independent mean {\rm 0} random variables with variance $\sigma_i^2$. Let
$S_k = \sum_{i \leq k} U_i$ and set $s_{\cal A}^2 = 
\sum_{i \in \cal A} \sigma_i^2$, $S_{\cal A} = \sum_{i \in \cal A} U_i$. Then for any $x > 2^d
s_{\cal A}$,
\begin{equation} \label{PS}
P(\max_{k \in \mathbb{R}^d} |S_k| > x) \leq [1-(2^d s_{\cal A}/x)^2]^{-d} P(|S_{\cal A}| > 2^{-d} x).
\end{equation}

{\rm (b) [(A4) of \citet{chan_maxima_2006}]} There exists nonincreasing functions $N_a$ on 
$\mathbb{R}^+$ and positive constants $\gamma_a \rightarrow 0$ such that $N_a(\gamma_a) + 
\int_1^\infty \tau^s N_a(\gamma_a+\tau) d \tau = o(a^d)$ as $a \rightarrow 0$ for all
$s>0$, and for each $a > 0$,
\begin{equation} \label{l6}
P \{ \sup_{0 \leq (k_u,k_v) \leq a {\bf 1}} \mX_c(-u+k_u \Delta_c,v+k_v \Delta_c) > c,
\mX_c(-u,v) \leq c-\gamma/c \} \leq N_a(\gamma) \psi(c),
\end{equation}
uniformly over $(-u,v) \in D_L$ and $P\in\mathcal{P}$
for all $\gamma_a \leq \gamma \leq c$ with $c$ large.
\end{lemma}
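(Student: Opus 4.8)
Part (a) is Theorem 1 of \citet{wichura_inequalities_1969} (proved by iterating a one–dimensional Ottaviani/L\'evy-type maximal inequality over the coordinates); I would simply invoke it. For (b) the plan is to run the conditioning-and-discretization argument already used for Lemma \ref{lem5}(a), but with the single point $\mX_c(-u_1,v_1)$ replaced by the supremum of $\mX_c$ over the $2d$-dimensional cube $\{(-u+k_u\Delta_c,v+k_v\Delta_c):0\le(k_u,k_v)\le a{\bf 1}\}$, and to control that supremum by (a). Writing $z_0=(ut_n,(v-u)t_n)$ and $z_{\delta(k)}=(ut_n-k_u\Delta_c t_n,(v-u)t_n+(k_u+k_v)\Delta_c t_n)$ as in (\ref{cXc})--(\ref{Zisum}), one has
\begin{align*}
\mX_c(-u+k_u\Delta_c,v+k_v\Delta_c)=\frac{\sigma_n(z_0)}{\sigma_n(z_{\delta(k)})}\,\mX_c(-u,v)+\frac{1}{\sigma_n(z_{\delta(k)})}\sum_{i\in J_n(z_{\delta(k)})\setminus J_n(z_0)}Z_i ,
\end{align*}
with the shell sum independent of $\mX_c(-u,v)$. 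The key observation is that, reindexed by ``entry points'', the shell sums form an orthant partial-sum process of independent mean-zero summands: attach to each $i\in J_n(z_{\delta(a{\bf 1})})\setminus J_n(z_0)$ the point $\tau(i)\in(0,a]^{2d}$ with $\tau(i)_{u,j}=(u_jt_n-x_{i,j})^+/(\Delta_c t_n)$ and $\tau(i)_{v,j}=(x_{i,j}-v_jt_n)^+/(\Delta_c t_n)$, so that $J_n(z_{\delta(k)})\setminus J_n(z_0)=\{i:\tau(i)\le k\}$ and $S_k\equiv\sum_{i:\tau(i)\le k}Z_i$ is an orthant partial sum whose maximum over the cube is attained among finitely many $k$.

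Since $\sigma_n(z_0)/\sigma_n(z_{\delta(k)})\le1$, on $\{\mX_c(-u,v)=c-y/c\}$ the event $\{\sup_k\mX_c(-u+k_u\Delta_c,v+k_v\Delta_c)>c\}$ forces $\sup_kS_k\ge(1-o(1))(y/c)\sigma_n(z_0)$, so by (a) in dimension $2d$,
\begin{align*}
P\Big(\sup_kS_k\ge(1-o(1))\tfrac{y}{c}\sigma_n(z_0)\Big)\le\Big[1-\Big(\tfrac{2^{2d}s}{x}\Big)^2\Big]^{-2d}P\big(|S_{a{\bf 1}}|\ge2^{-2d}x\big),
\end{align*}
with $x=(1-o(1))(y/c)\sigma_n(z_0)$ and $s^2=\mathrm{Var}(S_{a{\bf 1}})=\sigma_n^2(z_{\delta(a{\bf 1})})-\sigma_n^2(z_0)\sim4\Delta_c g_a\sigma_n^2(z_0)$, where $g_a\equiv\sup_{(-u,v)\in D_L}g_{-u,v}(a{\bf 1})\asymp a$ by (\ref{sig0})--(\ref{gk}) and $v-u\in[{\bf 1},L{\bf 1}]$. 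For $a$ small, $2^{2d}s/x\asymp2^{2d}\sqrt{2g_a}/y\le2^{2d}\sqrt{2g_a}/\gamma_a\asymp a^{1/6}$ is below $1/2$ uniformly over $(-u,v)\in D_L$ and $y\ge\gamma_a$, so the Wichura constant stays bounded, and $P(|S_{a{\bf 1}}|\ge2^{-2d}x)\lesssim\exp(-c_0y^2/g_a)$ by the moderate deviation expansion of Lemma \ref{lem2}(a), valid since $y\le\omega c=O(|\log t_n|^{1/2})$ and $\#(J_n(z_{\delta(a{\bf 1})})\setminus J_n(z_0))\asymp ant_n^{d}$ with $nt_n^{d}/|\log t_n|^4\to\infty$.

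I would then repeat the proof of Lemma \ref{lem5}(a) verbatim with this bound in place of (\ref{a1}): partition $\{\mX_c(-u,v)\le c-\gamma/c\}$ into the slabs $\{\mX_c(-u,v)\in[c-y_{j+1}/c,c-y_j/c)\}$, $y_j=\gamma+j\xi_c$, together with $\{\mX_c(-u,v)\le c-\omega\}$; monotonicity in $b$ of $P(\sup_k\mX_c(-u+k_u\Delta_c,v+k_v\Delta_c)>c\mid\mX_c(-u,v)=b)$ together with (\ref{A3.4}) makes the slab contributions sum, as in (\ref{A3.2})--(\ref{md}), to $\lesssim\psi(c)\int_\gamma^{\omega c}e^y\exp(-c_0y^2/g_a)\,dy$, while the $\{\mX_c(-u,v)\le c-\omega\}$ contribution is $o(\psi(c))$ for $\omega$ large exactly as for (\ref{(II)}). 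This gives (\ref{l6}) with $N_a(\gamma)=C\int_\gamma^\infty e^y\exp(-c_0y^2/g_a)\,dy$, which is positive and nonincreasing; completing the square yields $N_a(\gamma)\le C'\sqrt{g_a}\,[1-\Phi(\sqrt{2c_0/g_a}\,(\gamma-O(g_a)))]$. Taking $\gamma_a=a^{1/3}$, so $\gamma_a\to0$ and $\sqrt{2c_0/g_a}\,\gamma_a\asymp a^{-1/6}\to\infty$, gives $N_a(\gamma_a)=o(a^d)$ and, from $N_a(\gamma_a+\tau)\lesssim\sqrt{g_a}\exp(-c_0\tau^2/g_a)$ for $\tau\ge1$, also $\int_1^\infty\tau^sN_a(\gamma_a+\tau)\,d\tau=o(a^d)$ for every $s>0$ — the properties (A4) requires.

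The main obstacle is the step that applies (a): one has to spot the entry-point reindexing that turns the shell sums into a genuine orthant partial-sum process, so that Theorem 1 of \citet{wichura_inequalities_1969} applies directly with no slab-by-slab decomposition, and then verify that the multiplicative constant $[1-(2^{2d}s/x)^2]^{-2d}$ it produces stays bounded for $a$ small, uniformly over $(-u,v)\in D_L$, $P\in\mathcal P$ and $y\ge\gamma_a$. Everything downstream is the same discretization bookkeeping as in Lemma \ref{lem5}(a).
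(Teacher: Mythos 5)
Your proposal is correct and, for the key step, takes a genuinely different route from the paper. The paper bounds the supremum of the shell sums $\sum_{i\in J_n(z_{\delta(k)})\setminus J_n(z_0)}Z_i$ by first partitioning the shell into $3^d-1$ orthant pieces $\mathcal{A}_b$ indexed by $b\in\{-1,0,1\}^d\setminus\{0\}$, bounding the supremum by $\sum_b\max_k S_{k,b}$, and then applying the $d$-dimensional Wichura inequality to each piece (so that the threshold $\sigma_n(z_0)y/c$ is split across $3^d$ pieces, producing the $6^{d+1/2}\sqrt{da}$ that appears in the paper's $N_a$). You instead reindex each shell observation by its ``entry point'' $\tau(i)\in[0,a]^{2d}$, which does turn the shell sums into a single $2d$-dimensional orthant partial-sum process: the identification $\{i:\tau(i)\le k,\ i\notin J_n(z_0)\}=J_n(z_{\delta(k)})\setminus J_n(z_0)$ holds because for each coordinate $j$ at most one of $\tau(i)_{u,j},\tau(i)_{v,j}$ is positive, and Wichura's theorem applied once in dimension $2d$ then gives a comparable bound with $2^{-2d}$ in place of $6^{-d}$ and no union bound. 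The check that $2^{2d}s/x\lesssim\sqrt{g_a}/\gamma_a\asymp a^{1/6}$ uniformly over $D_L$, $\mathcal{P}$ and $y\ge\gamma_a$ mirrors the paper's verification $x/s_{\mathcal A}\ge2^{d+1/2}$. The downstream bookkeeping — the slab decomposition $y_j=\gamma+j\xi_c$, the $\omega$-tail, and the choice $\gamma_a=a^{1/3}$ verifying $N_a(\gamma_a)+\int_1^\infty\tau^sN_a(\gamma_a+\tau)\,d\tau=o(a^d)$ — is identical. Your approach buys a cleaner single application of the maximal inequality at the cost of working in dimension $2d$ rather than $d$; the paper's $3^d$-piece route keeps Wichura in dimension $d$ but pays a union-bound factor, and the final bounds are of the same order.

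Two small points worth tightening. First, the count $\#(J_n(z_{\delta(a\mathbf 1)})\setminus J_n(z_0))$ is of order $ant_n^{d_X}\Delta_c\asymp ant_n^{d_X}/|\log t_n|$ rather than $ant_n^{d_X}$ — the shell thickness is $a\Delta_c t_n$ — though this makes no difference to the conclusion since Assumption \ref{assump_a2}(a) still forces the Feller relative error $O(c^3/\sqrt{\#\text{shell}})\to0$. Second, the contribution from $\{\mX_c(-u,v)\le c-\omega\}$ is not literally ``exactly as for (\ref{(II)})'': unlike (II), which involves a single $\mX_c(-u_1,v_1)$, here it involves the supremum over $0\le(k_u,k_v)\le a\mathbf 1$ and so requires one more pass through your entry-point reindexing and Wichura (or, as the paper does, the partitioning argument) before Lemma \ref{lem2}(a) can be invoked. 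This is an easy patch but should be spelled out.
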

\begin{proof}
Though \citet{wichura_inequalities_1969} considers a set $\cal A$ with points lying on a $d$-dimensional
grid, we can always extend $\cal A$ to a $d$-dimensional grid $\cal B$ by letting $U_i \equiv 0$
for $i \in {\cal B} \setminus {\cal A}$. Note that the right-hand side of (\ref{PS}) is unchanged
by such an extension. Let $u_1=u-k_u \Delta_c$, 
$v_1=v+k_v \Delta_c$, $k=(k_u,k_v)$ and $z_\delta = (u_1 t_n, (v_1-u_1)t_n)$. Let
$\omega >1$ to be specified later. Since $\sigma_n(z_\delta) \geq \sigma_n(z_0)$ when
$J_n(z_\delta) \supseteq J_n(z_0)$, by the arguments in (\ref{A3.2}) and (\ref{a1}),
\begin{eqnarray} \label{6I}
{\rm (I)} & \equiv & P \{ \sup_{0 \leq k \leq a {\bf 1}} \mX_c(-u_1,v_1) > c,
c-\omega \leq \mX_c(-u,v) \leq c-\gamma/c \} \\
& \sim & \psi(c) \int_\gamma^{\omega c} e^y P \Big\{ \sup_{0 \leq k \leq a {\bf 1}} 
\sum_{i \in J_n(z_\delta) \setminus J_n(z_0)} Z_i \geq \sigma_n(z_0) y/c \Big\} dy. \nonumber
\end{eqnarray}

Let $\cal B$ be the set of all $d$-dimensional vectors with coordinates taking values
$-1$, 0 or 1 but not all zeros. Hence $\# {\cal B} = 3^d-1$. Consider the partitioning of
${\cal A}\equiv J_n((u-a \Delta_c)t_n, (v-u+2a \Delta_c)t_n) \setminus J_n(z_0)$ as ${\cal A} = 
\bigcup_{b \in \cal B}
{\cal A}_b$, with
\begin{eqnarray*}
{\cal A}_b = \{ i: (u_j-a \Delta_c)t_n & \leq & x_{i,j} < u_j t_n \mbox{ if }
b_j=-1, \cr
u_j t_n & \leq & x_{i,j} \leq v_j t_n \mbox{ if } b_j=0, \cr
v_j t_n & \leq & x_{i,j} \leq (v_j+a \Delta_c) t_n \mbox{ if } b_j=1 \}.
\end{eqnarray*}
Then
$$\sup_{0 \leq k \leq a {\bf 1}} \sum_{i \in J_n(z_\delta) \setminus J_n(z_0)}
Z_i \leq \sum_{b \in \cal B} \max_{k \in \mathbb{R}^d} S_{k,b}, \mbox{ where } S_{k,b} = 
\sum_{i \in {\cal A}_b, i \leq k} Z_i.
$$
Let $x=\frac{\sigma_n(z_0) y}{3^d c}$. By (\ref{sig0}) , and since $v_j - u_j \geq 1$,
\begin{eqnarray} \label{xs}
\frac{x}{s_{\cal A}} \sim \frac{y \text{vol}(v-u)}{3^d c(\text{vol}(v-u+2a \Delta_c {\bf 1})-\text{vol}(v-u))^{1/2}} & = & 
\frac{y}{3^d c} \Big[ \prod_{j=1}^d \Big(1+\frac{2a \Delta_c}{v_j-u_j} \Big)-1
\Big]^{-1/2} \\
& \geq & (1+o(1)) \frac{y}{3^d c} \Big( \frac{da}{c^2} \Big)^{-1/2} \sim \frac{y}{3^d \sqrt{da}}.
\nonumber
\end{eqnarray}
Hence for all large $c$, 
$$\frac{x}{s_{\cal A}} \geq 2^{d+1/2} \mbox{ when } y \geq \gamma \mbox{ for } a \leq
\Big( \frac{\gamma}{6^{d+1/2}} \Big)^2 \frac{1}{d}.
$$
By (\ref{PS}) and (\ref{6I}), and since $s_{{\cal A}_b} \leq s_{\cal A}$,
\begin{equation} \label{s1}
{\rm (I)} \leq (2^d+o(1)) \psi(c) \sum_{b \in {\cal B}} \int_\gamma^{\omega c} e^y P \Big\{
|S_{{\cal A}_b}| \geq \frac{\sigma_n(z_0) y}{6^d c} \Big\} dy.
\end{equation}
Apply Lemma \ref{lem2}(a) and note that the sum in (\ref{s1}) is dominated by the $2d$ values
of $b$ having a single non-zero entry. Then by (\ref{xs}), (\ref{l6}) holds (for large $c$ and
small $a$) when
$$N_a(\gamma) = d 2^{d+2} \int_\gamma^\infty e^y \psi \Big( \frac{y}{6^{d+1/2} \sqrt{da}} \Big) dy.
$$
We check that when $\gamma_a=a^{1/3}$, then $N_a(\gamma_a)+\int_1^\infty
\tau^s N_a(\gamma_a+\tau)=o(a^p)$ as $a \rightarrow 0$ for all $s>0$ and $p>0$, and that
$$P \{ \sup_{0 \leq k \leq a {\bf 1}} \mX_c(-u_1,v_1)>c,
\mX_c(-u,v) \leq c-\omega \} \leq P \Big\{ \sup_{0 \leq k \leq a {\bf 1}}
\sum_{i \in J_n(z_\delta) \setminus J_n(z_0)} Z_i \geq \omega \sigma_n(z_0) \Big\} = o(\psi(c))
$$
for all $\omega$ large, by a similar partitioning argument and applications of Lemmas
\ref{lem2}(a) and \ref{lem6}(a). %
\end{proof}

\section{Proof of Theorem \ref{local_alt_thm}}

We have
\begin{align*}
&\bar m_j(\theta_n,x)=\bar m_j(\theta_0,x_0)
  +\bar m_j(\theta_n,x)-\bar m_j(\theta_0,x)
  +\bar m_j(\theta_0,x)-\bar m_j(\theta_0,x_0)  \\
&=[\bar m_{\theta,j}(\theta_n^*,x) a]r_n
  +\bar m_j(\theta_0,x)-\bar m_j(\theta_0,x_0)
\le [\bar m_{\theta,j}(\theta_n^*,x) a]r_n
  +C\left(\frac{x-x_0}{\|x-x_0\|}\right)\|x-x_0\|^\gamma
\end{align*}
for $x$ in some neighborhood of $x_0$.  Thus, letting $h$ be some
small scalar going to zero with $n$, for $sh$ and
$(s+t)h$ small enough, we have
\begin{align*}
&E m_j(W_i,\theta_n)I(sh+x_0<X_i<(s+t)h+x_0)  \\
&\le \int_{x\in\mathcal{X}, sh+x_0<x<(s+t)h+x_0}
  \left\{[\bar m_{\theta,j}(\theta_n^*,x) a]r_n
  +C\left(\frac{x-x_0}{\|x-x_0\|}\right)\|x-x_0\|^\gamma\right\}f(x)\,
dx  \\
&=\int_{x_0+uh\in\mathcal{X}, s<u<s+t}
  \left\{[\bar m_{\theta,j}(\theta_n^*,x_0+uh) a]r_n/h^\gamma
  +C\left(\frac{u}{\|u\|}\right)\|u\|^\gamma\right\}f(x_0+uh) h^{d_X+\gamma}\, du
\end{align*}
where the last equality uses the change of variables $u=(x-x_0)/h$.
We also have
\begin{align*}
&\sigma_j^2(sh+x_0,(s+t)h+x_0,\theta_n)  \\
&=E m_j(W_i,\theta_n)^2I(sh+x_0<X_i<(s+t)h+x_0)
-[E m_j(W_i,\theta_n)I(sh+x_0<X_i<(s+t)h+x_0)]^2  \\
&\le \int_{x\in\mathcal{X},sh+x_0<x<(s+t)h+x_0} \mu_{2,j}^2(x)f(x)\, dx
=\int_{x_0+uh\in\mathcal{X} ,s<u<s+t} \mu_{2,j}^2(x_0+uh)f(x_0+uh)h^{d_X}\, dx.
\end{align*}
using the same change of variables.  Under these assumptions,
$\mu_{2,j}^2(x_0+uh)$ converges to $\Sigma_{jj}(x_0)$
and $f(x_0+uh)\to f(x_0)$ uniformly over bounded $u$
as $h$ approaches 0.

Thus, for any $\varepsilon>0$, we will have, for small enough $h$ and
bounded $s$ and $t$,
$E m_j(W_i,\theta_n)I(sh+x_0<X_i<(s+t)h+x_0)/\sigma_j(sh+x_0,th,\theta_n)$
is, for any $s,t$ such that the expression is negative, bounded from
above by
\begin{align*}
&h^{d_X/2+\gamma}\frac{[f(x_0)^{1/2}-\varepsilon]\int_{x_0+uh\in\mathcal{X}, s<u<s+t}
  \left\{[\bar m_{\theta,j}(\theta_0,x_0) a+\varepsilon]r_n/h^\gamma
  +C\left(\frac{u}{\|u\|}\right)\|u\|^\gamma\right\}\, du}
{[\Sigma_{jj}^{1/2}(x_0)+\varepsilon]\text{vol}\{u|x_0+uh\in\mathcal{X},
  s<u<s+t\}^{1/2}}  %
\end{align*}
Setting $h=r_n^{1/\gamma}$, this is equal to
\begin{align*}
r_n^{(d_X/2+\gamma)/\gamma}
\lambda(s,t,(\mathcal{X}-x_0)/r_n^{\gamma},\varepsilon)
\end{align*}
for a function $\lambda$ that does not depend on $r_n$.
Note that the sequence of sets $(\mathcal{X}-x_0)/r_n^{\gamma}$ satisfies
$(\mathcal{X}-x_0)/r_k^{\gamma}\subseteq
(\mathcal{X}-x_0)/r_\ell^{\gamma}$
for $r_\ell<r_k$ by convexity of $\mathcal{X}$,
so, letting $\mathcal{U}=\cup_{k=1}^\infty (\mathcal{X}-x_0)/r_k$, we
will have $\text{vol}(\{s<u<s+t\}\cap (\mathcal{U}\backslash
(\mathcal{X}-x_0)/r_k))\to 0$.  It follows that
$\lambda(s,t,(\mathcal{X}-x_0)/r_n,\varepsilon)\to
\lambda(s,t,\mathcal{U},\varepsilon)$.
Since this holds for all $\varepsilon>0$ and $\lambda$ is continuous in
$\varepsilon$, we have, for any $s,t$,
\begin{align*}
&-\frac{E_n
 m_j(W_i,\theta_n)I(sr_n^{1/\gamma}+x_0<X_i<(s+t)r_n^{1/\gamma}+x_0)}
  {\hat\sigma_j(sr_n^{1/\gamma}+x_0,tr_n^{1/\gamma},\theta_n)}
-\hat q_{1-\alpha}  \\
&=-\frac{Em_j(W_i,\theta_n)I(sr_n^{1/\gamma}+x_0<X_i<(s+t)r_n^{1/\gamma}+x_0)}
  {\sigma_j(sr_n^{1/\gamma}+x_0,tr_n^{1/\gamma},\theta_n)}
+\mathcal{O}_P(n^{-1/2})
-\hat q_{1-\alpha}  \\
&\ge r_n^{(d_X/2+\gamma) /\gamma}[-\lambda(s,t,\mathcal{U},0)+o(1)
-\hat q_{1-\alpha}/r_n^{(d_X/2+\gamma) /\gamma}]
  +\mathcal{O}_P(n^{-1/2})  \\
&=r_n^{(d_X/2+\gamma) /\gamma}\left\{-\lambda(s,t,\mathcal{U},0)+o(1)
-\frac{(2\log t_n^{-d_X})^{1/2}}{n^{1/2}r_n^{(d_X/2+\gamma)/\gamma}}(1+o_P(1))
  +\mathcal{O}_P(n^{-1/2}r_n^{-(d_X/2+\gamma)/\gamma})\right\}
\end{align*}
Note that the $o_P(1)$ term absorbs the $\mathcal{O}_P(1)$ term, so
that the above expression will be negative with probability
approaching one for some $s,t$ as long as
\begin{align*}
\limsup \frac{(2\log t_n^{-d_X})^{1/2}}{n^{1/2}r_n^{(d_X/2+\gamma)/\gamma}}
  < \sup_{s,t} -\lambda(s,t,\mathcal{U},0),
\end{align*}
and this condition can be rearranged to
\begin{align*}
\liminf r_n
\left(\frac{n}{2\log t_n^{-d_X}}\right)^{\gamma/(d_X+2\gamma)}
  > -1/[\inf_{s,t}\lambda(s,t,\mathcal{U},0)]^{\gamma/(d_X/2+\gamma)}.
\end{align*}

\section{Comparison to Intermediate Gaussian Approximations}\label{gauss_approx_sec}

In this section of the appendix, we compare our approach to the results that could be obtained using intermediate gaussian approximations.
As shown in Section \ref{power_sec}, $t_n$ must be chosen at least as small as the optimal bandwidth in order for the test to have good power for a given data generating process.  Theorem \ref{asym_dist_thm_randx} allows $t_n$ to be chosen equal to $n^{-1/d_X}$ times a $\log n$ term, which is small enough to adapt to any Holder class for the conditional mean.
Using the best available results for gaussian approximations in \citet{rio_local_1994} would give a rate of approximation of a $\log n$ term times
$n^{-1/[2(d_X+1]}$
for the random process
$(s,t)\mapsto \sqrt{n} E_n m(W_i,\theta)I(s<X_i<s+t)$.  The test statistic weights this by the inverse of its estimated standard deviation which, at the minimum scale $t_n$, is of order $t_n^{-d_X/2}$.  Thus, in order to use the gaussian approximation of \citet{rio_local_1994}, we would need
$t_n^{-d_X/2}\cdot n^{-1/[2(d_X+1)]}$ to go to zero more quickly than a $\log n$ term, which would mean that
$t_n$ would have to decrease more slowly than a $\log n$ term times
$n^{-\frac{1}{d_X(d_X+1)}}$.
For the test to achieve optimal power when the conditional mean has $\gamma$ conditinuous derivatives (where noninteger values of $\gamma$ corresond to Holder conditions), $t_n$ must decrease at least as quickly as
$n^{-1/(d_X+2\gamma)}$.  Thus, using a gaussian approximation would only lead to optimal power when
$\frac{1}{d_X+2\gamma}
  \le \frac{1}{d_X(d_X+1)}$, which can be rewritten as
$d_X+2\gamma\ge d_X^2+d_X$ or
$\gamma\ge d_X^2/2$.

Since the use of positive kernels (in this case indicator functions) prevents multiscale statistics from being adaptive to $\gamma>2$ derivatives,
this means that
the approach based on the gaussian approximations in \citet{rio_local_1994} %
would be adaptive to a range of $[d_X^2/2,2]$ %
for the smoothness parameter $\gamma$.  Thus, while this approach would lead to useful (if not optimal) results for a one dimensional covariate, %
it would not be adaptive in two dimensions, and would be %
dominated by a kernel statistic with a single bandwidth in more than two dimensions.
In contrast, our result allows adaptivity to all $\gamma$ in $(0,2]$ regardless of the dimension of $X_i$, which is the best possible result.

Another approach is to restrict the set $(s,t)$ over which the supremum is taken to a finite set and place conditions on the rate at which this set increases with the sample size.  While this approach does not apply directly to the statistic considered here, it is useful to compare our results to this approach as well.  Using the results of \citet{chatterjee_simple_2005} along with this approach and a method of proof that avoids deriving an asymptotic distribution, \citet{chetverikov_adaptive_2012} provides a test that is adaptive in the range $\gamma\in[d_X,2]$.
However, using the more recent results of \citet{chernozhukov_gaussian_2013}, which was written contemporaneously with the first draft of the present paper, this could be improved to achieve adaptivity in the full range $\gamma\in (0,2]$.

\section{Power Comparisons with Other Procedures}\label{power_comp_sec}

This appendix discusses in more detail the optimality properties mentioned in the main text.  Since most of the results used here are from other papers, we refer to these papers for details.

\citet{armstrong_choice_2014}, \citet{armstrong_asymptotically_2011} and \citet{armstrong_weighted_2014} show that, under conditions that imply Assumption \ref{local_alt_assump} (these conditions essentially amount to Assumption \ref{local_alt_assump} plus an assumption that $\gamma$ in that condition is the largest $\gamma$ possible), several other procedures do not achieve the same rate for detecting local alternatives.  In particular, the conclusions of Theorem \ref{local_alt_thm} can only hold if the sequence $r_n$ approaches zero at a rate that is slower than the rate given in Theorem \ref{local_alt_thm} by a polynomial factor.

While these conditions are arguably natural in conditional moment inequality models, other procedures will do better in certain cases.  For example, the tests of \citet{andrews_inference_2013} and \citet{lee_testing_2013} will perform better under certain alternatives local to a null under which the contact set has nonzero probability (achieving a $\sqrt{n}$ rate where the test in this paper achieves a $\sqrt{n/\log n}$ rate; see the second display of Theorem B.4 in \citealp{armstrong_weighted_2014} for the latter result).
If one chooses between these conditions using a minimax criterion and smoothness conditions, the test in this paper achieves the optimal rate
among tests available in the literature.

Formally, let $\phi_n(\theta)$ be the test in this paper with $t_n=[(\log n)^5/n]^{1/d_X}$
(i.e. $\phi_n(\theta)=1$ when the test rejects and zero otherwise)
 and asymptotic level $\alpha$
(other choices of $t_n$ would work here as well).
Then, for certain classes of distributions $\mathcal{P}_\gamma$ defined by smoothness conditions and additional regularity conditions,
\begin{align}\label{minimax_rate_eq}
\liminf_n\inf_{P\in\mathcal{P}_\gamma}
\inf_{\theta \text{ s.t. } d(\theta,\theta_0)\ge C^*[(\log n)/n]^{\gamma/(d_X+2\gamma)}
\text{ all } \theta_0\in\Theta_0(P)}
E_P\phi_n(\theta)=1
\end{align}
for some finite constant $C^*$.
For several other tests in the literature, the minimax rate is strictly worse (i.e. the right hand side of the display is zero when $\phi_n(\theta)$ is replaced by one of these tests even if the sequence $[(\log n)/n]^{\gamma/(d_X+2\gamma)}$ is replaced by a sequence that approaches zero at a strictly slower rate).
See Section A.2 of \citet{armstrong_choice_2014} for a formal statement.  (Formally, these results apply to a slightly different test than the one used in this paper, since the truncation is done in a different way.  However, the results can be shown to hold for the test in this paper by following the same arguments.)

Note that the rate $[(\log n)/n]^{\gamma/(d_X+2\gamma)}$ differs from the $n^{-1/2}$ rate obtained for local alternatives in regular point identified models.  This arises from sequences of alternatives of the form considered in Theorem \ref{local_alt_thm}, which are local to a point where the conditional moment inequalities are binding on a zero probability set.  The test developed in this paper is taylored to this case, and it achieves the best rate among tests available in the literature for this form of alternative.  In contrast, for regular point identified models, moment conditions will hold with equality on a positive probability set under the (unique) true value of $\theta$.

To our knowledge, the only other tests in the literature that do not have a strictly worse minimax rate than the tests in this paper in the sense described above are those considered by \citet{armstrong_weighted_2014}, \citet{chetverikov_adaptive_2012} and \citet{chernozhukov_intersection_2013}.  Since the first two papers consider tests that differ from those in the present paper only in implementation and in minor details in the definition of the test, let us compare these tests to those proposed in \citet{chernozhukov_intersection_2013}.  The tests in \citet{chernozhukov_intersection_2013} use the supremum of a kernel based estimate of the conditional mean.  If $\gamma$ is known and used to choose an optimal sequence of bandwidths, this test will achieve (\ref{minimax_rate_eq}).  However, if one uses a sequence of kernels based on an incorrect choice of $\gamma$, the rate will be strictly worse. %
\citep[While these results for kernel estimators have not been shown formally in the literature, Theorem 5.3 in][gives the result for setwise confidence regions and rates of convergence in Hausdorff distance, and the above statements follow from similar arguments]{armstrong_weighted_2014}.
Note, however, that these results hold when positive kernels are used, and that the test of \citet{chernozhukov_intersection_2013} may perform better in situations with more smoothness (larger $\gamma$) when $\gamma$ is known and higher order kernels are used.

Thus, the tests proposed in this paper (along with those in \citealp{armstrong_weighted_2014} and \citealp{chetverikov_adaptive_2012}) are the only tests in the literature that achieve (\ref{minimax_rate_eq}) without knowledge of $\gamma$.  In this sense, these tests are adaptive.  
The results described above consider only the rate (the results show that there exists a $C^*$ such that (\ref{minimax_rate_eq}) holds, but do not give the smallest $C^*$ such that (\ref{minimax_rate_eq}) holds), but it seems likely that the kernel approach of \citet{chernozhukov_intersection_2013} will achieve (\ref{minimax_rate_eq}) with a better constant if prior knowledge of $\gamma$ and other aspects of the data generating process are used to pick the optimal bandwidth.  The comparison of critical values in Section \ref{inference_sec} gives some idea of this.

The results described above consider optimality over a class of tests (which appears to include essentially all tests currently available, at least in the recent econometrics literature on conditional moment inequalities).  One may also ask whether the rate in (\ref{minimax_rate_eq}) is optimal among all tests (i.e., if one replaces $C^*$ with a small enough $C_*>0$, the right hand side of (\ref{minimax_rate_eq}) is 0 for any sequence of level $\alpha$ tests).  While such results are, to our knowledge, not currently available, \citet{menzel_consistent_2010} considers a similar result for the related problem of estimation of the identified set and gives the same rate.

In addition, there is a large literature that considers minimax testing on a conditional mean when $d(\theta,\theta_0)$ is replaced by the distance between $E_P(Y|X_i=x)$ and the $0$ function, where distance is given by the $L_p$ norm (or positive $L_p$ norm) on the space of real valued functions for some $1\le p\le \infty$ \citep[see][for a review of this literature]{ingster_nonparametric_2003}.  These results apply in our setting with $Y_i=m(W_i,\theta)$.  Formally, these papers give constants $0<C_*\le C^*$ and sequences $r_n$ such that
\begin{align*}
\liminf_n\inf_{P\in\mathcal{P}_\gamma \text{ s.t. } \varphi(E_P(Y_i|X_i=x))\ge C_* r_n}
E_P\phi_n=0
\end{align*}
for any sequence of level $\alpha$ tests $\phi_n$ of and, for some sequence of level $\alpha$ tests $\phi_n^*$,
\begin{align}\label{lp_minimax_eq}
\liminf_n\inf_{P\in\mathcal{P}_\gamma \text{ s.t. } \varphi(E_P(Y_i|X_i=x))\ge C^*r_n}
E_P\phi_n=1,
\end{align}
where the null is given by $H_0:E_P(Y_i|X_i)\ge 0$ a.s. or $H_0:E_P(Y_i|X_i)= 0$ a.s.
Here $\varphi$ is a functional from the space of measurable functions on the support of $X_i$ to $[0,\infty)$ that measures distance of the conditional mean from zero, and $r_n$, $C_*$ and $C^*$ depend on $\gamma$ and $\psi$.

A striking finding of this literature is that the rate $r_n$ and optimal test $\phi_n$ depend on the distance $\varphi$.
For the case where
$\varphi(f)=\max\{-\inf_{x\in\mathcal{X}}f(x),0\}$, \citet{dumbgen_multiscale_2001} and \citet{chetverikov_adaptive_2012} give these results with $r_n$ given by the rate in (\ref{minimax_rate_eq}), and show that tests similar to those considered in the present paper achieve this rate.  For the two-sided version of this problem, \citet{spokoiny_adaptive_1996} and \citet{horowitz_adaptive_2001} have considered adaptivity under the $L_p$ norm $\varphi(f)=(\int |f(x)|^p\,d\mu(x))^{1/p}$ with $p<\infty$.  In contrast to the $L_\infty$ case, \citet{horowitz_adaptive_2001} use a statistic that takes the supremum over bandwidths of a test based on the $L_2$ norm of a kernel estimate of the conditional mean for the case where $p=2$.  By the results in \citet{armstrong_choice_2014} mentioned above, the generalization of this statistic to the one-sided case with $\gamma$ known \citep[which corresponds to one of the statistics considered by][]{lee_testing_2013} has a worse rate when one considers distance on $\theta$ with $Y_i=m(W_i,\theta)$ as in (\ref{minimax_rate_eq}).

Thus, the results in \citet{dumbgen_multiscale_2001}, \citet{chetverikov_adaptive_2012}, \citet{armstrong_weighted_2014} and \citet{armstrong_choice_2014} discussed above suggest a connection between Euclidean distance on $\theta$ and the $L_\infty$ norm for the conditional mean in these two problems.  Our test is geared toward achieving good rates in (\ref{minimax_rate_eq}).  This reflects the practice of inverting hypothesis tests to obtain a confidence region for points in $\Theta_0(P)$ \citep[see][]{imbens_confidence_2004}.  The rates in (\ref{minimax_rate_eq}) reflect how fast this confidence region shrinks toward $\Theta_0(P)$.  Indeed, there is a close connection between this notion of relative efficiency and Hausdorff distance on sets in $\Theta$, and \citet{armstrong_weighted_2014} shows that a version of our test with a stronger notion of coverage achieves the same rate of convergence in the Hausdorff metric between the confidence region and identified set.
If one is not interested in $\theta$ and cares instead about detecting conditional means that violate the null by a particular amount according to an $L_p$ norm, the optimal test will depend on $p$ and will be different in the case where $p<\infty$.

\section{Other Methods of Calculating Critical Values}\label{other_cval_sec}

This appendix discusses other methods for computing critical values for our test.  Results in the literature for other settings suggest that these methods may provide an improvement to the higher order accuracy of the nominal coverage of the test, particularly in the case where $t_n^{d_X}$ is not too small relative to $\text{vol}(\mathcal{X})$ \citep[see][and the discussion below]{hall_rate_1979,piterbarg_asymptotic_1996}.  However, we leave the question of higher order coverage accuracy for future research.

\subsection{Direct Application of Tail Approximations}\label{direct_approx_sec}

Our asymptotic distribution result uses a tail approximation of the form
\begin{align}\label{tail_approx_refinement_eq}
P\left(\sqrt{n}S_{n}\le r_n\right)
\sim \exp\left(-d_Y \text{vol}(\mathcal{X})
t_n^{-d_X} \exp\left(-r_n^2/2\right)r_n^{4d_X-1}\pi^{-1/2}2^{-2d_X-1/2}
\right)
\end{align}
for the ``least favorable'' case where $E(m(W_i,\theta)|X_i)=0$ almost surely.  The result follows by setting $r_n=\sqrt{n}(r+b(\hat c_n))/a(\hat c_n)$, and noting that, for this choice of $r_n$ the right hand side of the above display converges in probability to the extreme value cdf $\exp(-d_X\exp(-r))$.

This suggests another approach to calculating critical values: choose the critical value based directly on the right hand side of (\ref{tail_approx_refinement_eq}).  That is, we reject when $\sqrt{n}S_n>\hat q_\alpha$ where $\hat q_\alpha$ is the largest solution to
\begin{align*}
1-\alpha=\exp\left(-d_Y \text{vol}(\hat{\mathcal{X}})
t_n^{-d_X} \exp\left(-q^2/2\right)q^{4d_X-1}\pi^{-1/2}2^{-2d_X-1/2}\right).
\end{align*}
\citet{piterbarg_asymptotic_1996} suggested a version of this approach in settings with a stationary Gaussian process, and showed that it leads to an asymptotic refinement in that setting in the sense that the critical value corresponding to our critical value in the main text gives a test with size $\alpha+\mathcal{O}(1/\log n)$, while the approach described in this section gives a test with size $\alpha+\mathcal{O}(n^{-K})$ for some constant $K$.
In a different setting involving the supremum of a process exhibiting a different type of nonstationarity, \citet{lee_testing_2009} propose a similar correction.  While those authors do not formally consider asymptotic refinements, they provide monte carlo evidence of an improvement in finite samples.
Based on these results, it seems likely that this correction could be shown to lead to some improvement in our setting.  Formalizing these ideas is a useful direction for future research.

\subsection{Simulated Critical Values}\label{sim_cval_sec}

Our results can also be used to show the asymptotic validity of simulated critical values based on a certain bootstrap procedure.  Since this procedure is based directly on the supremum of a random process rather than an extreme value limit, one might expect this procedure to give an improvement in coverage accuracy.  We leave this question for future research, although we note that \citet{chetverikov_adaptive_2012} has shown that a different bootstrap procedure applied to a closely related test statistic achieves polynomial coverage.  Whether this applies in our setting, and whether the polynomial rate is better than the one achieved by the refinement in Section \ref{direct_approx_sec} (if this refinement does indeed achieve a polynomial rate), are both interesting questions for future research.

We define our simulated critical values as follows.  For each $j$, let $\hat M_{n}(x)$ be any random sequence of functions that take values in $\mathcal{X}$ to $d_Y\times d_Y$ symmetric, positive definite, matrices.  We require that sequence of variance matrices given by $\hat M_n(x)$ be continuous in $x$ and have correlation coefficients bounded away from one uniformly over $n$ with probability one.
One can choose $\hat M_{n}(x)$ to be an estimate of the conditional variance matrix of the $m(W_i,\theta)$, but this is not necessary, and $\hat M_{n}(x)$ can even be chosen to be the constant function that takes all values to the identity matrix.  For each repetition $b$ of $B$ simulations, we draw $n$ independent outcome variables
$\{Y_{i}^{*,b}\}_{i=1}^n$ with $Y_{i}^{*,b}\sim N(0,\hat M_n(X_i))$ independent across $i$ and $b$ conditional on the data.  We form the test statistic $S_{n,b}^*$ for this repetition by replacing $m(W_i,\theta)$ with $Y_{i}^{*,b}$ in the definition of the test statistic.  The simulated critical value is given by the $1-\alpha$ quantile of this bootstrap distribution:
\begin{align}\label{cval_sim_eq}
\hat q_{1-\alpha,\text{sim}} = \inf \left\{r\Bigg|\frac{1}{B} \sum_{b=1}^B I(S_{n,b}^*\le r)\ge 1-\alpha\right\}.
\end{align}
The asymptotic validity of this test follows immediately from the version of Theorem \ref{asym_dist_thm_randx} in Appendix \ref{uniformity_sec} that incorporates uniformity in the underlying distribution.

\begin{theorem}\label{test_size_sim_thm}
Suppose that the null hypothesis (\ref{null_eq}) holds for $\theta$ and that Assumption \ref{asym_dist_assump} holds.  Let
$\hat q_{1-\alpha,\text{sim}}$ be as defined in (\ref{cval_sim_eq})
where
$\hat M_n(x)$ is continuous in $x$ uniformly in $n$ and has correlation coefficients bounded away from one uniformly over $n$ with probability one.
Then
\begin{align*}
\limsup_n P\left(S_n(\theta)>\hat q_{1-\alpha,\text{sim}}\right)
  \le \alpha.
\end{align*}
If, in addition, $\bar m(\theta,x)=0$ for all $x\in\mathcal{X}$, then
\begin{align*}
P\left(S_n(\theta)>\hat q_{1-\alpha,\text{sim}}\right)
  \to \alpha.
\end{align*}
\end{theorem}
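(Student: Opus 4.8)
The plan is to observe that, conditionally on the data, every simulated statistic $S_{n,b}^*$ is an instance of the statistic governed by the uniform-in-$P$ asymptotic distribution result, with a Gaussian ``outcome'' whose conditional mean is identically zero; this pins down the conditional law of $a(\hat c_n)S_{n,b}^*-b(\hat c_n)$ as the extreme value limit, so that (for the number of simulation draws $B=B_n\to\infty$) the simulated critical value $\hat q_{1-\alpha,\text{sim}}$ converges to the $(1-\alpha)$-quantile of that limit. The two size statements then follow by combining this with Theorem \ref{asym_dist_thm_randx} applied to $S_n(\theta)$ itself.

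First I would set up the conditional problem. Condition on $\{X_i,W_i\}_{i=1}^n$. Then $S_{n,b}^*$ is exactly the statistic of Theorem \ref{asym_dist_thm_randx_unif_p} computed from ``data'' $\{(X_i,Y_i^{*,b})\}_{i=1}^n$ with the $Y_i^{*,b}$ independent and $Y_i^{*,b}\sim N(0,\hat M_n(X_i))$. The proof of Theorem \ref{asym_dist_thm_randx_unif_p} proceeds by conditioning on the design $\{X_i\}$ and invoking the deterministic-design Theorem \ref{asym_dist_thm} together with Lemmas \ref{x_lemma}, \ref{vol_lemma} and \ref{sigmahat_lemma}; I would check each input here. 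Lemmas \ref{x_lemma} and \ref{vol_lemma} concern only the law of $X_i$ and hold almost surely under Assumption \ref{asym_dist_assump}. The stated regularity conditions on $\hat M_n$ (equicontinuity in $x$ and correlations bounded away from one, uniformly in $n$, with probability one) ensure that, on a probability-one event, the Gaussian family $\{x\mapsto N(0,\hat M_n(x))\}_{n\ge 1}$ lies in a single class $\mathcal{P}^*$ satisfying Assumption \ref{assump_a1} with common constants --- the exponential-moment bound being immediate for Gaussians with uniformly bounded conditional variances --- so that Theorem \ref{asym_dist_thm} (and Lemma \ref{sigmahat_lemma}, which replaces the population normalization by $\hat\sigma_n^{*}$) applies along the realized sequence $\hat M_n$ even though it varies with $n$. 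Since the conditional mean of $Y_i^{*,b}$ vanishes identically, the second display of Theorem \ref{asym_dist_thm_randx_unif_p} applies, so the conditional cdf $r\mapsto P\bigl(a(\hat c_n)S_{n,b}^*-b(\hat c_n)\le r\mid\text{data}\bigr)$ converges in probability, at every $r$, to $\exp(-d_Y e^{-r})$, the cdf of $\max_{1\le j\le d_Y}Z_j$.

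Next I would turn this into convergence of the simulated quantile and conclude. The $S_{n,b}^*$, $b=1,\dots,B$, are conditionally i.i.d., so by the Glivenko--Cantelli theorem together with continuity and strict monotonicity of the limiting cdf, for $B=B_n\to\infty$,
\begin{align*}
a(\hat c_n)\,\hat q_{1-\alpha,\text{sim}}-b(\hat c_n)\ \stackrel{p}{\to}\ c_{1-\alpha}\equiv\log d_Y-\log(-\log(1-\alpha)),
\end{align*}
the unique $r$ with $\exp(-d_Y e^{-r})=1-\alpha$. Under the null, Theorem \ref{asym_dist_thm_randx} (first display, evaluated at $r=r_0\mathbf{1}$ and using $S_n=\max_j T_{n,j}$) gives $\liminf_n P\bigl(a(\hat c_n)S_n(\theta)-b(\hat c_n)\le r_0\bigr)\ge\exp(-d_Y e^{-r_0})$ for every $r_0$. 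Writing $P(S_n(\theta)>\hat q_{1-\alpha,\text{sim}})=P\bigl(a(\hat c_n)S_n-b(\hat c_n)>a(\hat c_n)\hat q_{1-\alpha,\text{sim}}-b(\hat c_n)\bigr)$ and splitting on the event $\{a(\hat c_n)\hat q_{1-\alpha,\text{sim}}-b(\hat c_n)>c_{1-\alpha}-\epsilon\}$, whose probability tends to one, yields $\limsup_n P(S_n(\theta)>\hat q_{1-\alpha,\text{sim}})\le 1-\exp(-d_Y e^{-(c_{1-\alpha}-\epsilon)})$, and letting $\epsilon\downarrow 0$ gives the bound $\alpha$. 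If in addition $\bar m(\theta,x)=0$ for all $x$, the second display of Theorem \ref{asym_dist_thm_randx} gives $a(\hat c_n)S_n-b(\hat c_n)\stackrel{d}{\to}\max_j Z_j$, and a Slutsky argument with the displayed convergence of $\hat q_{1-\alpha,\text{sim}}$ turns this into $P(S_n(\theta)>\hat q_{1-\alpha,\text{sim}})\to 1-\exp(-d_Y e^{-c_{1-\alpha}})=\alpha$.

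The main obstacle is the conditional application of the uniform-in-$P$ theorem in the second step: one must verify that all ingredients of the proof of Theorem \ref{asym_dist_thm_randx_unif_p} --- including Lemma \ref{sigmahat_lemma} and the moderate-deviation inputs --- go through along the realized, data-dependent, and $n$-varying sequence of Gaussian variance functions $\hat M_n$. This is precisely why the uniform-in-$P$ version of the asymptotic distribution result is needed rather than the single-distribution version. The remaining steps --- passing from convergence of the conditional law to convergence of the empirical quantile, and the Slutsky/first-order-stochastic-dominance bookkeeping --- are routine given continuity of the extreme value limit, though the quantile step does require $B_n\to\infty$.
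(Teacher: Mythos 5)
Your proposal is correct and takes essentially the same route the paper intends: the paper states that Theorem~\ref{test_size_sim_thm} ``follows immediately from the version of Theorem \ref{asym_dist_thm_randx} in Appendix \ref{uniformity_sec} that incorporates uniformity in the underlying distribution,'' and your argument is precisely an unpacking of that remark --- condition on the data, recognize $S_{n,b}^*$ as an instance of the statistic governed by Theorem~\ref{asym_dist_thm_randx_unif_p} applied to the (almost surely) single uniform class $\mathcal{P}^*$ of Gaussian families induced by $\hat M_n$, use the zero conditional mean to invoke the second display of that theorem, pass to convergence of the empirical $(1-\alpha)$-quantile with $B_n\to\infty$, and then combine with the two displays of Theorem~\ref{asym_dist_thm_randx} for $S_n(\theta)$ via the standard Slutsky/stochastic-dominance bookkeeping. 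Your explicit flag that $B=B_n\to\infty$ is needed for the quantile step is a legitimate reading between the lines of the theorem statement, and your observation that one must check the Gaussian family satisfies Assumption~\ref{assump_a1} with uniform constants (which in particular requires the entries of $\hat M_n$ to be uniformly bounded above and the diagonal entries uniformly bounded below, something the paper's stated conditions on $\hat M_n$ leave slightly implicit) is the only genuinely nontrivial verification; you identify it correctly.
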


\section{Additional Monte Carlos}\label{additional_mc_sec}

This section describes the details for the 
monte carlo analysis of the test of \citet{chetverikov_adaptive_2012}
reported in Tables \ref{power_table_ch_d1}, \ref{power_table_ch_d2} and \ref{power_table_ch_d3} and described briefly in Section \ref{monte_carlo_sec}.
We include two forms of the test that differ in the choice of critical value.  The plug-in asymptotic (PIA) critical value uses the least favorable distribution to compute the critical value.  The generalized moment selection (GMS) critical value uses a pre-test to determine which moments are close to binding.  We describe the additional implementation details below.

The test involves choosing
the kernel $K$, a set of bandwidths $H$ and an estimate $\hat \Sigma_{j,11}$ of the conditional variance $var(m(W_i,\theta)|X_i=x)$ evaluated at $x=X_j$.  We follow
\citeauthor{chetverikov_adaptive_2012}'s recommendations %
of setting $K(x)=.75(1-\|x\|^2)I(\|x\|\le 1)$ and $H=\{h=h_{\text{max}}a^k|g\ge h_{\text{min}}, k=0,1,2,3,\ldots\}$ with
$a=.5$, $h_{\text{max}}=\max_{i,j=1,\ldots,n}\|X_i-X_j\|/2$,
$h_{\text{min}}=0.2h_{\text{max}}(\log n/n)^{1/3}$.
For the variance estimate, we form
$\hat\Sigma_{j,11}$
as a nearest neighbor estimator, given by the sample variance of the $Y_i's$ corresponding to the observations with the nearest $k_n= \lceil n^{2/3}\rceil$
values of $X$ to $X_j$ (\citeauthor{chetverikov_adaptive_2012} discusses variance estimation %
but does not explicitly recommend a particular estimator).
For the critical value, we consider the plug-in asymptotic critical value $c_{1-\alpha}^{PIA}$ as well as the refined moment selection critical value $c_{1-\alpha}^{RMS}$, formed as in %
\citet{chetverikov_adaptive_2012} using $\gamma=0.1/\log(n)$ as discussed by \citeauthor{chetverikov_adaptive_2012}.
We use $500$ monte carlo replications and $500$ bootstrap replications to form the critical value for each monte carlo replication.

To describe the test statistic, let us relabel the indices of the observations so that $X_1<X_2<\ldots<X_n$.
The test statistic is given by
\begin{align*}
T=\max_{s\in S} \frac{\hat f_s}{\hat V_s}
\end{align*}
where $S=\{(i,h)|i=1,n/50+1,2n/50+1,\ldots,(n-1)/50+1, h\in H\}$,
$\hat f_{i,h}=\frac{\sum_{j=1}^n K((X_i-X_j)/h)Y_j}{\sum_{j=1}^n K((X_i-X_j)/h)}$
and
$\hat V_{i,h}^2=\frac{\sum_{j=1}^n K((X_i-X_j)/h)^2\hat \Sigma_{j,11}^2}{\left[\sum_{j=1}^n K((X_i-X_j)/h)\right]^2}$
where $\hat \Sigma_{j,11}$ is the variance estimator described above.
(This differs from the choice of $S$ used in \citet{chetverikov_adaptive_2012} in that the maximum of the kernel estimator is taken at 50 values of $x$ corresponding to a subset of $X_i$ values rather than all $n$ values of $X_i$.  This is done for reasons of computational feasibility.)
The critical value $c_{1-\alpha}^{PIA}$ is given by the $1-\alpha$ quantile of 
\begin{align*}
T^{PIA}=
\max_{(i,h)\in S}\hat V_{i,h}^{-1}\frac{\sum_{j=1}^n K((X_i-X_j)/h)\tilde Y_j}{\sum_{j=1}^n K((X_i-X_j)/h)}
\end{align*}
with $\tilde Y_j\sim N(0,\hat\Sigma_{1,jj})$ independently across $j$.  The critical value $c_{1-\alpha}^{RMS}$ is given by the $1-\alpha$ quantile of
\begin{align*}
T^{RMS}=
\max_{(i,h)\in S^{RMS}}\hat V_{i,h}^{-1}\frac{\sum_{j=1}^n K((X_i-X_j)/h)\tilde Y_j}{\sum_{j=1}^n K((X_i-X_j)/h)}
\end{align*}
where
$S^{RMS}=\{s|\hat f_s/\hat V_s>-2 c_{1-\gamma}^{PIA}\}$.
We report the results of this test for the original monte carlo designs with both choices of critical value (PIA and RMS).

\bibliography{library}

\newpage

\begin{figure}[h]
  \centering
  \caption{95\% Confidence Region Using Weighted Sup Statistic (this paper)}
  \includegraphics[height=2.8in]{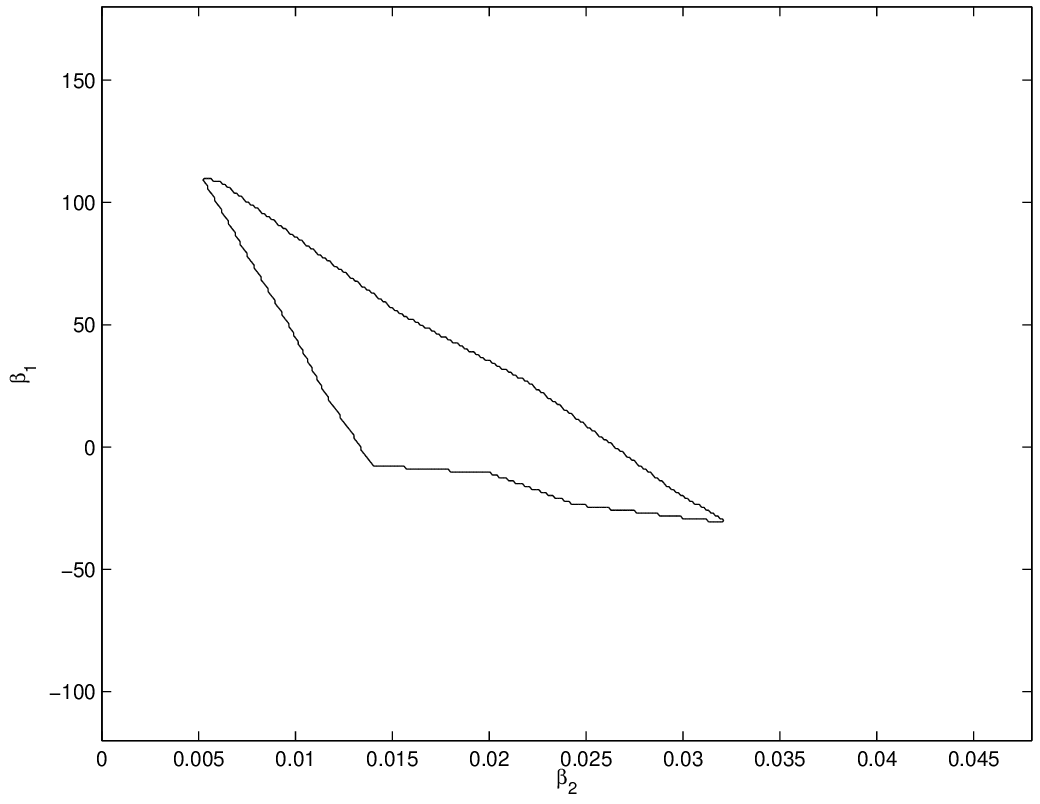}
    \label{cr95_asym_dist_fig}
\end{figure}

\begin{figure}[h]
  \centering
  \caption{95\% Confidence Region Using Unweighted Statistic and Subsampling with Estimated Rate}
  \includegraphics[height=2.8in]{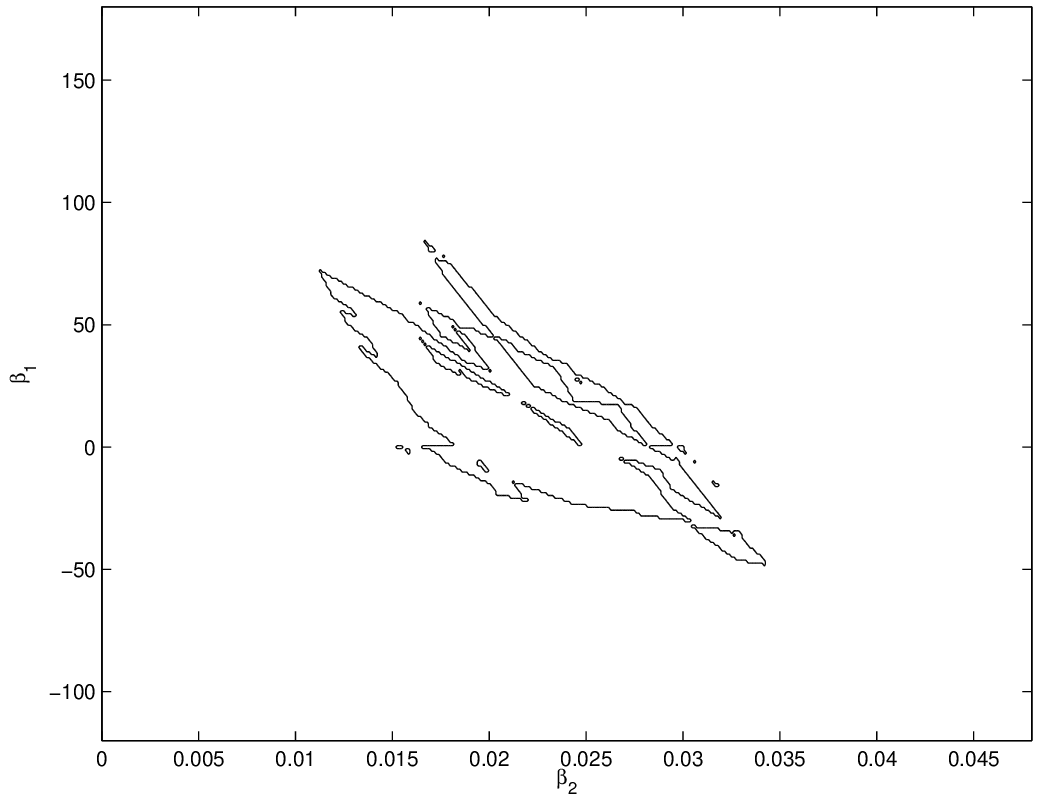}
    \label{cr95_est_fig}
\end{figure}

\begin{figure}[h]
  \centering
  \caption{95\% Confidence Region Using Unweighted Statistic and Subsampling with Conservative Rate}
  \includegraphics[height=2.8in]{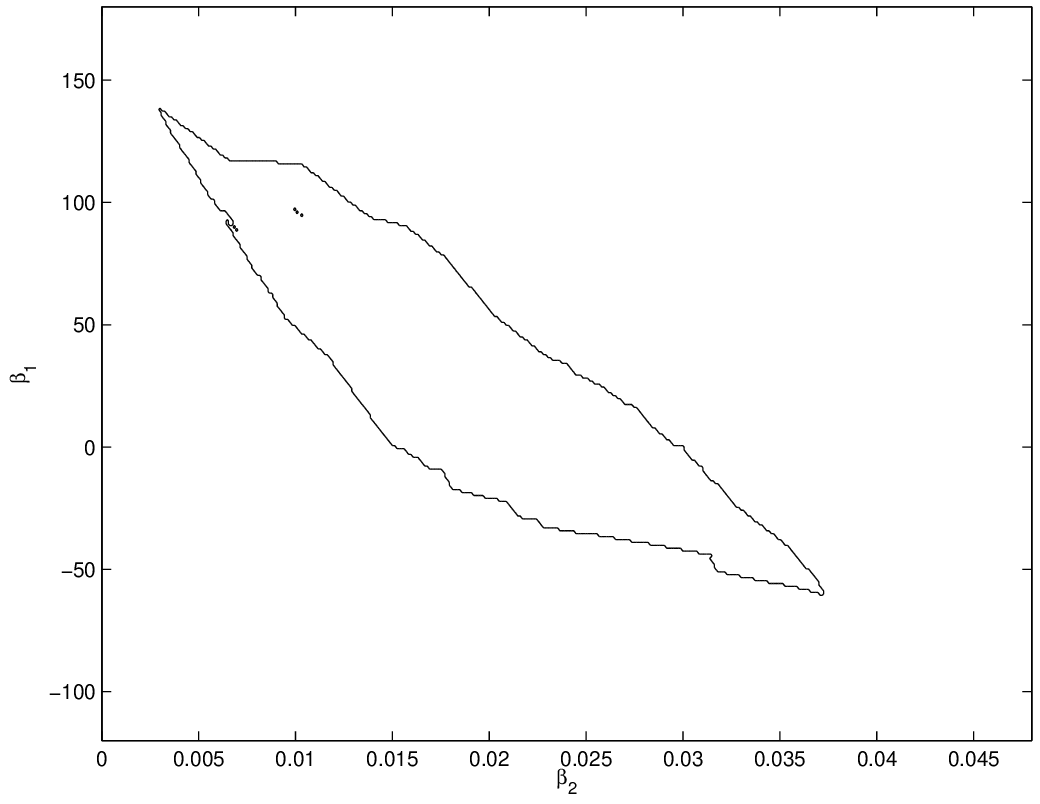}
    \label{cr95_cons_fig}
\end{figure}

\clearpage
\newpage

\begin{table}
\begin{center}
\begin{tabular}{c|c|ccc}
& $t_n$ & $n=100$ & $n=500$ & $n=1000$  \\
\hline
\multirow{3}{*}{nominal size $.1$}& $n^{-1/5}$ &     0.2510  &  0.1840  &  0.1770 \\
& $n^{-1/3}$ & 0.1640  &  0.1160  &  0.1150  \\
& $n^{-1/2}$ &     0.0890  &  0.0770  &  0.0880  \\
\hline 
\multirow{3}{*}{nominal size $.05$} & $n^{-1/5}$ &  0.1020  &  0.0650  &  0.0790 \\
& $n^{-1/3}$ &     0.0750  &  0.0410  &  0.0550  \\
& $n^{-1/2}$ &     0.0340  &  0.0220  &  0.0350
\end{tabular}
\caption{False Rejection Probabilities for Least Favorable Null}
\label{false_rejection_prob_table}
\end{center}
\end{table}

\begin{table}
\begin{center}
\begin{tabular}{c|c|ccc}
 $t_n$ & $\theta_1-\overline\theta_1$ & $n=100$ & $n=500$ & $n=1000$  \\
\hline
\multirow{6}{*}{$n^{-1/5}$}
& 0  &    0.0490  &  0.0490  &  0.0500   \\
& .1 &    0.2070  &  0.5030  &  0.7290   \\
& .2 &    0.4800  &  0.9540  &  1.0000   \\ 
& .3 &    0.7590  &  1.0000  &  1.0000   \\
& .4 &    0.9560  &  1.0000  &  1.0000   \\
& .5 &    0.9970  &  1.0000  &  1.0000   \\
\hline
\multirow{6}{*}{$n^{-1/3}$}
& 0  &    0.0500  &  0.0500  &  0.0500   \\
& .1 &    0.1440  &  0.4530  &  0.6300  \\
& .2 &    0.3780   & 0.9390  &  0.9980  \\
& .3 &    0.6910  &  1.0000  &  1.0000  \\
& .4 &    0.8860  &  1.0000  &  1.0000  \\
& .5 &    0.9820  &  1.0000 &   1.0000  \\
\hline 
\multirow{6}{*}{$n^{-1/2}$}
& 0  &    0.0440  &  0.0500  &  0.0490   \\
& .1 &      0.1560  &  0.3580  &  0.5020   \\
& .2 &     0.3480  &  0.8980  &  0.9910   \\
& .3 &     0.6490  &  0.9990  &  1.0000   \\
& .4 &     0.8620  &  1.0000  &  1.0000   \\
& .5 &     0.9740  &  1.0000  &  1.0000
\end{tabular}
\caption{Power for Level $\alpha=.05$ Test with
  Critical Values Based on Finite Sample Least Favorable Distribution (Design 1)}
\label{power_table_d1}
\end{center}
\end{table}

\begin{table}
\begin{center}
\begin{tabular}{c|c|ccc}
 $t_n$ & $\theta_1-\overline\theta_1$ & $n=100$ & $n=500$ & $n=1000$  \\
\hline
\multirow{6}{*}{$n^{-1/5}$}
& 0  &    0.0020  &  0.0000  &  0.0000   \\
& .1 &   0.0000  &   0.0000  &   0.0000    \\
&.2 &    0.0060 &   0.0160  &  0.0320  \\
&.3&    0.0260  &  0.1380  &  0.2950  \\
&.4&    0.0640  &  0.4490  &  0.8310  \\
&.5&    0.1750  &  0.8480  &  0.9950  \\
\hline
\multirow{6}{*}{$n^{-1/3}$}
& 0  &    0.0020  &  0.0000  &  0.0010   \\
& .1 &    0.0070  &  0.0120 &   0.0050  \\
& .2 &    0.0160  &  0.0620 &   0.1000  \\
& .3 &    0.0410  &  0.2150 &   0.4560  \\
& .4 &    0.1190  &  0.6040 &   0.8760  \\
& .5 &    0.2100  &  0.9020 &   0.9960  \\
\hline 
\multirow{5}{*}{$n^{-1/2}$}
& 0  &    0.0020  &  0.0010  &  0.0010   \\
& .1 &    0.0060  &  0.0140 &   0.0100   \\
& .2 &    0.0230  &  0.0570 &   0.0860   \\
& .3 &    0.0380  &  0.2290 &   0.3890   \\
& .4 &    0.1190  &  0.5320 &   0.7910   \\
& .5 &    0.2030  &  0.8500 &   0.9820
\end{tabular}
\caption{Power for Level $\alpha=.05$ Test with
  Critical Values Based on Finite Sample Least Favorable Distribution (Design 2)}
\label{power_table_d2}
\end{center}
\end{table}

\begin{table}
\begin{center}
\begin{tabular}{c|c|ccc}
 $t_n$ & $\theta_1-\overline\theta_1$ & $n=100$ & $n=500$ & $n=1000$  \\
\hline
\multirow{6}{*}{$n^{-1/5}$}
& 0  &    0.0100  &  0.0030  &  0.0020   \\
& .1 &    0.0340 &   0.0640 &   0.1200  \\
& .2 &    0.0930 &   0.4660 &   0.7040  \\
& .3 &    0.2720 &   0.8690 &   0.9900  \\
& .4 &    0.5010 &   0.9940 &   1.0000  \\
& .5 &    0.7670 &   1.0000 &   1.0000  \\
\hline
\multirow{6}{*}{$n^{-1/3}$}
& 0  &    0.0140  &  0.0040  &  0.0070   \\
& .1 &    0.0390  &  0.1040  &  0.1160  \\
& .2 &    0.1120  &  0.4290  &  0.6400  \\
& .3 &    0.2570  &  0.8380  &  0.9790  \\
& .4 &    0.4630  &  0.9940  &  1.0000  \\
& .5 &    0.7170  &  1.0000  &  1.0000  \\
\hline 
\multirow{5}{*}{$n^{-1/2}$}
& 0  &    0.0160  &  0.0060  &  0.0080   \\
& .1 &    0.0300  &  0.0830 &   0.0870  \\
& .2 &    0.1210  &  0.3250 &   0.5230  \\
& .3 &    0.2400  &  0.7620 &   0.9670  \\
& .4 &    0.3970  &  0.9840 &   1.0000  \\
& .5 &    0.6690  &  1.0000 &   1.0000
\end{tabular}
\caption{Power for Level $\alpha=.05$ Test with
  Critical Values Based on Finite Sample Least Favorable Distribution (Design 3)}
\label{power_table_d3}
\end{center}
\end{table}

\begin{table}
\begin{center}
\begin{tabular}{c|c|ccc}
 $t_n$ & $\theta_1-\overline\theta_1$ & $n=100$ & $n=500$ & $n=1000$  \\
\hline
\multirow{5}{*}{$n^{-1/5}$}
&  0 &     0.0240  &  0.0250  &  0.0250      \\
& .1 &     0.1260  &  0.4200  &  0.5980      \\
& .2 &     0.3460  &  0.9350  &  0.9980      \\ 
& .3 &     0.6410  &  1.0000  &  1.0000      \\
& .4 &     0.9150  &  1.0000  &  1.0000      \\
& .5 &     0.9900  &  1.0000  &  1.0000      \\
\hline
\multirow{5}{*}{$n^{-1/3}$}
&  0 &   0.0250  &  0.0250  &  0.0250     \\
& .1 &   0.0780  &  0.3360  &  0.5090     \\
& .2 &   0.2580  &  0.8860  &  0.9930     \\
& .3 &   0.5690  &  0.9990  &  1.0000     \\
& .4 &   0.8140  &  1.0000  &  1.0000     \\
& .5 &   0.9650  &  1.0000  &  1.0000     \\
\hline 
\multirow{5}{*}{$n^{-1/2}$}
&  0 &    0.0230 &   0.0250  &  0.0250       \\
& .1 &    0.0960 &   0.2630  &  0.3770       \\
& .2 &    0.2580 &   0.8430  &  0.9840       \\
& .3 &    0.5440 &   0.9980  &  1.0000       \\
& .4 &    0.7970 &   1.0000  &  1.0000       \\
& .5 &    0.9580 &   1.0000  &  1.0000    
\end{tabular}
\caption{Power for Level $\alpha=1-(1-.05)^{1/2}$ Test (corresponds to level $\alpha$ with a nonbinding moment) with
  Critical Values Based on Finite Sample Least Favorable Distribution (Design 1)}
\label{power_table_em_d1}
\end{center}
\end{table}

\begin{table}
\begin{center}
\begin{tabular}{c|c|ccc}
 $t_n$ & $\theta_1-\overline\theta_1$ & $n=100$ & $n=500$ & $n=1000$  \\
\hline
\multirow{5}{*}{$n^{-1/5}$}
&  0 &      0     &     0       &    0         \\
& .1 &      0     &     0       &    0         \\
& .2 &     0.0030 &     0.0090  &    0.0110   \\ 
& .3 &     0.0130 &     0.1040  &    0.2020   \\
& .4 &     0.0270 &     0.3860  &    0.7450   \\
& .5 &     0.1120 &     0.7880  &    0.9900   \\
\hline
\multirow{5}{*}{$n^{-1/3}$}
&  0 &     0.0020  &       0  &       0  \\
& .1 &     0.0020  &  0.0060  &  0.0040  \\
& .2 &     0.0070  &  0.0330  &  0.0670  \\
& .3 &     0.0190  &  0.1370  &  0.3710  \\
& .4 &     0.0710  &  0.4920  &  0.8270  \\
& .5 &     0.1370  &  0.8270  &  0.9930  \\
\hline 
\multirow{5}{*}{$n^{-1/2}$}
&  0 &    0.0010 &        0 &   0.0010     \\
& .1 &    0.0010 &   0.0090 &   0.0030      \\
& .2 &    0.0100 &   0.0400 &   0.0400      \\
& .3 &    0.0250 &   0.1640 &   0.2990      \\
& .4 &    0.0760 &   0.4540 &   0.7210      \\
& .5 &    0.1440 &   0.7930 &   0.9730   
\end{tabular}
\caption{Power for Level $\alpha=1-(1-.05)^{1/2}$ Test (corresponds to level $\alpha$ with a nonbinding moment) with
  Critical Values Based on Finite Sample Least Favorable Distribution (Design 2)}
\label{power_table_em_d2}
\end{center}
\end{table}

\begin{table}
\begin{center}
\begin{tabular}{c|c|ccc}
 $t_n$ & $\theta_1-\overline\theta_1$ & $n=100$ & $n=500$ & $n=1000$  \\
\hline
\multirow{5}{*}{$n^{-1/5}$}
&  0 &    0.0060  &  0.0020 &   0.0010   \\
& .1 &    0.0170  &  0.0460 &   0.0680   \\
& .2 &    0.0540  &  0.3750 &   0.5810   \\ 
& .3 &    0.1860  &  0.8260 &   0.9770   \\
& .4 &    0.3810  &  0.9890 &   1.0000   \\
& .5 &    0.6440  &  1.0000 &   1.0000   \\
\hline
\multirow{5}{*}{$n^{-1/3}$}
&  0 &      0.0080 &   0.0010  &  0.0030  \\
& .1 &      0.0210 &   0.0650  &  0.0760  \\
& .2 &      0.0470 &   0.3150  &  0.5620  \\
& .3 &      0.1650 &   0.7520  &  0.9630  \\
& .4 &      0.3430 &   0.9860  &  1.0000  \\
& .5 &      0.5990 &   1.0000  &  1.0000  \\
\hline 
\multirow{5}{*}{$n^{-1/2}$}
&  0 &     0.0090 &   0.0010 &   0.0030    \\
& .1 &     0.0170 &   0.0510 &   0.0510    \\
& .2 &     0.0870 &   0.2480 &   0.4160    \\
& .3 &     0.1590 &   0.6910 &   0.9470    \\
& .4 &     0.2990 &   0.9680 &   1.0000    \\
& .5 &     0.5640 &   0.9990 &   1.0000 
\end{tabular}
\caption{Power for Level $\alpha=1-(1-.05)^{1/2}$ Test (corresponds to level $\alpha$ with a nonbinding moment) with
  Critical Values Based on Finite Sample Least Favorable Distribution (Design 3)}
\label{power_table_em_d3}
\end{center}
\end{table}

\begin{table}
\begin{center}
\begin{tabular}{c|c|ccc}
 Critical Value & $\theta_1-\overline\theta_1$ & $n=100$ & $n=500$ & $n=1000$  \\
\hline
\multirow{5}{*}{PIA}
&  0 &     0.0540  &  0.0680 &   0.0520    \\
& .1 &     0.1680  &  0.3360 &   0.5120    \\
& .2 &     0.3880  &  0.8780 &   0.9980    \\ 
& .3 &     0.6580  &  0.9980 &   1.0000    \\
& .4 &     0.8760  &  1.0000 &   1.0000    \\
& .5 &     0.9760  &  1.0000 &   1.0000    \\
\hline
\multirow{5}{*}{GMS}
&  0 &     0.0540  &  0.0680  &  0.0520    \\
& .1 &     0.1680  &  0.3360  &  0.5120    \\
& .2 &     0.3880  &  0.8780  &  0.9980    \\
& .3 &     0.6580  &  0.9980  &  1.0000    \\
& .4 &     0.8760  &  1.0000  &  1.0000    \\
& .5 &     0.9760  &  1.0000  &  1.0000 
\end{tabular}
\caption{Power for \citet{chetverikov_adaptive_2012} Test (Design 1)}
\label{power_table_ch_d1}
\end{center}
\end{table}

\begin{table}
\begin{center}
\begin{tabular}{c|c|ccc}
 Critical Value & $\theta_1-\overline\theta_1$ & $n=100$ & $n=500$ & $n=1000$  \\
\hline
\multirow{5}{*}{PIA}
&  0 &     0.0020  &  0.0060 &   0.0020   \\
& .1 &     0.0060  &  0.0120 &   0.0220   \\
& .2 &     0.0300  &  0.0680 &   0.0900   \\ 
& .3 &     0.0820  &  0.1880 &   0.4280   \\
& .4 &     0.1420  &  0.5220 &   0.8360   \\
& .5 &     0.2560  &  0.8600 &   0.9940   \\
\hline
\multirow{5}{*}{GMS}
&  0 &     0.0020  &   0.0080  &   0.0040     \\
& .1 &     0.0060  &   0.0140  &   0.0300     \\
& .2 &     0.0300  &   0.0740  &   0.1180     \\
& .3 &     0.0840  &   0.2060  &   0.4680     \\
& .4 &     0.1480  &   0.5400  &   0.8620     \\
& .5 &     0.2580  &   0.8800  &   0.9960  
\end{tabular}
\caption{Power for \citet{chetverikov_adaptive_2012} Test (Design 2)}
\label{power_table_ch_d2}
\end{center}
\end{table}

\begin{table}
\begin{center}
\begin{tabular}{c|c|ccc}
 Critical Value & $\theta_1-\overline\theta_1$ & $n=100$ & $n=500$ & $n=1000$  \\
\hline
\multirow{5}{*}{PIA}
&  0 &     0.0120   &  0.0140  &   0.0100   \\
& .1 &     0.0260   &  0.0500  &   0.1160   \\
& .2 &     0.1080   &  0.3120  &   0.5740   \\ 
& .3 &     0.2660   &  0.7520  &   0.9720   \\
& .4 &     0.4520   &  0.9720  &   1.0000   \\
& .5 &     0.6980   &  0.9980  &   1.0000   \\
\hline
\multirow{5}{*}{GMS}
&  0 &     0.0120  &  0.0180  &  0.0120    \\
& .1 &     0.0260  &  0.0560  &  0.1240    \\
& .2 &     0.1080  &  0.3240  &  0.5900    \\
& .3 &     0.2660  &  0.7560  &  0.9740    \\
& .4 &     0.4520  &  0.9720  &  1.0000    \\
& .5 &     0.6980  &  0.9980  &  1.0000 
\end{tabular}
\caption{Power for \citet{chetverikov_adaptive_2012} Test (Design 3)}
\label{power_table_ch_d3}
\end{center}
\end{table}

\clearpage
\newpage

\begin{table}
\centering
\begin{tabular}{c|c|c}
 & $\theta_1$ & $\theta_2$ \\\hline
Weighted Sup Statistic (this paper) &  $[-30, 109]$
  & $[0.0053, 0.0320]$  \\
Unweighted, Subsampling with Estimated Rate & $[-48, 84]$ & $[0.0113, 0.0342]$  \\
Unweighted, Subsampling with Conservative Rate & $[-60, 138]$ & $[0.0030, 0.0372]$  \\
\hline
\end{tabular}
\caption{95\% Confidence Intervals for Components of $\theta$}
\label{ci_table}
\end{table}

\end{document}